\def\IEEEsubmission{0}
\def\complexNumbers{\mathbb{C}}
\def\realNumbers{\mathbb{R}}
\def\realsNonnegative{\mathbb{R}_{0}^{+}}
\def\realsNonnegative{\mathbb{R}_{0}^{+}}
\def\integers{\mathbb{Z}}
\def\integersPositiveSet{\mathbb{Z}^{+}}
\def\integersNonnegativeSet{\mathbb{Z}_{0}^{+}}
\def\constante{{\rm e}}
\def\constantj{{\rm j}}
\def\speedoflight{c}
\def\lengthGaGb{M}
\def\CPSize{N_{\rm CP}}
\def\fsample{f_{\rm sample}}
\def\Tsample{T_{\rm sample}}
\def\noisePower{N_0}
\def\numberoftargets{R}
\def\symbolDuration{T_{\rm s}}
\def\CPDuration{T_{\rm CP}}
\def\symbolEnergy{E_{\rm s}}
\def\delayVar{\tau}
\def\fcarrier{f_{\rm c}}
\def\separationValueMax{\Delta_{\text{no-loss}}}
\def\numberofTxBits{p}
\def\numofSelectorBits{p_1}
\def\numofModulationBits{p_2}
\def\bandwidthchirp{B}
\def\fcarrier{f_{\rm c}}
\def\numberofIndices{L}
\def\spectralEfficiency{\rho}
\def\EbNO{E_{\text{b}}/\noisePower}
\def\distanceminpsk{d_{\rm{psk}}}
\def\distanceminindex{d_{\rm{ind}}}
\def\probabiltyError{P_{\rm e}}
\def\probabiltyErrorPSK{P_{\numberOfPointsForPSK\text{-PSK}}}
\def\separationValue{\Delta}
\def\numberOfOccupiedSubcarriers{D}
\def\numberOfShifts{M}
\def\lowerFrequency{L_{\rm d}}
\def\upperFrequency{L_{\rm u}}
\def\idftSize{N}
\def\OCB{{M}_{\rm ocb}}
\def\numberOfPointsForPSK{H}
\def\averagePower{P_{\rm av}}
\def\SNRpost{{{SNR}_{\rm post}}}
\def\signalFactor{{{\alpha}_{\rm MMSE}}}
\def\distanceSumVar{Z}
\def\positiveInt{m}
\def\naturalnum{k}
\def\pskSerialSequence{\textit{\textbf{h}}}
\def\indexSequence{\textit{\textbf{i}}}
\def\pskSequence{\textit{\textbf{s}}}
\def\pskSerialSequenceDetect{\hat{\textit{\textbf{h}}}}
\def\indexSequenceDetect{\hat{\textit{\textbf{{i}}}}}
\def\chirpSet{\mathbb{W}}
\def\identityMatrix{{\rm \bf I}}
\def\completeMatrix{{\rm \bf W}}
\def\noiseVector{{\rm \bf n}}
\def\fdssVector{{\rm \bf f}}
\def\channelFVector{{\rm \bf h}}
\def\rxSymbolsVector{{\rm \bf b}}
\def\dataVector{{\rm \bf d}}
\def\DFTmtx[#1]{{{\rm \bf D}_{#1}}}
\def\diagonalMatrixFromVector[#1]{\text{diag}\{ {#1} \}}
\def\delayMtx{{{\rm \bf T}}}
\def\delayMtxSlack{{{\rm \bf \dot{T}}}}
\def\amplitudeVectorSlack{{{\rm \bf \dot{a}}}}
\def\amplitudeVector{{\rm \bf a}}
\def\symbolsVectorinFrequency{{\rm \bf w}}
\def\channelFVectorEst{{{\rm \bf \tilde{h}}}}
\def\identityMatrix{{\rm \bf I}}
\def\seqGa{\textit{\textbf{a}}}
\def\seqGb{\textit{\textbf{b}}}
\def\seqGaP{A}
\def\seqGbP{B}
\def\timeVar{t}
\def\freqVar{f}
\def\polyVariable{z}
\def\indexSubcarrier{k}
\def\indexDataSymbolTX{m}
\def\indexDataSymbolRX{l}
\def\indexSample{n}
\def\lagForCorrelation{l}
\def\indexEleOfSeq{i}
\def\indexPSK{z}
\def\indexChirp{\ell}
\def\indexSeparation{q}
\def\targetindex{s}
\def\chirpm{p}
\def\chirpn{r}
\def\indexIteration{n}
\def\indexRecursion{r}
\def\OFDMinTime[#1][#2]{s_{#1}(#2)}
\def\separationDis[#1]{S_{#1}}
\def\distanceIndex[#1]{{\mathcal D}(#1)}
\def\cardinality[#1][#2]{\mathcal{A}_{#1}(#2)}
\def\distanceIndex[#1]{{\mathcal D}(#1)}
\def\datasymbolestimate[#1][#2]{\hat{d}_{{#1},{#2}}}
\def\eventofeeror[#1][#2][#3][#4]{{\rm E}_{{#1},{#2}|{#3},{#4}}}
\def\CardinalityDis[#1][#2]{\mathcal{B}_{#1}(#2)}
\def\probabiltyunionbound[#1]{U_{#1}}
\def\qfunc[#1]{Q\left(#1\right)}
\def\probabFunction[#1]{P\left(#1\right)}
\def\symbolMultiDim[#1][#2]{{d}_{{#1},{#2}}}
\def\timeArrival[#1]{\tau_{#1}}
\def\channelimpulseresponse[#1]{h\left(#1\right)}
\def\channelimpulseresponseTaps[#1]{h_{#1}}
\def\channelfreqresponse[#1]{\zeta_{#1}}
\def\channelfreqresponseC[#1]{\lambda_{#1}}
\def\distance[#1]{r_{#1}}
\def\distanceEst[#1]{\tilde{r}_{#1}}
\def\reflectioncoefficient[#1]{\alpha_{#1}}
\def\reflectioncoefficientEst[#1]{\tilde{\alpha}_{#1}}
\def\reflectioncoefficientSlack[#1]{{\dot{\alpha}}_{#1}}
\def\diracfunction[#1]{\delta\left(#1\right)}
\def\fresnelC[#1]{C(#1)}
\def\fresnelS[#1]{S(#1)}
\def\linearXone{\alpha_\indexSubcarrier}
\def\linearXtwo{\beta_\indexSubcarrier}
\def\linearCoef{\gamma_\indexSubcarrier}
\def\selectedChirpIndex[#1]{i_{#1}}
\def\selectedChirpIndexDetect[#1]{\hat{i}_{#1}}
\def\selectedChirpIndexDomain[#1]{\dot{i}_{#1}}
\def\symbolPSK[#1]{s_{#1}}
\def\symbolPSKdetect[#1]{\hat{s}_{#1}}
\def\symbolPSKdomain[#1]{\dot{s}_{#1}}
\def\symbolPSKserial[#1]{h_{#1}}
\def\symbolPSKserialdetect[#1]{\hat{h}_{#1}}
\def\symbolPSKserialdomain[#1]{\dot{h}_{#1}}
\def\psksymbol[#1]{d_{#1}}
\def\dataSymbolAfterIDFTspread[#1]{\tilde{d}_{#1}}
\def\chirpphase[#1][#2]{{\psi_{#1}{(#2)}}}
\def\seqx[#1]{\textit{{x}}(#1)}
\def\seqy[#1]{\textit{{y}}(#1)}
\def\transmittedSignalDiscrete[#1]{x_{#1}}
\def\transmittedSignal[#1]{x\left(#1\right)}
\def\receivedSignalDiscrete[#1]{r\left[#1\right]}
\def\receivedSignal[#1]{r\left(#1\right)}
\def\receivedSignalDiscreteFrequency[#1]{b_{#1}}
\def\basisFunction[#1]{B_{#1}(\timeVar)}
\def\amountOfShift[#1]{\tau_{#1}}
\def\dataSymbols[#1]{d_{#1}}
\def\angleSignal[#1]{\psi_{#1}(\timeVar)}
\def\angleSignalTime[#1][#2]{\psi_{#1}({#2})}
\def\instantaneousFrequency[#1]{F_{#1}(t)}
\def\besselFunctionFirstKind[#1][#2]{J_{#1}\left(#2\right)}
\def\fourierSeries[#1]{c_{#1}}
\def\FDSScoef[#1]{f_{#1}}
\def\fourier[#1]{\mathcal{F}\{#1\}}
\def\noiseDiscrete[#1]{\eta_{#1}}
\def\noiseVariance{\sigma^2_{\rm n}}
\def\delayVector[#1]{{{\rm \bf t}_{#1}}}
\def\delayElement[#1]{T_{#1}}
\def\timeArrivalEst[#1]{\tilde{\tau}_{#1}}
\def\timeArrivalSlack[#1]{{\dot{\tau}}_{#1}}
\def\symbolsInFrequency[#1]{w_{#1}}
\def\eleGa[#1]{{a}_{#1}}
\def\eleGb[#1]{{b}_{#1}}
\def\apac[#1][#2]{\rho_{#1}(#2)}
\def\apacPositive[#1][#2]{\rho^{+}_{#1}(#2)}
\def\polySeq[#1][#2]{{#1}(#2)}
\def\anumber{x}
\def\anumberAfterSum[#1]{{n(#1)}}
\def\anumberAfterSumB[#1]{{k(#1)}}
\def\anaturalNumberToSeqeunceA{n}
\def\anaturalNumberToSeqeunceB{k}
\def\functionB[#1]{{f(#1)}}
\def\algorithmEncoderB[#1]{{\epsilon_{\mathcal{B}}(#1)}}
\def\algorithmEncoderA[#1]{{\epsilon_{\mathcal{A}}(#1)}}
\def\algorithmDecoderB[#1]{{\epsilon^{-1}_{\mathcal{B}}(#1)}}
\def\algorithmDecoderA[#1]{{\epsilon^{-1}_{\mathcal{A}}(#1)}}
\newcommand\mydots{\hbox to 1em{.\hss.\hss.}}
\tikzset{%
  remember picture with id/.style={%
    remember picture,
    overlay,
    save picture id=#1,
  },
  save picture id/.code={%
    \edef\pgf@temp{#1}%
    \immediate\write\pgfutil@auxout{%
      \noexpand\savepointas{\pgf@temp}{\pgfpictureid}}%
  },
  if picture id/.code args={#1#2#3}{%
    \@ifundefined{save@pt@#1}{%
      \pgfkeysalso{#3}%
    }{
      \pgfkeysalso{#2}%
    }
  }
}
\def\savepointas#1#2{%
  \expandafter\gdef\csname save@pt@#1\endcsname{#2}%
}
\def\tmk@labeldef#1,#2\@nil{%
  \def\tmk@label{#1}%
  \def\tmk@def{#2}%
}
\newcounter{hatchNumber}
\DeclarePairedDelimiter\floor{\lfloor}{\rfloor}
\newif\ifAC@uppercase@first%
\def\Aclp#1{\AC@uppercase@firsttrue\aclp{#1}\AC@uppercase@firstfalse}%
\def\AC@aclp#1{%
	\ifcsname fn@#1@PL\endcsname%
	\ifAC@uppercase@first%
	\expandafter\expandafter\expandafter\MakeUppercase\csname fn@#1@PL\endcsname%
	\else%
	\csname fn@#1@PL\endcsname%
	\fi%
	\else%
	\AC@acl{#1}s%
	\fi%
}%
\def\Acp#1{\AC@uppercase@firsttrue\acp{#1}\AC@uppercase@firstfalse}%
\def\AC@acp#1{%
	\ifcsname fn@#1@PL\endcsname%
	\ifAC@uppercase@first%
	\expandafter\expandafter\expandafter\MakeUppercase\csname fn@#1@PL\endcsname%
	\else%
	\csname fn@#1@PL\endcsname%
	\fi%
	\else%
	\AC@ac{#1}s%
	\fi%
}%
\def\Acfp#1{\AC@uppercase@firsttrue\acfp{#1}\AC@uppercase@firstfalse}%
\def\AC@acfp#1{%
	\ifcsname fn@#1@PL\endcsname%
	\ifAC@uppercase@first%
	\expandafter\expandafter\expandafter\MakeUppercase\csname fn@#1@PL\endcsname%
	\else%
	\csname fn@#1@PL\endcsname%
	\fi%
	\else%
	\AC@acf{#1}s%
	\fi%
}%
\def\Acsp#1{\AC@uppercase@firsttrue\acsp{#1}\AC@uppercase@firstfalse}%
\def\AC@acsp#1{%
	\ifcsname fn@#1@PL\endcsname%
	\ifAC@uppercase@first%
	\expandafter\expandafter\expandafter\MakeUppercase\csname fn@#1@PL\endcsname%
	\else%
	\csname fn@#1@PL\endcsname%
	\fi%
	\else%
	\AC@acs{#1}s%
	\fi%
}%
\edef\AC@uppercase@write{\string\ifAC@uppercase@first\string\expandafter\string\MakeUppercase\string\fi\space}%
\def\AC@acrodef#1[#2]#3{%
	\@bsphack%
	\protected@write\@auxout{}{%
		\string\newacro{#1}[#2]{\AC@uppercase@write #3}%
	}\@esphack%
}%
\def\Acl#1{\AC@uppercase@firsttrue\acl{#1}\AC@uppercase@firstfalse}
\def\Acf#1{\AC@uppercase@firsttrue\acf{#1}\AC@uppercase@firstfalse}
\def\Ac#1{\AC@uppercase@firsttrue\ac{#1}\AC@uppercase@firstfalse}
\def\Acs#1{\AC@uppercase@firsttrue\acs{#1}\AC@uppercase@firstfalse}
\newtheorem{theorem}{Theorem}
\newtheorem{definition}{Definition}
\newtheorem{lemma}{Lemma}
\newtheorem{corollary}{Corollary}
\newtheorem{example}{\color{black} Example} 
\acrodef{SC}{single-carrier}
\acrodef{UWB}{ultra-wide band }
\acrodef{DQPSK}{Differential Quadrature Phase Shift Keying}
\acrodef{OOB}{out-of-band}
\acrodef{PMEPR}{peak-to-mean envelope power ratio}
\acrodef{SIC}{successive interference cancellation}
\acrodef{PAPR}{peak-to-average-power ratio}
\acrodef{APAC}[AACF]{aperiodic autocorrelation function}
\acrodef{OFDM}{orthogonal frequency division multiplexing}
\acrodef{DFT}{discrete Fourier transform}
\acrodef{DC}{direct current}
\acrodef{CS}{complementary sequence}
\acrodef{GCP}{Golay complementary pair}
\acrodef{ANF}{algebraic normal form}
\acrodef{PSK}{phase-shift keying}
\acrodef{QAM}{quadrature amplitude modulation}
\acrodef{QPSK}{quadrature \ac{PSK}}
\acrodef{GDJ}{Golay-Davis-Jedwab}
\acrodef{FFT}{fast Fourier transform}
\acrodef{BER}{bit error rate}
\acrodef{SNR}{signal-to-noise ratio}
\acrodef{4G}{Fourth Generation}
\acrodef{5G}{Fifth Generation}
\acrodef{NR}{New Radio}
\acrodef{LTE}{Long-Term Evolution}
\acrodef{PTS}{partial transmit sequences}
\acrodef{PSD}{power spectral density}
\acrodef{LDPC}{low-density parity-check}
\acrodef{SE}{spectral efficiency}
\acrodef{eLAA}{enhanced licensed-assisted access}
\acrodef{NR-U}{NR-Unlicensed}
\acrodef{RM}{Reed-Muller}
\acrodef{AE}{autoencoder}
\acrodef{DNN}{deep neural network}
\acrodef{OFDM-AE}{OFDM-based autoencoder}
\acrodef{DL}{deep learning}
\acrodef{CP}{cyclic prefix}
\acrodef{AWGN}{additive white Gaussian noise}
\acrodef{P2C}{polar-to-Cartesian}
\acrodef{CFR}{channel frequency response}
\acrodef{CIR}{channel impulse response}
\acrodef{ReLU}{rectified linear unit}
\acrodef{LMMSE}{linear minimum mean square error}
\acrodef{BPSK}{binary phase-shift keying}
\acrodef{BLER}{block error rate}
\acrodef{ML}{maximum-likelihood}
\acrodef{PHY}{physical layer}
\acrodef{PA}{power amplifier}
\acrodef{IDFT}{inverse \ac{DFT}}
\acrodef{DoF}{degrees-of-freedom}
\acrodef{IoT}{Internet-of-Things}
\acrodef{DFT-s-OFDM}{discrete Fourier transform spread orthogonal frequency division multiplexing}
\acrodef{MMSE}{minimum mean square error}
\acrodef{FDE}{frequency-domain equalization}
\acrodef{FrFT}{fractional Fourier transform}
\acrodef{TF}{time-frequency}
\acrodef{BFSK}{binary frequency-shift keying}
\acrodef{CSS}{chirp-spread spectrum}
\acrodef{BCSS}{binary chirp spread spectrum}
\acrodef{EVA}{Extended Vehicular A}
\acrodef{MIMO}{multi-input multi-output}
\acrodef{PIC}{parallel interference cancellation}
\acrodef{LoRa}{Long Range}
\acrodef{HF}{high-frequency}
\acrodef{FDSS}{frequency-domain spectral shaping}
\acrodef{OCB}{occupied channel bandwidth}
\acrodef{FSK}{frequency-shift keying}
\acrodef{RF}{radio-frequency}
\acrodef{IM}{index modulation}
\acrodef{DFRC}{dual-function radar and communication}
\acrodef{ISI}{inter-symbol interfence}
\acrodef{iid}[i.i.d.]{independent and identically distributed}
\acrodef{CSC-IM}[CSC-IM]{\ac{IM} with \acp{CSC}}
\acrodef{DFT-s-OFDM-IM}{IM with \ac{DFT-s-OFDM}}
\acrodef{OFDM-IM}{IM with OFDM}
\acrodef{BS}{base station}
\acrodef{MF}{matched filter}
\acrodef{CSC}[CSC]{circularly-shifted chirp}
\acrodef{FMCW}{frequency-modulated continuous-wave}
\acrodef{IS}{index separation}
\acrodef{RMSE}{root-mean-square error}
\acrodef{CE}{constant-envelope}
\acrodef{TDRW}{time-diversity radar waveform}
\acrodef{RX}{receiver}
\acrodef{RXr}{radar receiver}
\acrodef{RXc}{communication receiver}
\acrodef{TX}{transmitter}
\acrodef{AoD}{angle-of-departure}
\acrodef{AoA}{angle-of-arrival}
\acrodef{CRLB}{Cramer-Rao lower bound}
\acrodef{FIM}{Fisher information matrix}
\acrodef{OTFS}{orthogonal-time-frequency-space modulation}
\acrodef{UB}{union bound}
\acrodef{ADC}{analog-to-digital converter}
\acrodef{AC}{autocorrelation}
\acrodef{CS-RM}[CSs-RM]{\acp{CS} based on \ac{RM} code}
\def\BibTeX{{\rm B\kern-.05em{\sc i\kern-.025em b}\kern-.08em
    T\kern-.1667em\lower.7ex\hbox{E}\kern-.125emX}}
\begin{document}

\title{
{Index Modulation with Circularly-Shifted Chirps for Dual-Function Radar and Communications
}\\
\thanks{Alphan~\c{S}ahin and Safi Shams Muhtasimul Hoque are with the Electrical  Engineering Department,
	University of South Carolina, Columbia, SC, USA. Chao-Yu Chen is with Department of Engineering Science, National Cheng Kung University, Tainan, Taiwan, R.O.C.  E-mails: asahin@mailbox.sc.edu, shoque@email.sc.edu, super@mail.ncku.edu.tw}
\thanks{This paper was presented in part at the IEEE Global Communications Conference - Workshop on Integrated Sensing and Communications 2020  \cite{Safi_2020_GC} and	IEEE Consumer Communications \& Networking Conference 2021 \cite{Safi_2020_CCNC}.}
\author{Alphan~\c{S}ahin, Safi Shams Muhtasimul Hoque, and Chao-Yu Chen} 
}

\maketitle

\begin{abstract}
	In this study, we propose \ac{CSC-IM} for \ac{DFRC} systems.  The proposed scheme encodes the information bits with the  \ac{CSC} indices and the \ac{PSK} symbols. It allows the receiver to exploit the frequency selectivity naturally in fading channels by combining \ac{IM} and wideband \acp{CSC}. It also leverages the fact that a \ac{CSC} is a constant-envelope signal to achieve a controllable \ac{PMEPR}. For radar functionality,  \ac{CSC-IM}  maintains the good \ac{AC} properties of a chirp by ensuring that  the transmitted \acp{CSC} are separated apart sufficiently in the time domain through \ac{IS}. We investigate the impact of \ac{IS} on \ac{SE} and obtain the corresponding mapping functions. For theoretical results,  we derive the \ac{UB} of the \ac{BLER} for arbitrary chirps and the \acp{CRLB} for the range and reflection coefficients for the \ac{MF}-based estimation. We also prove that \acp{CS} can be constructed through \acp{CSC} by linearly combining the Fourier series of \acp{CSC}. Finally, through comprehensive comparisons, we demonstrate the efficacy of the proposed scheme for \ac{DFRC} scenarios.
\end{abstract}

\acresetall
\begin{IEEEkeywords}
	Chirps, \acp{CS}, \ac{IM}, \ac{PMEPR}
\end{IEEEkeywords}

\acresetall

\section{Introduction}
The merging of communications and radar functionalities into a single wireless network can improve the efficient utilization of the physical resources and address the potential interference issues between radar and communication systems \cite{Paul_2017}. 
To realize such a network, the transmitted signals need to be designed based on two different objectives. For radar functionality, the primary goal is to improve the accuracy of the range and/or velocity estimations. On the other hand, the waveform for communications is optimized by considering communications-related metrics such as error rate or data rate. Hence, developing a transmission scheme suitable for both features is not trivial \cite{Ma_2020spm}. In this study, we address this issue and propose  \ac{CSC-IM} for \ac{DFRC} systems, which can be synthesized with \ac{DFT-s-OFDM} used in 3GPP \ac{5G} \ac{NR} \cite{nr_general_2021} and \ac{4G} \ac{LTE} uplink \cite{lte_general_2018}.

Chirps are prominent for radar applications due to their excellent \ac{PMEPR} and good \ac{AC} properties in time-varying channels. They also facilitate the radar implementation with a simple hardware architecture through correlations in \ac{RF}. They were first proposed in \cite{darlingtonPatent1949} to achieve a long-range high-resolution radar, which was later extended to encode information through the slope of the chirps in the \ac{TF} plane.
For developing more sophisticated multiplexing methods based on chirps,  the bases constructed through chirps have been studied  in the literature, extensively. For example, in \cite{nozh2}, an orthogonal amplitude-variant linear chirp set where each chirp has a different chirp rate was proposed.  In \cite{fresnel_nozh_7523229}, orthogonal chirps were constructed by shifting the chirps in the frequency domain. In \cite{sahin_2020}, \acp{CSC} were proposed by using the structure of \ac{DFT-s-OFDM} with a specific \ac{FDSS} function. It was shown that \acp{CSC} can be transmitted simultaneously in an overlapping manner as opposed to linear chirps.  However, these studies do not particularly consider \ac{DFRC} applications.

Another way of constructing a scheme suitable for \ac{DFRC} is to utilize the multiplexing methods that are highly used in wireless communication systems. In \cite{Paul_2017}, \ac{OFDM}, chirps, and \acp{CS} with \ac{SC} waveform and \ac{OFDM}  \cite{davis_1999,sahin_2020gm} are surveyed for \ac{DFRC} scenarios. It is emphasized that the noise-like nature of \ac{OFDM} signals in time can be beneficial for a typical radar at the expense of  high \ac{PMEPR} and the sidelobe growth in the \ac{AC}  function due to the existence of \ac{CP}. On the other hand, \ac{OFDM} provides an excellent framework for the \ac{CFR}-based estimation methods as the processing directly occurs in the modulation domain. For example, in \cite{Sturm_2009}, several range profiles were demonstrated through \ac{CFR} by using \ac{OFDM} as a radar waveform. In \cite{sturm_2011}, \ac{OFDM} is compared with chirps and other spread-spectrum techniques in detail. In \cite{Braun_2010} and \cite{Bica_2017_radarconf}, \ac{ML}-based range and velocity estimators for a single target are investigated for \ac{OFDM}. 
An iterative algorithm based on filtering and clipping was investigated in \cite{Sharma_2019} to reduce the \ac{PMEPR}  of the multicarrier radar waveform at a cost of the distorted \ac{AC} function.
In \cite{Turlapaty_2014}, the  weighted subcarriers were proposed to mitigate the \ac{PMEPR} for an \ac{OFDM}-based radar. The authors in \cite{Sahan2020} proposed to transmit optimized complex-valued data through the unused subcarriers of \ac{OFDM} for radar functionality.
In \cite{Sen_2011}, a real-valued baseband OFDM signal that  modulates the phase of the carrier, called constant-envelope \ac{OFDM}, was proposed for detecting a target in a multi-path scenario. In \cite{Kumari_2018}, the  IEEE 802.11ad \ac{SC} preamble based on \acp{CS} was exploited for \ac{DFRC} applications.

Recently, \ac{IM} receives attention for \ac{DFRC} scenarios as it promises communications with minimal degradation to the radar performance \cite{Ma_2020spm}. \Ac{IM} is a permutation modulation \cite{Slepian_1965} and encodes the information in the order of discrete objects, e.g., antennas, subcarriers, or time slots. For a comprehensive survey on \ac{IM} and its applications, we refer the reader to the surveys in \cite{Cheng_2018,Ishikawa_2018,Miaowen_2019,Tianqi_2019} and the references therein. For communications applications, in \cite{basar_2013}, \ac{OFDM-IM} is investigated by grouping the subcarriers and activating or deactivating subcarriers within the groups for encoding information.
In \cite{Mesleh_2008}, extra information bits were transmitted by turning on and off the antennas, i.e., spatial modulation. In  \cite{Khandani_2013}, media-based modulation was proposed by setting the on/off status of available RF mirrors. For radar applications, in \cite{BouDaher2016}, the information was encoded by shuffling the radar signals for a \ac{MIMO} radar setup. In \cite{Ma2020}, it was proposed to select a subset of subcarriers and/or antennas. In this study, with the same motivation of minimum degradation to radar, we utilize \ac{IM} with \acp{CSC}.

\subsection{Contributions}
In this study, we provide both theoretical and practical contributions listed as follows:
\begin{itemize}
	\item {\bf{A scheme for \ac{DFRC} systems}:}
	We propose to transmit multiple modulated \acp{CSC} simultaneously for a \ac{DFRC} system, where the information bits are encoded with indices and \ac{PSK} symbols. Since it relies on the structure of \ac{DFT-s-OFDM} with  a special \ac{FDSS} and \ac{IM}, it leads to low-complexity transmitters and receivers. 
	The main advantage of the proposed scheme is controllable low-\ac{PMEPR} and \ac{SE} while still being a wideband signal needed for radar functionality at \ac{RXr} and exploiting frequency selectivity at \ac{RXc}. 
	\item {\bf Theoretical bounds and relationships:} We establish a connection between \acp{CS} and chirps.	We derive the \ac{UB} of \ac{BLER} for the proposed scheme. Also, for radar functionality, we  obtain the \acp{CRLB} for ranges and reflection coefficients, which consider the phase of the reflected signal as a function of the target's range.
	\item {\bf Improved range estimation with Index Separation:}
	We investigate two  range estimation methods: \Ac{MF}-based and \ac{LMMSE}-based estimations. To increase the estimation accuracy, we introduce a concept called {\em \ac{IS}} that ensures a low \ac{AC}  zone. We derive the maximum separation between \acp{CSC} in time without affecting the \ac{SE} for any number of chirps, theoretically. To facilitate the encoder and decoder with the \ac{IS}, we also develop methods that construct a bijective mapping between the information bits and indices.
	\item {\bf Comprehensive comparisons:}
	The proposed scheme is compared with \ac{OFDM-IM}, \ac{DFT-s-OFDM-IM}, and the \ac{CS-RM}, comprehensively, in terms of \ac{PMEPR}, error rate, estimation accuracy, and radar resolution in various scenarios, which provides further insights into \ac{DFRC} waveform.
\end{itemize}
The rest of the paper is organized as follows. In Section~\ref{sec:prelim},  the system model is provided. In Section~\ref{sec:scheme}, we introduce \ac{CSC-IM} and derive the \ac{UB} of \ac{BLER}. The relationship between \acp{CS} and chirps and the trade-off between \ac{SE} and \ac{PMEPR} are  discussed in this section. In \ref{sec:radar}, we analyze the radar functionality with \ac{CSC-IM}. We discuss range estimation and \ac{IS}. In Section \ref{sec:numresults},  numerical results are presented. The paper is concluded in Section \ref{sec:conclusion}.


\section{Preliminaries and System Model}
\label{sec:prelim}

The sets of complex numbers, real numbers, non-negative real numbers, positive integers, non-negative integers, and the set of integers $\{0,1,\mydots,\numberOfPointsForPSK-1\}$ are denoted by $\complexNumbers$, $\realNumbers$, $\realsNonnegative$, $\integersPositiveSet$, $\integersNonnegativeSet$, and $\integers_\numberOfPointsForPSK$ respectively. Conjugation is denoted by $(\cdot)^*$. The notation $ (\eleGa[0],\eleGa[1],\dots, \eleGa[\lengthGaGb-1])$ represents the sequence $\seqGa$. The constant $\constantj$ denotes $\sqrt{-1}$.

\subsection{Circularly-Shifted Chirps}
\label{subsec:csc}
Let $\angleSignal[0]\in\realNumbers$ be a periodic function with the period of  $\symbolDuration$. We then define the function $\basisFunction[\indexDataSymbolTX]$ by setting $\basisFunction[\indexDataSymbolTX]=\constante^{\constantj\angleSignal[\indexDataSymbolTX]}$, where $\angleSignal[\indexDataSymbolTX]=\angleSignalTime[0][{\timeVar-\amountOfShift[\indexDataSymbolTX]}]$ and  $\amountOfShift[\indexDataSymbolTX]$ is the amount of translation for $\indexDataSymbolTX\in\{0,1,\mydots, \numberOfShifts-1\}$ and $\amountOfShift[0]=0$. 
The function  ${\basisFunction[\indexDataSymbolTX]}$ for $\timeVar\in[0,\symbolDuration)$ is a \ac{CSC} and it is equal to the circularly-shifted version of the function ${\basisFunction[0]}$ by $\amountOfShift[\indexDataSymbolTX]$ seconds.

The Fourier series expansion of $\basisFunction[\indexDataSymbolTX]$ can be obtained as
\begin{align}
\basisFunction[\indexDataSymbolTX]  \approx \sum_{\indexSubcarrier=\lowerFrequency}^{\upperFrequency} \fourierSeries[\indexSubcarrier] \constante^{\constantj2\pi\indexSubcarrier\frac{\timeVar-\amountOfShift[\indexDataSymbolTX]}{\symbolDuration}}~,
\label{eq:basisDecompose}
\end{align}
where $\lowerFrequency<0$ and $\upperFrequency>0$ are integers, and $\fourierSeries[\indexSubcarrier]$ is the $\indexSubcarrier$th Fourier coefficient given by
\begin{align}
\fourierSeries[\indexSubcarrier] = \fourier[{\constante^{\constantj\angleSignal[0]}}] \triangleq \frac{1}{\symbolDuration}\int_{\symbolDuration}
\constante^{\constantj\angleSignal[0]}
\constante^{-\constantj2\pi\indexSubcarrier\frac{\timeVar}{\symbolDuration}}d\timeVar~.
\end{align}

Let ${\numberOfOccupiedSubcarriers}/{2\symbolDuration}$ denote the maximum frequency deviation around the carrier frequency. For an accurate approximation of the right-hand side of \eqref{eq:basisDecompose} to $\basisFunction[\indexDataSymbolTX]$, we assume that $\lowerFrequency<-\numberOfOccupiedSubcarriers/2$ and $\upperFrequency>\numberOfOccupiedSubcarriers/2$. This is due to the fact that $\basisFunction[\indexDataSymbolTX]$ is a function where  $|\fourierSeries[\indexSubcarrier]|$  approaches zero rapidly for $|\indexSubcarrier|>{\numberOfOccupiedSubcarriers}/{2}$. Note that the actual bandwidth of a chirp  is slightly larger than twice the maximum frequency deviation \cite{proakisfundamentals}. It can be calculated based on the total integrated power of the transmitted spectrum, i.e., \ac{OCB}. In this study, we express the \ac{OCB} as $\OCB/\symbolDuration$ Hz, where $\OCB$ is assumed to be a positive integer. Also, the
instantaneous frequency of $\basisFunction[\indexDataSymbolTX]$ around the carrier frequency $\fcarrier$ can  be obtained as $\instantaneousFrequency[\indexDataSymbolTX]=\frac{1}{2\pi}d{\angleSignal[\indexDataSymbolTX]}/{d\timeVar}$ Hz.

In \cite{sahin_2020}, \acp{CSC} are utilized for data transmission by using \ac{DFT-s-OFDM} as follows: Consider a baseband signal given by
$
\transmittedSignal[\timeVar] = \frac{1}{\sqrt{\numberOfShifts}} \sum_{\indexDataSymbolTX=0}^{\numberOfShifts-1} \dataSymbols[\indexDataSymbolTX]\basisFunction[{\indexDataSymbolTX}]
$, 
where  $\dataSymbols[\indexDataSymbolTX]$ denotes the $\indexDataSymbolTX$th modulation symbol. If $\amountOfShift[\indexDataSymbolTX]$ is chosen as $\amountOfShift[\indexDataSymbolTX]=\indexDataSymbolTX/\numberOfShifts\times\symbolDuration$, i.e., uniform spacing in time, by sampling $\transmittedSignal[\timeVar]$ at  $\indexSample/\idftSize\times\symbolDuration$ for $\indexSample\in\{0,1,\mydots\idftSize-1\}$ and using \eqref{eq:basisDecompose}, the baseband signal $\transmittedSignal[\timeVar]$ in discrete time can be written  as
\begin{align}
\transmittedSignalDiscrete[\indexSample]=&\underbrace{\sum_{\indexSubcarrier=\lowerFrequency}^{\upperFrequency}\underbrace{\fourierSeries[\indexSubcarrier]\underbrace{\frac{1}{\sqrt{\numberOfShifts}}\sum_{\indexDataSymbolTX=0}^{\numberOfShifts-1} \dataSymbols[\indexDataSymbolTX] 
			\constante^{-\constantj2\pi \indexSubcarrier \frac{\indexDataSymbolTX}{\numberOfShifts}}}_{ \text{Normalized }\numberOfShifts\text{-point DFT}}}_{\text{Frequency-domain spectral shaping}}
	\constante^{\constantj2\pi \indexSubcarrier \frac{\indexSample}{\idftSize}}}_{\idftSize\text{-point IDFT with zero-padding} }~,
\label{eq:chirpWave}
\end{align}
where $\idftSize>\numberOfShifts=\upperFrequency-\lowerFrequency+1>\numberOfOccupiedSubcarriers$. Hence, by prepending a \ac{CP} with the duration of $\CPDuration$ to the signal in \eqref{eq:chirpWave}, a typical \ac{DFT-s-OFDM} transmitter or receiver can  be utilized for synthesizing modulated \acp{CSC} by only introducing a special \ac{FDSS} filter, i.e., $\{\fourierSeries[\indexSubcarrier]|\indexSubcarrier=\lowerFrequency,\mydots,\upperFrequency\}$. 

In \cite{sahin_2020}, several closed-form expressions for $\fourierSeries[\indexSubcarrier]$ are also provided. For example, let $\instantaneousFrequency[0]$ be a function changing from $-\frac{\numberOfOccupiedSubcarriers}{2\symbolDuration}$ Hz to $\frac{\numberOfOccupiedSubcarriers}{2\symbolDuration}$ Hz, i.e., $\instantaneousFrequency[0]=\frac{\numberOfOccupiedSubcarriers}{2\symbolDuration}\left(\frac{2\timeVar}{\symbolDuration}-1\right)$, i.e., a linear chirp. The $\indexSubcarrier$th Fourier coefficient for the linear chirp can be calculated as 
\begin{align}
\fourierSeries[\indexSubcarrier] = \linearCoef(\fresnelC[{\linearXone}] + \fresnelC[{\linearXtwo}] + \constantj\fresnelS[{\linearXone}] + \constantj\fresnelS[{\linearXtwo}])~,
\label{eq:fresnekfcn}
\end{align}
where $\fresnelC[{\cdot}]$ and  $\fresnelS[{\cdot}]$ are the Fresnel integrals with cosine and sine functions, respectively, and $\linearXone=(\numberOfOccupiedSubcarriers/2+2\pi\indexSubcarrier)/\sqrt{\pi\numberOfOccupiedSubcarriers }$, $\linearXtwo=(\numberOfOccupiedSubcarriers/2-2\pi\indexSubcarrier)/\sqrt{\pi\numberOfOccupiedSubcarriers}$, $\linearCoef= \sqrt{\frac{\pi}{\numberOfOccupiedSubcarriers}}\constante^{-\constantj\frac{(2\pi\indexSubcarrier)^2}{2\numberOfOccupiedSubcarriers}-\constantj\pi \indexSubcarrier}$. For sinusoidal chirps, $\instantaneousFrequency[0]=\frac{\numberOfOccupiedSubcarriers}{2\symbolDuration}\cos\left({2\pi \frac{\timeVar}{\symbolDuration}}\right)$ and it can be shown that
\begin{align}
    \fourierSeries[\indexSubcarrier] = \besselFunctionFirstKind[\indexSubcarrier][\frac{\numberOfOccupiedSubcarriers}{2}],
    \label{eq:besselfcn}
\end{align}
where $\besselFunctionFirstKind[\indexSubcarrier][\cdot]$ is the Bessel function of the first kind of order $\indexSubcarrier$.

\subsection{Complementary Sequences}
\label{subsec:complementarySequence}
A sequence pair  $(\seqGa,\seqGb)$ of length  $\lengthGaGb$ is a \ac{GCP} if the \acp{APAC} of the sequences $\seqGa$ and $\seqGb$ sum to zero for all non-zero lags \cite{Golay_1961}, i.e., $\apac[\seqGa][\lagForCorrelation]+\apac[\seqGb][\lagForCorrelation] = 0$ for $\lagForCorrelation~\neq0$, where $\apac[\seqGa][\lagForCorrelation]$ and $\apac[\seqGb][\lagForCorrelation]$ are the \acp{APAC} of the sequences $\seqGa$ and $\seqGb$ at the $\lagForCorrelation$th lag, respectively. 
 The \ac{APAC} of the sequence $\seqGa$ can be calculated as
 \begin{align}
 	\apac[\seqGa][\lagForCorrelation]\triangleq
 	\begin{cases}
 		\sum_{\indexEleOfSeq=0}^{\lengthGaGb-\lagForCorrelation-1} \eleGa[\indexEleOfSeq]^*\eleGa[\indexEleOfSeq+\lagForCorrelation], & 0\le\lagForCorrelation\le\lengthGaGb-1\\
 		\sum_{\indexEleOfSeq=0}^{\lengthGaGb+\lagForCorrelation-1} \eleGa[\indexEleOfSeq]\eleGa[\indexEleOfSeq-\lagForCorrelation]^*, & -\lengthGaGb+1\le\lagForCorrelation<0\\
 		0,& \text{otherwise}
 	\end{cases}~.
 \end{align}
Each sequence in a \ac{GCP} is called  a \ac{CS}. A GCP $(\seqGa,\seqGb)$ can equivalently be defined as any sequence pair satisfying
${|\polySeq[\seqGaP][{\polyVariable}]|^2+|\polySeq[\seqGbP][{\polyVariable}]|^2} ={\apac[\seqGa][0]+\apac[\seqGb][0]}$, where $\polySeq[\seqGaP][\polyVariable] \triangleq \eleGa[\numberOfShifts-1]\polyVariable^{\numberOfShifts-1} + \eleGa[\numberOfShifts-2]\polyVariable^{\numberOfShifts-2}+ \dots + \eleGa[0]$ and $\polySeq[\seqGbP][\polyVariable] \triangleq \eleGb[\numberOfShifts-1]\polyVariable^{\numberOfShifts-1} + \eleGb[\numberOfShifts-2]\polyVariable^{\numberOfShifts-2}+ \dots + \eleGb[0]$   in indeterminate $\polyVariable$ \cite{parker_2003}. 

Let $\OFDMinTime[\seqGa][\timeVar]=\sum_{\indexEleOfSeq=0}^{\lengthGaGb-1}\eleGa[{\indexEleOfSeq}]\constante^{\constantj2\pi\indexEleOfSeq\frac{\timeVar}{\symbolDuration}}$ for $\timeVar\in[0,\symbolDuration)$ be a  continuous-time baseband \ac{OFDM} symbol generated from a sequence $\seqGa$  with the symbol duration $\symbolDuration$.
It can be shown that the instantaneous peak power of $\OFDMinTime[\seqGa][\timeVar]$ is bounded, i.e., 
$\max_{\timeVar}|\OFDMinTime[\seqGa][\timeVar]|^2 \le \apac[\seqGa][0]+\apac[\seqGb][0]$, if  $\seqGa$ is a \ac{CS}. For this case, the \ac{PMEPR} of the \ac{OFDM} symbol $\OFDMinTime[\seqGa][\timeVar]$, defined as $\max_{\timeVar}|\OFDMinTime[\seqGa][\timeVar]|^2/\averagePower$, is  less than or equal to $10\log_{10}(2)\approx3$~dB if $\averagePower=\apac[\seqGa][0]=\apac[\seqGb][0]$ \cite{paterson_2000}. For non-unimodular \acp{CS}, $\apac[\seqGa][0]$ may not be equal to $\apac[\seqGb][0]$. In that case, the power of an \ac{OFDM} symbol with $\seqGa$ can be different from the one  with $\seqGb$ although the instantaneous peak power is still less than  or equal to $\apac[\seqGa][0]+\apac[\seqGb][0]$. Hence, to avoid misleading results, we define $\averagePower$ as the power of the entire baseband signal in this study.

\subsection{DFRC Scenario}
\begin{figure}[t]
	\centering
	{\includegraphics[width =3.2in]{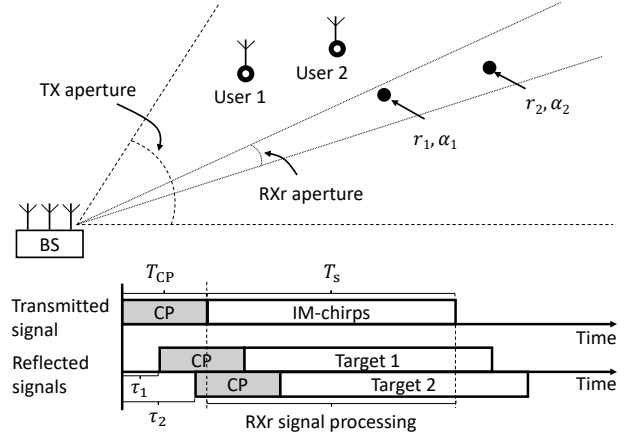}
	} 
	\caption{DFRC scenario and the corresponding timing diagram for the transmitted signal and the radar return for two targets.}
	\label{fig:timingdiagram}
\if\IEEEsubmission1
\vspace{-3mm}
\fi
\end{figure}
In this study, we consider  a \ac{DFRC} scenario where a \ac{BS} broadcasts a communication signal to the users and exploits the same signal to estimate the distances of the surrounding objects as illustrated in \figurename~\ref{fig:timingdiagram}. We assume that a directional antenna at the \ac{RXr} sweeps along azimuth and elevation angles within the aperture of the \ac{TX} to identify the orientation of the targets.

We consider a low-velocity environment, e.g., indoors, where the coherence time is much larger than $\symbolDuration$. Hence, the \ac{CIR} within $\symbolDuration$ can be assumed to be time-invariant and expressed as
$\channelimpulseresponse[\delayVar]=\sum_{\targetindex=1}^{\numberoftargets} \reflectioncoefficient[\targetindex]\diracfunction[{\delayVar-\timeArrival[\targetindex]}]$,
where $\numberoftargets$ is the number of paths, and $\reflectioncoefficient[\targetindex]\in\realNumbers$ and $\timeArrival[\targetindex]\in\realsNonnegative$ are the  gain and the delay of the path \ac{TX}-to-$\targetindex$th target-to-\ac{RXr}, respectively. The path delays can be calculated as $\timeArrival[\targetindex]=2\distance[\targetindex]/\speedoflight$, where $\distance[\targetindex]\in\realNumbers$ is the distance between the $\targetindex$th target and the \ac{BS} and $\speedoflight$ is the speed of light. The \ac{BS}'s goal is to estimate $\distance[\targetindex]$ for $\targetindex\in\{1,\mydots,\numberoftargets\}$ while using the same signal for broadcasting information.

We assume that the \ac{TX} and \ac{RXr} at the \ac{BS} are synchronized in time and there is an ideal phase/frequency synchronization between the \ac{TX} and \ac{RXr} carriers (e.g., fed through the same oscillator). For an \ac{OFDM}-based waveform,  the \ac{CFR} on the $\indexSubcarrier$th subcarrier  can be calculated as
\begin{align} 
	\channelfreqresponse[\indexSubcarrier] &=  \int \channelimpulseresponse[\delayVar]\constante^{-\constantj2\pi\freqVar \delayVar}d\delayVar\Bigr\rvert_{\freqVar=\fcarrier+\frac{\indexSubcarrier}{\symbolDuration}}= \sum_{\targetindex=1}^{\numberoftargets}\reflectioncoefficient[\targetindex]\constante^{-\constantj2\pi\fcarrier\timeArrival[\targetindex]}
	\constante^{-\constantj2\pi\indexSubcarrier\frac{\timeArrival[\targetindex]}{\symbolDuration}},
	\label{eq:radarChannel}
\end{align}
where we assume that  $\timeArrival[\numberoftargets]\le\CPDuration$  and the \ac{RXr} processes the  radar return for $\timeVar\in[\CPDuration,\symbolDuration+\CPDuration)$ so that the path delays include the propagation delays. Therefore, the maximum range of the radar for this specific implementation is  $\speedoflight\times \CPDuration/2$ meters. 
Note that it is possible to increase the maximum range if the \ac{RXr}'s synchronization point is intentionally delayed and chosen based on energy detection or prior information related to the environment, at the cost of extra processing. It is also worth noting that the phase term in \eqref{eq:radarChannel} is a function of target's location, which plays a major role for increasing the accuracy of range estimation as discussed in Section~\ref{subsec:CRLB}.

%
%

\section{Index Modulation with Circularly-Shifted Chirps}
\label{sec:scheme}
\begin{figure*}
	\centering
	{\includegraphics[width =6.7in]{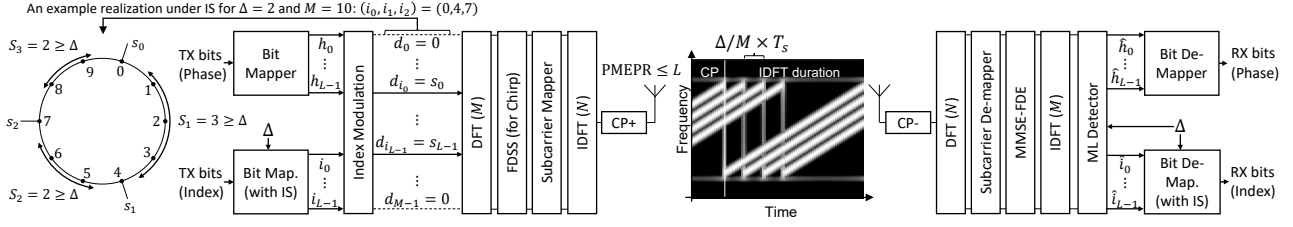}
	} 
\if\IEEEsubmission1
\vspace{-3mm}
\fi
	\caption{Transmitter and receiver for the proposed scheme and an example realization under IS for $\separationValue=2$ and $\numberOfShifts=10$.} 
	\label{fig:txrxa}
\if\IEEEsubmission1
\vspace{-3mm}
\fi
\end{figure*}
At the transmitter,  $\numberofTxBits$  information bits are first split into two groups: $\numofSelectorBits$  bits for choosing $\numberofIndices$ different \acp{CSC} from the set  $\chirpSet=\{\basisFunction[\indexDataSymbolTX]|\indexDataSymbolTX=0,1,\mydots,\numberOfShifts-1\}$  and $\numofModulationBits$ bits for $\numberofIndices$ different $\numberOfPointsForPSK$-\ac{PSK} symbols that are multiplied with the chosen \acp{CSC}.  
Let $\indexSequence=(\selectedChirpIndex[0],\selectedChirpIndex[1],\mydots,\selectedChirpIndex[\numberofIndices-1])$ for $0\le\selectedChirpIndex[\chirpm]<\selectedChirpIndex[\chirpn]<\numberOfShifts$ and $\chirpm<\chirpn$ and $\pskSerialSequence=(\symbolPSKserial[0],\symbolPSKserial[1],\mydots,\symbolPSKserial[\numberofIndices-1])$ for   $\symbolPSKserial[\indexChirp]\in\integers_\numberOfPointsForPSK$ denote the indices of the chosen \acp{CSC} and the integers to be mapped to $\numberOfPointsForPSK$-\ac{PSK} symbols, respectively.  Without loss of generality, the $\indexChirp$th $\numberOfPointsForPSK$-\ac{PSK} symbol can be calculated as $\symbolPSK[\indexChirp]=\constante^{\constantj2\pi\symbolPSKserial[\indexChirp]/\numberOfPointsForPSK}$  for $\indexChirp\in\{0,1,\mydots,\numberofIndices-1\}$. The baseband signal for the \ac{CSC-IM} 
can be written as
	$\transmittedSignal[\timeVar] = \frac{1}{\sqrt{\numberofIndices}} \sum_{\indexChirp=0}^{\numberofIndices-1} \symbolPSK[\indexChirp]\basisFunction[{\selectedChirpIndex[\indexChirp]}]$.
Therefore, by using \eqref{eq:chirpWave}, the discrete-time signal can be expressed  as
 \begin{align}
\transmittedSignalDiscrete[\indexSample]=&\sum_{\indexSubcarrier=\lowerFrequency}^{\upperFrequency}{\FDSScoef[\indexSubcarrier]{\frac{1}{\sqrt{\numberOfShifts}}\sum_{\indexDataSymbolTX=0}^{\numberOfShifts-1} \dataSymbols[\indexDataSymbolTX] 
			\constante^{-\constantj2\pi \indexSubcarrier \frac{\indexDataSymbolTX}{\numberOfShifts}}}}
	\constante^{\constantj2\pi \indexSubcarrier \frac{\indexSample}{\idftSize}}~,
\label{eq:chirpWaveIM}
\end{align}
where  $\FDSScoef[\indexSubcarrier]=\sqrt{\numberOfShifts}\fourierSeries[\indexSubcarrier]/ \sqrt{\sum_{\indexSubcarrier=\lowerFrequency}^{\upperFrequency}{|\fourierSeries[\indexSubcarrier]|^2}}$ is the $\indexSubcarrier$th normalized \ac{FDSS} coefficient\footnote{The normalization is applied as the coefficients are truncated in \eqref{eq:basisDecompose}.} and $\dataSymbols[{\selectedChirpIndex[\indexChirp]}]=\sqrt{\symbolEnergy}\times\symbolPSK[\indexChirp]$ for $\indexChirp=0,1,\mydots,\numberofIndices-1$, otherwise 0, and $\symbolEnergy=\numberOfShifts/\numberofIndices$ is the symbol energy.


In \figurename~\ref{fig:txrxa}, the transmitter and receiver block diagrams are given  for   \ac{CSC-IM}. First, the modulation symbols are obtained based on $\indexSequence$ and $\pskSerialSequence$. An $\numberOfShifts$-point DFT of the modulation symbols is then calculated. After the resulting sequence is shaped in the frequency domain with an \ac{FDSS} for generating \acp{CSC}, the shaped sequence is mapped to the \ac{OFDM} subcarriers. The
discrete-time signal in \eqref{eq:chirpWaveIM} is obtained by applying an $\idftSize$-point \ac{IDFT} to the mapped sequence and prepending a \ac{CP}  with the duration of $\CPDuration=\CPSize\Tsample$ to the signal, where $\CPSize$ is the number of samples in the \ac{CP} duration. 
At the \ac{RXc}, after removing the \ac{CP} and applying an $\idftSize$-point DFT to the received signal, the signal in the frequency domain  can be expressed as
\begin{align}
	\receivedSignalDiscreteFrequency[\indexSubcarrier] =\frac{\channelfreqresponseC[\indexSubcarrier]\FDSScoef[\indexSubcarrier]}{\sqrt{\numberOfShifts}} {{{\sum_{\indexDataSymbolTX=0}^{\numberOfShifts-1} \dataSymbols[\indexDataSymbolTX] 
				\constante^{-\constantj2\pi \indexSubcarrier \frac{\indexDataSymbolTX}{\numberOfShifts}}}}
		}+\noiseDiscrete[\indexSubcarrier]~,
	\label{eq:rxFreqSymbols}
\end{align}
where $\noiseDiscrete[\indexSubcarrier]$ is zero-mean \ac{AWGN} with the variance $\noiseVariance$ and $\channelfreqresponseC[\indexSubcarrier]$ is the \ac{CFR} between \ac{TX} and \ac{RXc} for the $\indexSubcarrier$th subcarrier for $\indexSubcarrier\in\{\lowerFrequency,\lowerFrequency+1,\mydots,\upperFrequency\}$. In this study, we consider  \ac{LMMSE} \ac{FDE} followed by an $\numberOfShifts$-point \ac{IDFT} to obtain the modulation symbols,  which can be given by
\begin{align}
	\dataSymbolAfterIDFTspread[\indexDataSymbolRX]=&\frac{1}{\sqrt{\numberOfShifts}} {{\sum_{\indexSubcarrier=\lowerFrequency}^{\upperFrequency}\underbrace{\frac{\channelfreqresponseC[\indexSubcarrier]^*\FDSScoef[\indexSubcarrier]^*}{|\channelfreqresponseC[\indexSubcarrier]\FDSScoef[\indexSubcarrier]|^2+\noiseVariance} \receivedSignalDiscreteFrequency[\indexSubcarrier]}_{\text{LMMSE-FDE}}\constante^{\constantj2\pi (\indexSubcarrier-\lowerFrequency) \frac{\indexDataSymbolRX}{\numberOfShifts}}}}~,
	\label{eq:mmse}
\end{align}
for $\indexDataSymbolRX\in\{1,2,\mydots,\numberOfShifts-1\}$. Note that \eqref{eq:mmse} explicitly shows that an equalizer needs to be employed  even in \ac{AWGN} channel (i.e., $\channelfreqresponseC[\indexSubcarrier]=1$) for \ac{CSC-IM} due to the \ac{FDSS} coefficients for \acp{CSC}. 

Without any constraint on the indices, the \ac{ML} detector for $\indexSequence$ and  $\pskSerialSequence$ can be expressed as
\begin{align}
    \{ \indexSequenceDetect, \pskSerialSequenceDetect\} = \arg\max_{\substack{\selectedChirpIndexDomain[\indexChirp]\in\{0,\mydots,\numberOfShifts-1\}\\ 		 	    
    		\selectedChirpIndexDomain[\chirpm]<\selectedChirpIndexDomain[\chirpn]  \textrm{ for }  \chirpm<\chirpn 
		\\ \symbolPSKserialdomain[\indexChirp]\in\integers_\numberOfPointsForPSK }
    		} \Re\left\{ \sum_{\indexChirp=0}^{\numberofIndices -1}{\dataSymbolAfterIDFTspread[{\selectedChirpIndexDomain[\indexChirp]}]      
    	 \constante^{-\constantj2\pi\symbolPSKserialdomain[\indexChirp]/\numberOfPointsForPSK}}\right\}~,
    \label{eq:mldet}
\end{align}
where $\indexSequenceDetect\triangleq(\selectedChirpIndexDetect[0],\selectedChirpIndexDetect[1],\mydots,\selectedChirpIndexDetect[\numberofIndices-1])$ and $\pskSerialSequenceDetect\triangleq(\symbolPSKserialdetect[0],\symbolPSKserialdetect[1],\mydots,\symbolPSKserialdetect[\numberofIndices-1])$  are the detected chirp  and the \ac{PSK} symbol indices, respectively. Thus, a low-complexity \ac{ML} detector can be implemented by evaluating each  $\datasymbolestimate[\indexDataSymbolRX][\indexPSK]\triangleq\Re\{ \dataSymbolAfterIDFTspread[\indexDataSymbolRX]\constante^{-\constantj2\pi\indexPSK/\numberOfPointsForPSK}\}$ for $\indexDataSymbolRX\in\{0,1,\mydots,\numberOfShifts-1\}$  and $\indexPSK\in\integers_\numberOfPointsForPSK$ and choosing $\numberofIndices$ indices and the corresponding $\indexPSK$ values that maximize $\datasymbolestimate[\indexDataSymbolRX][\indexPSK]$  \cite{Gao_2018}. 

Let $\separationDis[\indexSeparation]\in\integersNonnegativeSet$ denote the number of integers between two adjacent indices in a circular manner for $\indexSeparation\in\{1,\mydots,\numberofIndices\}$ as
\begin{align}
	\separationDis[\indexSeparation]\triangleq\begin{cases}
			\selectedChirpIndex[{\indexSeparation}]-\selectedChirpIndex[{\indexSeparation-1}]-1, & 1\le\indexSeparation<\numberofIndices\\\numberOfShifts-1-\selectedChirpIndex[{\numberofIndices-1}]+\selectedChirpIndex[{0}], & \indexSeparation=\numberofIndices 
	\end{cases}.
	\label{eq:sepDef}
\end{align}
With the \ac{IS},  we introduce constraints on the indices  such that  $\separationDis[\indexSeparation]\ge\separationValue$ for $\indexSeparation\in\{1,\mydots,\numberofIndices\}$ and $\separationValue\in\integersNonnegativeSet$  to improve the range estimation accuracy by ensuring a low \ac{AC} zone in the case of simultaneous \acp{CSC} transmissions. A realization under \ac{IS} for $\separationValue=2$ and $\numberOfShifts=10$ is given in \figurename~\ref{fig:txrxa}, where $(\selectedChirpIndex[{0}],\selectedChirpIndex[{1}],\selectedChirpIndex[{2}])=(0,4,7)$ and $(\separationDis[{1}],\separationDis[{2}],\separationDis[{3}])=(3,2,2)$. We discuss the \ac{IS} and its impact on \ac{TX} and \ac{RXc} in Section~\ref{subsec:IS} in detail.



\subsection{Theoretical error-rate analysis}
\label{subsubsec:union} 
Consider the case where  the \ac{FDSS} is not utilized, i.e., \ac{DFT-s-OFDM-IM}. Let ${\eventofeeror[\indexDataSymbolRX][\indexPSK][\indexSequence][\pskSerialSequence] }$ denote the event where $\datasymbolestimate[\indexDataSymbolRX][\indexPSK]$ is  larger than or equal to  at least one of the elements of $\{\datasymbolestimate[\selectedChirpIndex[\indexChirp]][{\symbolPSKserial[\indexChirp]}]\}$. Hence, based on De Morgan’s law, $\probabFunction[{\eventofeeror[\indexDataSymbolRX][\indexPSK][\indexSequence][\pskSerialSequence] }]$ can be obtained as 
\if \IEEEsubmission 0
\begin{align}
    \probabFunction[{\eventofeeror[\indexDataSymbolRX][\indexPSK][\indexSequence][\pskSerialSequence] }]&=\probabFunction[\bigcup_{\indexChirp=0}^{\numberofIndices -1}{\eventofeeror[\indexDataSymbolRX][\indexPSK][\selectedChirpIndex[\indexChirp]][{\symbolPSKserial[\indexChirp]}]}] =1-\probabFunction[\bigcap_{\indexChirp=0}^{\numberofIndices -1}{\eventofeeror[\indexDataSymbolRX][\indexPSK][\selectedChirpIndex[\indexChirp]][{\symbolPSKserial[\indexChirp]}]^{\rm c}}]\nonumber\\
    &=1-{\prod_{\indexChirp=0}^{\numberofIndices-1}\left(1-\probabFunction[{\eventofeeror[\indexDataSymbolRX][\indexPSK][\selectedChirpIndex[\indexChirp]][{\symbolPSKserial[\indexChirp]}]}]\right)}~.
    \label{eq:event}
\end{align}
\else
\begin{align}
	\probabFunction[{\eventofeeror[\indexDataSymbolRX][\indexPSK][\indexSequence][\pskSerialSequence] }]&=\probabFunction[\bigcup_{\indexChirp=0}^{\numberofIndices -1}{\eventofeeror[\indexDataSymbolRX][\indexPSK][\selectedChirpIndex[\indexChirp]][{\symbolPSKserial[\indexChirp]}]}] =1-\probabFunction[\bigcap_{\indexChirp=0}^{\numberofIndices -1}{\eventofeeror[\indexDataSymbolRX][\indexPSK][\selectedChirpIndex[\indexChirp]][{\symbolPSKserial[\indexChirp]}]^{\rm c}}]=1-{\prod_{\indexChirp=0}^{\numberofIndices-1}\left(1-\probabFunction[{\eventofeeror[\indexDataSymbolRX][\indexPSK][\selectedChirpIndex[\indexChirp]][{\symbolPSKserial[\indexChirp]}]}]\right)}~.
	\label{eq:event}
\end{align}
\fi
A block error occurs 
if $\indexDataSymbolRX$ is not an element of $\indexSequence$ or $\indexPSK\neq\symbolPSKserial[\indexChirp]$ for $\selectedChirpIndex[\indexChirp]=\indexDataSymbolRX$. 
The probability of block error can then be expressed as
\begin{align}
  \probabiltyError
  &=\probabFunction[{\underbrace{\bigcup_{\indexDataSymbolRX=0}^{\numberOfShifts-1} \bigcup_{\indexPSK=0}^{\numberOfPointsForPSK-1}}_{\substack{(\indexDataSymbolRX,\indexPSK) \neq (\selectedChirpIndex[\indexChirp],\symbolPSKserial[\indexChirp])\\ \indexChirp\in\{{0,\mydots {\numberofIndices-1}\}}}}{\eventofeeror[\indexDataSymbolRX][\indexPSK][\indexSequence][\pskSerialSequence] }}]\leq \underbrace{\sum_{\indexDataSymbolRX=0}^{\numberOfShifts-1} \sum_{\indexPSK=0}^{\numberOfPointsForPSK-1}}_{\substack{(\indexDataSymbolRX,\indexPSK) \neq (\selectedChirpIndex[\indexChirp],\symbolPSKserial[\indexChirp])\\ \indexChirp\in\{{0,\mydots {\numberofIndices-1}\}}}}\probabFunction[{\eventofeeror[\indexDataSymbolRX][\indexPSK][\indexSequence][\pskSerialSequence] }]~.
  \label{eq:unionboundPre}
  \end{align}
 By using \eqref{eq:event}, \eqref{eq:unionboundPre} can be given by
 \if \IEEEsubmission 0
	  \begin{align}
	    \probabiltyError&\leq \underbrace{\sum_{\indexDataSymbolRX=0}^{\numberOfShifts-1} \sum_{\indexPSK=0}^{\numberOfPointsForPSK-1}}_{\substack{(\indexDataSymbolRX,\indexPSK) \neq (\selectedChirpIndex[\indexChirp],\symbolPSKserial[\indexChirp])\\ \indexChirp\in\{{0,\mydots {\numberofIndices-1}\}}}}{\left(1-{\prod_{\indexChirp=0}^{\numberofIndices-1}\left(1-\probabFunction[{\eventofeeror[\indexDataSymbolRX][\indexPSK][\selectedChirpIndex[\indexChirp]][{\symbolPSKserial[\indexChirp]}]}]\right)}\right)}\nonumber\\
	    &=\underbrace{\sum_{\indexDataSymbolRX=0}^{\numberOfShifts-1} \sum_{\indexPSK=0}^{\numberOfPointsForPSK-1}}_{\indexDataSymbolRX\notin\indexSequence}{\left(1-{\prod_{\indexChirp=0}^{\numberofIndices-1}\left(1-\probabFunction[{\eventofeeror[\indexDataSymbolRX][\indexPSK][\selectedChirpIndex[\indexChirp]][{\symbolPSKserial[\indexChirp]}]}]\right)}\right)}
	   	\nonumber\\ &
	    +\sum_{\indexChirp=0}^{\numberofIndices -1} \underbrace{\sum_{\indexPSK=0}^{\numberOfPointsForPSK-1}}_{\indexPSK\neq\symbolPSKserial[\indexChirp]}{\left(1-{\prod_{\indexChirp=0}^{\numberofIndices-1}\left(1-\probabFunction[{\eventofeeror[\indexDataSymbolRX][\indexPSK][\selectedChirpIndex[\indexChirp]][{\symbolPSKserial[\indexChirp]}]}]\right)}\right)}~.
	    \label{eq:unionbound}
	\end{align}
\else
	  \begin{align}
	\probabiltyError&\leq \underbrace{\sum_{\indexDataSymbolRX=0}^{\numberOfShifts-1} \sum_{\indexPSK=0}^{\numberOfPointsForPSK-1}}_{\substack{(\indexDataSymbolRX,\indexPSK) \neq (\selectedChirpIndex[\indexChirp],\symbolPSKserial[\indexChirp])\\ \indexChirp\in\{{0,\mydots {\numberofIndices-1}\}}}}{1-{\left(\prod_{\indexChirp=0}^{\numberofIndices-1}\left(1-\probabFunction[{\eventofeeror[\indexDataSymbolRX][\indexPSK][\selectedChirpIndex[\indexChirp]][{\symbolPSKserial[\indexChirp]}]}]\right)\right)}}\nonumber\\
	&=\underbrace{\sum_{\indexDataSymbolRX=0}^{\numberOfShifts-1} \sum_{\indexPSK=0}^{\numberOfPointsForPSK-1}}_{\indexDataSymbolRX\notin\indexSequence}{1-{\left(\prod_{\indexChirp=0}^{\numberofIndices-1}\left(1-\probabFunction[{\eventofeeror[\indexDataSymbolRX][\indexPSK][\selectedChirpIndex[\indexChirp]][{\symbolPSKserial[\indexChirp]}]}]\right)\right)}}
	+\sum_{\indexChirp=0}^{\numberofIndices -1} \underbrace{\sum_{\indexPSK=0}^{\numberOfPointsForPSK-1}}_{\indexPSK\neq\symbolPSKserial[\indexChirp]}{1-{\left(\prod_{\indexChirp=0}^{\numberofIndices-1}\left(1-\probabFunction[{\eventofeeror[\indexDataSymbolRX][\indexPSK][\selectedChirpIndex[\indexChirp]][{\symbolPSKserial[\indexChirp]}]}]\right)\right)}}~.
	\label{eq:unionbound}
	\end{align}
\fi
The Euclidean distance between $\dataSymbols[{\selectedChirpIndex[\indexChirp]}]$ and $ \dataSymbols[{\indexDataSymbolRX\neq\selectedChirpIndex[\indexChirp]}]$ is fixed and can be calculated  as $\distanceminindex=\sqrt{2\symbolEnergy}$. The minimum Euclidean distance between two \ac{PSK} constellation points for the same index is $\distanceminpsk=2\sqrt{\symbolEnergy}{\sin(\frac{\pi}{\numberOfPointsForPSK})}$. Under the coherent detection and by using the symbol-error rate for  $\numberOfPointsForPSK$-\ac{PSK} \cite{proakisfundamentals}, this implies that the right-hand side of \eqref{eq:unionbound} can be re-written as
\if \IEEEsubmission 0
\begin{align}
	\probabiltyError\leq &\probabiltyunionbound[\numberofIndices]\triangleq
	(\numberOfShifts-\numberofIndices)\numberOfPointsForPSK\left(1-\left(1-\qfunc[\frac{\distanceminindex}{\sqrt{2\noisePower}}]\right)^{\numberofIndices}\right)\nonumber\\
	&~~~~~~~~~~~~~~~~~~~~~~~~+\numberofIndices\left(1-\left(1-\probabiltyErrorPSK\right)^{\numberofIndices}\right)~,
	\label{eq:probabUnionBound}
\end{align}
\else
\begin{align}
	\probabiltyError\leq\probabiltyunionbound[\numberofIndices]\triangleq
	(\numberOfShifts-\numberofIndices)\numberOfPointsForPSK\left(1-\left(1-\qfunc[\frac{\distanceminindex}{\sqrt{2\noisePower}}]\right)^{\numberofIndices}\right)+\numberofIndices\left(1-\left(1-\probabiltyErrorPSK\right)^{\numberofIndices}\right)~,
	\label{eq:probabUnionBound}
\end{align}
\fi
where 
\begin{align}
\probabiltyErrorPSK= \begin{cases}
	2\qfunc[\frac{\distanceminpsk}{\sqrt{2\noisePower}}], & \numberOfPointsForPSK\ge4\\
	\qfunc[\frac{\distanceminpsk}{\sqrt{2\noisePower}}], & \numberOfPointsForPSK=2\\	
	0, & \numberOfPointsForPSK=1\\
\end{cases}~,
\end{align}
$\qfunc[\cdot]$ is the Q-function, and $\probabiltyunionbound[\numberofIndices]$ is the \ac{UB} of the probability of error for any $\numberofIndices\in\{1,\mydots,\numberOfShifts\}$ and $\numberOfPointsForPSK\ge1$ for \ac{DFT-s-OFDM-IM} in  \ac{AWGN} channel. 

Now consider the case where the \ac{FDSS} for \acp{CSC} is included, i.e., \ac{CSC-IM}. Since we use a single-tap \ac{LMMSE}-\ac{FDE} followed by an $\numberOfShifts$-point \ac{IDFT}  as in \eqref{eq:mmse},  \ac{ISI} occurs and the noise on the received modulation symbols becomes correlated. By utilizing the derivation in \cite{Nisar_2007} for a typical \ac{DFT-s-OFDM} transmission  with \ac{iid} modulation symbols under a fading channel, the \ac{SNR} for the received modulation symbols after the equalization can be obtained as $\SNRpost = 1/({\sqrt{1/\signalFactor} -1})$, where $\signalFactor=\left(1/\numberOfShifts\sum_{\indexSubcarrier=\lowerFrequency}^{\upperFrequency} {|\FDSScoef[\indexSubcarrier]|^2}/({|\FDSScoef[\indexSubcarrier]|^2+\noiseVariance})\right)^2$. Assuming that the noise on the modulation symbols is uncorrelated and follows Gaussian distribution, the bound in \eqref{eq:probabUnionBound} can be calculated by using $\noisePower=1/\SNRpost$. It is worth noting that these assumptions are weak as long as the parameters do not cause ill-conditioned operations (e.g., $\numberOfOccupiedSubcarriers\ll\numberOfShifts$) and \eqref{eq:probabUnionBound} is fairly accurate for typical \ac{FDSS} choices, e.g., \eqref{eq:fresnekfcn}, as shown in Section~\ref{sec:numresults}.

\subsection{Trade-off between PMEPR and Spectral Efficiency }
\label{subsec:connection}
\begin{figure}[t]
	\centering
	{\includegraphics[width =3.3in]{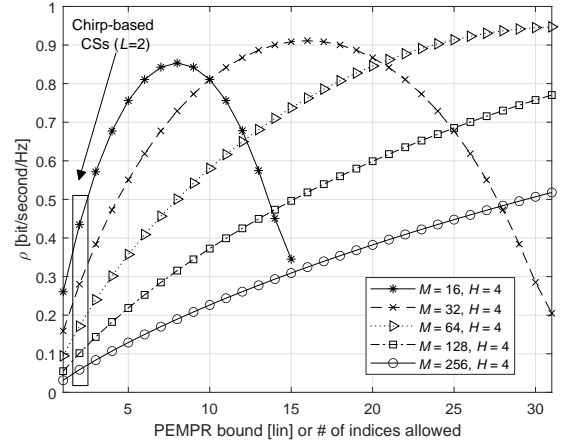}
	} 
	\caption{Trade-off between the \ac{SE} and the maximum \ac{PMEPR} for \ac{CSC-IM}.}
	\label{fig:tradeoff}
\if\IEEEsubmission1
\vspace{-3mm}
\fi
\end{figure}
The \ac{CSC-IM} allows $\numberofTxBits=\numofSelectorBits+\numofModulationBits$  information bits to be transmitted, where $\numofSelectorBits=\floor{\log_{2}\left({\binom{\numberOfShifts}{\numberofIndices}}\right)}$ and $\numofModulationBits= \numberofIndices\log_{2}(\numberOfPointsForPSK)$ since $\numberofIndices$ indices can be chosen from $\numberOfShifts$ indices in ${\binom{\numberOfShifts}{\numberofIndices}}$ different ways and  $\numberofIndices$ $\numberOfPointsForPSK$-\ac{PSK} symbols  are utilized. Hence, the \ac{SE} of the scheme can be calculated as
$ \spectralEfficiency = {\floor{\log_{2}({{\numberOfShifts \choose \numberofIndices}\times \numberOfPointsForPSK^\numberofIndices})}}/{\numberOfShifts}$ bit/second/Hz. Therefore, the \ac{SE} increases with $\numberofIndices\le\floor{\numberOfShifts/2}$.
On the other hand, since chirps are constant-envelope signals, i.e., $|\basisFunction[\indexDataSymbolRX]|=1$, the maximum amplitude of the superposition of $\numberofIndices$ chirps is less than or equal to  $\numberofIndices$, i.e., the instantaneous power is maximum $\numberofIndices^2$. As the mean power is $\numberofIndices$ in the presence of random \ac{PSK} symbols, the \ac{PMEPR} of the signal with \ac{CSC-IM}  is always less than or equal to $\numberofIndices$. Therefore, the proposed scheme leads to a trade-off between maximum \ac{PMEPR} and \ac{SE} as quantified in \figurename~\ref{fig:tradeoff} by sweeping $\numberofIndices$ for a given $\numberofIndices$ and for a \ac{QPSK} constellation, i.e., $\numberOfPointsForPSK=4$, and $\numberOfShifts\in\{16,32,64,128,256\}$.  While the \ac{SE} of \ac{CSC-IM} increases with a larger $\numberofIndices\le\floor{\numberOfShifts/2}$, the maximum \ac{PMEPR} is always less than or equal to $10\log_{10}{\numberofIndices}$ dB. 
This trade-off can be helpful to identify the maximum number of \acp{CSC} and the corresponding \ac{SE} for a given power back-off. For example, if the tolerable input power back-off is 7 dB, the maximum number of active \acp{CSC} is $5$ and  the maximum \ac{SE} is $0.35$~bits/(s.Hz) for $\numberOfShifts=64$ based on \figurename~\ref{fig:tradeoff}.

As a special case, the proposed scheme for $\numberofIndices=2$ reveals that non-trivial \acp{CS} can be generated from chirps as follows:
\begin{theorem}
	\label{th:golaypair}
	Let $\seqx[\timeVar]$ and $\seqy[\timeVar]$ be the signals given by
	\begin{align}
		\seqx[\timeVar] = \dataSymbols[\chirpm]\constante^{\constantj\chirpphase[\chirpm][\timeVar]} + \dataSymbols[\chirpn]\constante^{\constantj\chirpphase[\chirpn][\timeVar]}~,
		\label{eq:Golay1}
		\\
		\seqy[\timeVar] = \dataSymbols[\chirpm]\constante^{\constantj\chirpphase[\chirpm][\timeVar]} - \dataSymbols[\chirpn]\constante^{\constantj\chirpphase[\chirpn][\timeVar]}~,
		\label{eq:Golay2}
	\end{align}
	for $\dataSymbols[\chirpm],\dataSymbols[\chirpn]\in\complexNumbers$ and $|\dataSymbols[\chirpm]|=|\dataSymbols[\chirpn]|=1$. The Fourier coefficients of $\seqx[\timeVar]$ and $\seqy[\timeVar]$ form a \ac{GCP}.
\end{theorem}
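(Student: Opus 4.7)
My plan is to bypass the sequence-level aperiodic autocorrelation calculation by exploiting the polynomial characterization of a GCP from Section~\ref{subsec:complementarySequence}, and then reduce it to the identity $|\seqx[\timeVar]|^2 + |\seqy[\timeVar]|^2 = \text{const}$, which the $\pm$ structure of $\seqx[\timeVar]$ and $\seqy[\timeVar]$ will satisfy automatically. By \eqref{eq:basisDecompose}, $\constante^{\constantj\chirpphase[\indexDataSymbolTX][\timeVar]}$ has $\indexSubcarrier$th Fourier coefficient $\fourierSeries[\indexSubcarrier]\constante^{-\constantj 2\pi\indexSubcarrier\amountOfShift[\indexDataSymbolTX]/\symbolDuration}$, so by linearity the two candidate sequences of Fourier coefficients are
\begin{align*}
\eleGa[\indexSubcarrier] &= \fourierSeries[\indexSubcarrier]\bigl(\dataSymbols[\chirpm]\constante^{-\constantj 2\pi\indexSubcarrier\amountOfShift[\chirpm]/\symbolDuration}+\dataSymbols[\chirpn]\constante^{-\constantj 2\pi\indexSubcarrier\amountOfShift[\chirpn]/\symbolDuration}\bigr),\\
\eleGb[\indexSubcarrier] &= \fourierSeries[\indexSubcarrier]\bigl(\dataSymbols[\chirpm]\constante^{-\constantj 2\pi\indexSubcarrier\amountOfShift[\chirpm]/\symbolDuration}-\dataSymbols[\chirpn]\constante^{-\constantj 2\pi\indexSubcarrier\amountOfShift[\chirpn]/\symbolDuration}\bigr),
\end{align*}
indexed over $\indexSubcarrier\in\{\lowerFrequency,\dots,\upperFrequency\}$. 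A uniform shift of the index to bring the sequences to $\{0,\dots,\upperFrequency-\lowerFrequency\}$ multiplies both $\polySeq[\seqGaP][\polyVariable]$ and $\polySeq[\seqGbP][\polyVariable]$ by a common power of $\polyVariable$, which cancels in $|\cdot|^2$ on the unit circle, so it has no effect on the GCP property.

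Next, I would invoke the equivalent definition from Section~\ref{subsec:complementarySequence}: $(\seqGa,\seqGb)$ is a GCP iff $|\polySeq[\seqGaP][\polyVariable]|^2+|\polySeq[\seqGbP][\polyVariable]|^2 = \apac[\seqGa][0]+\apac[\seqGb][0]$ on the unit circle. The bridge to the time domain is the observation that the (finite) Fourier series is exactly the evaluation of the sequence polynomial on the unit circle: substituting $\polyVariable=\constante^{\constantj 2\pi\timeVar/\symbolDuration}$ recovers $\seqx[\timeVar]$ and $\seqy[\timeVar]$ themselves, and as $\timeVar$ sweeps $[0,\symbolDuration)$ the point $\polyVariable$ sweeps the unit circle. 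The GCP condition is thus equivalent to requiring $|\seqx[\timeVar]|^2+|\seqy[\timeVar]|^2$ to be independent of $\timeVar$.

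Finally, writing $u=\dataSymbols[\chirpm]\constante^{\constantj\chirpphase[\chirpm][\timeVar]}$ and $v=\dataSymbols[\chirpn]\constante^{\constantj\chirpphase[\chirpn][\timeVar]}$, the parallelogram identity gives
\begin{equation*}
|\seqx[\timeVar]|^2+|\seqy[\timeVar]|^2 = |u+v|^2+|u-v|^2 = 2(|u|^2+|v|^2) = 2(|\dataSymbols[\chirpm]|^2+|\dataSymbols[\chirpn]|^2) = 4,
\end{equation*}
using the hypothesis $|\dataSymbols[\chirpm]|=|\dataSymbols[\chirpn]|=1$. This establishes the required constant-sum identity on the unit circle, and hence the GCP property. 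The main obstacle is not any calculation---it is recognising the bridge between the sequence-polynomial definition of a GCP and the continuous-time Fourier-series viewpoint of \acp{CSC}. Once that bridge is in place, neither the specific form of $\chirpphase[0][\timeVar]$ nor the translations $\amountOfShift[\chirpm],\amountOfShift[\chirpn]$ enter the argument, which indicates that the statement holds for any pair of constant-modulus waveforms built from a common phase template, not just chirps.
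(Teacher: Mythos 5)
Your proof is correct and follows essentially the same route as the paper's: both reduce the claim to showing $|\seqx[\timeVar]|^2+|\seqy[\timeVar]|^2$ is constant via the polynomial characterization of a \ac{GCP} evaluated on the unit circle, and both conclude the sum equals $4$. The only differences are cosmetic --- you make the bridge between the sequence-polynomial definition and the Fourier-series viewpoint explicit (the paper compresses it into ``by the definition of a GCP'') and you invoke the parallelogram identity where the paper expands the two squared magnitudes term by term.
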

\begin{proof}
	By the definition of a \ac{GCP}, we need to show that $|\seqx[\timeVar]|^2+|\seqy[\timeVar]|^2$ is constant:
	\if \IEEEsubmission 0
		\begin{align}
			|\seqx[\timeVar]|^2 
			=&|\dataSymbols[\chirpm]|^2+|\dataSymbols[\chirpn]|^2\nonumber\\
			&+\dataSymbols[\chirpm]\dataSymbols[\chirpn]^*\constante^{\constantj(\chirpphase[\chirpm][\timeVar]-\chirpphase[\chirpn][\timeVar])}+\dataSymbols[\chirpm]^*\dataSymbols[\chirpn]\constante^{-\constantj(\chirpphase[\chirpm][\timeVar]-\chirpphase[\chirpn][\timeVar])}~.\nonumber
		\end{align}
	\else
		\begin{align}
		|\seqx[\timeVar]|^2=|\dataSymbols[\chirpm]|^2+|\dataSymbols[\chirpn]|^2+\dataSymbols[\chirpm]\dataSymbols[\chirpn]^*\constante^{\constantj(\chirpphase[\chirpm][\timeVar]-\chirpphase[\chirpn][\timeVar])}+\dataSymbols[\chirpm]^*\dataSymbols[\chirpn]\constante^{-\constantj(\chirpphase[\chirpm][\timeVar]-\chirpphase[\chirpn][\timeVar])}~.\nonumber
	\end{align}
	\fi	
	Similarly,
	\if \IEEEsubmission 0
	\begin{align}
		|\seqy[\timeVar]|^2
		=&|\dataSymbols[\chirpm]|^2+|\dataSymbols[\chirpn]|^2\nonumber\\
		&-\dataSymbols[\chirpm]\dataSymbols[\chirpn]^*\constante^{\constantj(\chirpphase[\chirpm][\timeVar]-\chirpphase[\chirpn][\timeVar])}-\dataSymbols[\chirpm]^*\dataSymbols[\chirpn]\constante^{-\constantj(\chirpphase[\chirpm][\timeVar]-\chirpphase[\chirpn][\timeVar])}~.\nonumber
	\end{align}
	\else
	\begin{align}
	|\seqy[\timeVar]|^2
	=|\dataSymbols[\chirpm]|^2+|\dataSymbols[\chirpn]|^2-\dataSymbols[\chirpm]\dataSymbols[\chirpn]^*\constante^{\constantj(\chirpphase[\chirpm][\timeVar]-\chirpphase[\chirpn][\timeVar])}-\dataSymbols[\chirpm]^*\dataSymbols[\chirpn]\constante^{-\constantj(\chirpphase[\chirpm][\timeVar]-\chirpphase[\chirpn][\timeVar])}~.\nonumber
\end{align}
	\fi
	Therefore, $|\seqx[\timeVar]|^2+|\seqy[\timeVar]|^2
	=2\times(|\dataSymbols[\chirpm]|^2+|\dataSymbols[\chirpn]|^2)
	=4$,
	which implies that $\fourier[{\seqx[{\timeVar}]}]$ and  $\fourier[{\seqy[{\timeVar}]}]$ form a \ac{GCP}.
\end{proof}
Theorem~\ref{th:golaypair} indicates that the Fourier coefficients of a linear combination of the frequency responses of two constant-envelope chirps result in a \ac{CS}. Hence, based on \eqref{eq:fresnekfcn} and \eqref{eq:besselfcn}, Fresnel integrals and Bessel functions can be useful for generating \acp{CS}, which have not been reported in the literature to the best of our knowledge. 
\begin{example}
	\rm 
	Assume that $\seqx[\timeVar]$ and $\seqy[\timeVar]$ are  linear combinations of two circularly-shifted versions of  a band-limited sinusoidal chirp. By using \eqref{eq:basisDecompose} and \eqref{eq:besselfcn}, the Fourier coefficients of $\seqx[\timeVar]$ and $\seqy[\timeVar]$ are obtained as
	\begin{align}
		\eleGa[\indexSubcarrier] &= \dataSymbols[\chirpm]\besselFunctionFirstKind[\indexSubcarrier][\frac{\numberOfOccupiedSubcarriers}{2}]\constante^{-\constantj2\pi\indexSubcarrier\frac{\amountOfShift[\chirpm]}{\symbolDuration}}+\dataSymbols[\chirpn]\besselFunctionFirstKind[\indexSubcarrier][\frac{\numberOfOccupiedSubcarriers}{2}]\constante^{-\constantj2\pi\indexSubcarrier\frac{\amountOfShift[\chirpn]}{\symbolDuration}}~, 
		\label{eq:CSbesselA}
		\\
		\eleGb[\indexSubcarrier] &= \dataSymbols[\chirpm]\besselFunctionFirstKind[\indexSubcarrier][\frac{\numberOfOccupiedSubcarriers}{2}]\constante^{-\constantj2\pi\indexSubcarrier\frac{\amountOfShift[\chirpm]}{\symbolDuration}}-\dataSymbols[\chirpn]\besselFunctionFirstKind[\indexSubcarrier][\frac{\numberOfOccupiedSubcarriers}{2}]\constante^{-\constantj2\pi\indexSubcarrier\frac{\amountOfShift[\chirpn]}{\symbolDuration}}\label{eq:CSbesselB},
	\end{align}
	respectively. Based on Theorem~\ref{th:golaypair}, $(\eleGa[{\indexEleOfSeq}])_{i=-\infty}^{\infty}$ and $(\eleGb[{\indexEleOfSeq}])_{i=-\infty}^{\infty}$ form a \ac{GCP}. Since the sinusoidal chirps are band-limited signals, the amplitude of a Fourier coefficient approaches to zero for $|\indexEleOfSeq|\ge\numberOfOccupiedSubcarriers/2$. Therefore, $(\eleGa[{\indexEleOfSeq}])_{i=\lowerFrequency}^{\upperFrequency}$ and $(\eleGb[{\indexEleOfSeq}])_{i=\lowerFrequency}^{\upperFrequency}$ are approximately \ac{GCP}. Note that if the sinusoidal chirps are replaced by the linear chirps, by using \eqref{eq:fresnekfcn},  the Fourier coefficients of $\seqx[\timeVar]$ and $\seqy[\timeVar]$ can be calculated as
	\begin{align}
		\eleGa[\indexSubcarrier] =& \dataSymbols[\chirpm]
		\linearCoef(\fresnelC[{\linearXone}] + \fresnelC[{\linearXtwo}] + \constantj\fresnelS[{\linearXone}] + \constantj\fresnelS[{\linearXtwo}])
		\constante^{-\constantj2\pi\indexSubcarrier\frac{\amountOfShift[\chirpm]}{\symbolDuration}}\nonumber\\ &+\dataSymbols[\chirpn]\linearCoef(\fresnelC[{\linearXone}] + \fresnelC[{\linearXtwo}] + \constantj\fresnelS[{\linearXone}] + \constantj\fresnelS[{\linearXtwo}])\constante^{-\constantj2\pi\indexSubcarrier\frac{\amountOfShift[\chirpn]}{\symbolDuration}}~,\nonumber
		\\
		\eleGb[\indexSubcarrier] =& \dataSymbols[\chirpm]
		\linearCoef(\fresnelC[{\linearXone}] + \fresnelC[{\linearXtwo}] + \constantj\fresnelS[{\linearXone}] + \constantj\fresnelS[{\linearXtwo}])
		\constante^{-\constantj2\pi\indexSubcarrier\frac{\amountOfShift[\chirpm]}{\symbolDuration}}\nonumber\\ &-\dataSymbols[\chirpn]\linearCoef(\fresnelC[{\linearXone}] + \fresnelC[{\linearXtwo}] + \constantj\fresnelS[{\linearXone}] + \constantj\fresnelS[{\linearXtwo}])\constante^{-\constantj2\pi\indexSubcarrier\frac{\amountOfShift[\chirpn]}{\symbolDuration}}~.\nonumber
	\end{align}
\end{example}

In \figurename~\ref{fig:correlationExample}, we exemplify a \ac{GCP} of length $\numberOfShifts=24$, synthesized through \eqref{eq:CSbesselA} and \eqref{eq:CSbesselB} for $\lowerFrequency=-11$, $\upperFrequency=12$,  ${\amountOfShift[\chirpm]}/{\symbolDuration}=0/24$, ${\amountOfShift[\chirpn]}/{\symbolDuration}=1/24$, and $\dataSymbols[\chirpm]=\dataSymbols[\chirpn]=1$. When $\numberOfOccupiedSubcarriers=24$, the sequences are truncated heavily. Therefore, it does not satisfy the definition of a \ac{GCP} given in Section~\ref{subsec:complementarySequence}. On the other hand, when the maximum frequency deviation is halved, $\OCB$ is $15$ for containing $99\%$  of the total integrated power of the  spectrum. Hence, $\numberOfShifts=24$ forms the chirps well and the resulting sequences form a \ac{GCP}. It is also worth noting that synthesized \acp{CS} are not unimodular sequences. Therefore, the mean power of \ac{OFDM} symbol changes although  instantaneous power is bounded.

\begin{figure}[t]
	\centering
	{\includegraphics[width =3.3in]{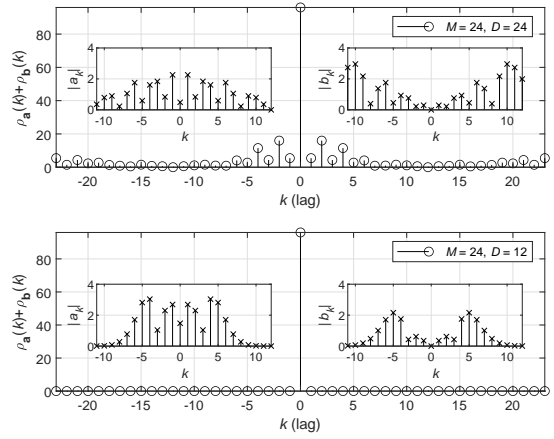}
	} 
	\caption{A \ac{GCP} of length $\numberOfShifts=24$ synthesized with sinusoidal chirps and the impact of truncation on the \ac{GCP} formation.
	}
\if\IEEEsubmission1
\vspace{-3mm}
\fi
	\label{fig:correlationExample}
\end{figure}
\begin{corollary} 
	\label{co:indexmodulation}
	Let $\chirpSet=\{\basisFunction[\indexDataSymbolTX]|\indexDataSymbolTX=0,1,\mydots,\numberOfShifts-1\}$ be a set of $\numberOfShifts$  \acp{CSC} of an  arbitrary band-limited function with the duration $\symbolDuration$ and $\OCB\le\numberOfShifts$. Without using the same \ac{CSC} twice, the total number of distinct \acp{CS} of length  $\numberOfShifts$ is ${\numberOfShifts \choose 2}\times \numberOfPointsForPSK^2$ for  $\dataSymbols[\chirpm],\dataSymbols[\chirpn]\in\{\constante^{\constantj 2\pi\indexPSK/\numberOfPointsForPSK }| \indexPSK=0,1,\mydots, \numberOfPointsForPSK-1\}$. 
\end{corollary}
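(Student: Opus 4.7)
The plan is to treat the statement as a direct application of Theorem~\ref{th:golaypair} followed by a counting argument, with the main subtlety being distinctness of the resulting \acp{CS}.

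First, I would invoke Theorem~\ref{th:golaypair} in the setting of the corollary: for each unordered pair $\{\chirpm,\chirpn\}\subset\{0,1,\mydots,\numberOfShifts-1\}$ with $\chirpm\neq\chirpn$ and each pair of \ac{PSK} symbols $\dataSymbols[\chirpm],\dataSymbols[\chirpn]\in\{\constante^{\constantj2\pi\indexPSK/\numberOfPointsForPSK}\mid\indexPSK\in\integers_\numberOfPointsForPSK\}$, the signal $\seqx[\timeVar]=\dataSymbols[\chirpm]\basisFunction[{\chirpm}]+\dataSymbols[\chirpn]\basisFunction[{\chirpn}]$ has Fourier coefficients forming one element of a \ac{GCP}. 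Because $|\dataSymbols[\chirpm]|=|\dataSymbols[\chirpn]|=1$, the hypothesis of Theorem~\ref{th:golaypair} is met. Under the assumption $\OCB\le\numberOfShifts$, truncating the expansion in \eqref{eq:basisDecompose} to $\numberOfShifts$ consecutive frequency indices preserves essentially all the energy of each \ac{CSC}, so the resulting length-$\numberOfShifts$ sequence qualifies as a \ac{CS} in the sense of Section~\ref{subsec:complementarySequence}.

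Next, I would count the parameter configurations. Because the constraint ``without using the same \ac{CSC} twice'' enforces $\chirpm\neq\chirpn$, there are exactly $\binom{\numberOfShifts}{2}$ choices for the unordered index pair, and attaching one \ac{PSK} symbol to each of the two selected \acp{CSC} contributes an independent factor of $\numberOfPointsForPSK^2$. Since $\seqx[\timeVar]$ is invariant under the simultaneous swap $(\chirpm,\dataSymbols[\chirpm])\leftrightarrow(\chirpn,\dataSymbols[\chirpn])$, ordering the pair does not produce new signals. The total count of parameter configurations is therefore $\binom{\numberOfShifts}{2}\numberOfPointsForPSK^2$, matching the claim.

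The main obstacle is justifying that distinct parameter configurations yield distinct \acp{CS}. Since the Fourier-series map is injective, it suffices to show that different configurations produce different time-domain signals $\seqx[\timeVar]$. I would argue this via the linear independence of $\{\basisFunction[\indexDataSymbolTX]\}_{\indexDataSymbolTX=0}^{\numberOfShifts-1}$: by \eqref{eq:basisDecompose}, each \ac{CSC} has a length-$\numberOfShifts$ truncated Fourier-coefficient vector with entries $\fourierSeries[\indexSubcarrier]\constante^{-\constantj2\pi\indexSubcarrier\amountOfShift[\indexDataSymbolTX]/\symbolDuration}$, obtained by elementwise modulation of the nonzero $\fourierSeries[\indexSubcarrier]$ by columns of an $\numberOfShifts\times\numberOfShifts$ \ac{DFT}-type matrix evaluated at the distinct shifts $\amountOfShift[\indexDataSymbolTX]=\indexDataSymbolTX\symbolDuration/\numberOfShifts$. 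These $\numberOfShifts$ vectors are linearly independent, so a two-term sum uniquely determines both the unordered index pair and the two attached \ac{PSK} coefficients. Consequently, all $\binom{\numberOfShifts}{2}\numberOfPointsForPSK^2$ configurations produce distinct \acp{CS}, and the enumeration is tight.
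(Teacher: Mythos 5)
Your proposal is correct and follows essentially the same route as the paper: apply Theorem~\ref{th:golaypair} to each admissible pair, count $\binom{\numberOfShifts}{2}\times\numberOfPointsForPSK^2$ configurations, and argue distinctness plus the length-$\numberOfShifts$ claim from the bandwidth assumption. Your linear-independence justification of distinctness is in fact more explicit than the paper's one-line assertion that the \acp{CS} are distinct because the $\basisFunction[\indexDataSymbolTX]$ are distinct (which, strictly speaking, needs exactly the independence of the modulated coefficient vectors you supply, granted that the retained Fourier coefficients $\fourierSeries[\indexSubcarrier]$ are nonzero), so this is a faithful and slightly strengthened version of the paper's argument.
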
 

\begin{proof}
	There exist $\numberOfPointsForPSK^2$ combinations for $\{\dataSymbols[\chirpm],\dataSymbols[\chirpn]\}$ and $\constante^{\constantj\chirpphase[\chirpm][\timeVar]}$ and $\constante^{\constantj\chirpphase[\chirpn][\timeVar]}$ in \eqref{eq:Golay1} can be chosen in ${\numberOfShifts \choose 2}$ ways without using the same chirp.
	Thus, the total number of \acp{CS} is ${\numberOfShifts \choose 2}\times \numberOfPointsForPSK^2$ via Theorem~\ref{th:golaypair}. The \acp{CS} are distinct as $\basisFunction[\indexDataSymbolTX]$ are distinct for $\indexDataSymbolTX\in\{0,1,\mydots,\numberOfShifts-1\}$. Since the \ac{OCB} of $\basisFunction[\indexDataSymbolTX]$ is less than or equal to $\numberOfShifts/\symbolDuration$, the length of the synthesized \ac{CS} is $\numberOfShifts$ based on Nyquist's sampling theorem.
\end{proof}

Note that the extension of Theorem~\ref{th:golaypair} for $\numberofIndices>2$
is currently a difficult open problem. To address this issue,	using the coefficients of \acp{CSC} as seed \ac{GCP} sequences for the recursive \ac{CS} construction \cite{sahin_2020gm} is a potential direction that can be pursued.

\subsection{Practical Issues}
At the transmitter, mapping the information bits to a combination of $\numberofIndices$ indices (and vice versa for \ac{RXr}) may be a challenge. This can be addressed by constructing a bijective function from integers to the set of combinations via a combinatorial number system of degree $\numberofIndices$ \cite{basar_2013}, also called combinadics. Note that we investigate a generalization of combinadics in Section~\ref{subsubsec:bijection} to obtain a mapping rule under the proposed \ac{IS} to develop the encoder and decoder, which may also be used for the unconstrained case.

The choice of \ac{FDSS} is important for obtaining a low error rate and a low \ac{PMEPR}. In \cite{sahin_2020}, it was demonstrated that a flatter \ac{FDSS} improves the \ac{BER} performance for the receiver with a single-tap \ac{LMMSE}-\ac{FDE}. For example, a linear chirp offers a more flat \ac{FDSS} as compared to the one with a sinusoidal chirp. On the other hand, a linear chirp causes abrupt instantaneous frequency changes within the \ac{IDFT} duration. Therefore, it requires a much lower $\numberOfOccupiedSubcarriers$ as compared to the one for a sinusoidal chirp for a given $\numberOfShifts$ to form a \ac{CS}. This issue can distort the signals and cause a larger \ac{PMEPR} than the theoretical bound, as demonstrated in Section~\ref{sec:numresults}. Note that the \ac{PMEPR} can be reduced to the theoretical limit if \ac{FDSS} is allowed to be extended to the sidebands with some roll-off factor, i.e., a larger $\upperFrequency$ and a lower $\lowerFrequency$ in \eqref{eq:chirpWave}.

\subsection{Comparisons}
As compared to \ac{DFT-s-OFDM-IM}, the proposed scheme has a significant \ac{PMEPR} advantage since  the signal is spread in the time domain whereas \ac{DFT-s-OFDM} generates Dirichlet sinc pulses. The proposed scheme and \ac{OFDM-IM} have similar \ac{PMEPR} characteristics since \ac{OFDM-IM} also spreads the symbol energy in time\footnote{In this study, we consider a large number of subcarriers with few indices for both \ac{OFDM-IM} and DFT-s-OFDM-IM.}. However, the energy is also distributed  within the signal bandwidth with the proposed scheme. Thus, the proposed scheme allows a coherent receiver to exploit frequency diversity in frequency-selective channels naturally. On the other hand, \ac{OFDM-IM} receiver does not fully benefit from the frequency selectivity without an extra operation, e.g., repetitions or interleaving  \cite{Basar2015_CIIM}, at the transmitter.

In \cite{davis_1999}, a low \ac{PMEPR} coding scheme was proposed to generate $\numberOfPointsForPSK$-\ac{PSK} \ac{CS-RM}. This scheme synthesizes $\numberOfPointsForPSK^{\positiveInt+1}\times \positiveInt!/2$ \acp{CS}, where the length of each \ac{CS}  must be in the form of $2^\positiveInt$, where $\positiveInt\in\integersPositiveSet$. 
When a seed \ac{GCP} of length $N$ is utilized with this scheme,  it can be shown that $\numberOfPointsForPSK^{\positiveInt+1}\times \positiveInt!$ \acp{CS} of length $N\cdot 2^\positiveInt$ can be generated \cite{sahin_2020gm}. Therefore, the spectral efficiency of the schemes in  \cite{davis_1999} and \cite{sahin_2020gm} can be calculated as $\floor{\log_2 ({ \numberOfPointsForPSK^{\positiveInt+1}\times \positiveInt!/2})}/2^{\positiveInt}$ and $\floor{\log_2 ({ \numberOfPointsForPSK^{\positiveInt+1}\times \positiveInt!})}/(N2^{\positiveInt})$ bit/second/Hz, respectively. The differences between these schemes and the proposed scheme can be listed as follows: 1) The proposed scheme supports flexible bandwidth. For example, $\numberOfShifts$ can be an integer chosen as an integer multiple of 12 based on the resource allocation in 3GPP \ac{5G} \ac{NR} and \ac{4G} \ac{LTE}. 2) The schemes in \cite{davis_1999, sahin_2020gm} do not provide a trade-off between \ac{PMEPR} and spectral efficiency whereas  $\numberofIndices$ can be chosen for a higher \ac{SE} at a cost of high \ac{PMEPR} with our scheme. The \ac{PMEPR} is still theoretically limited. 3) The number of \acp{CS} is a function of a second-order coset term generated through permutations  in \cite{davis_1999} and \cite{sahin_2020gm}. However, designing a decoder for all possible permutations is not trivial \cite{Paterson2000decode, sahin_2020gm}. For a fixed coset, the decoder can be implemented through fast Hadamard transformation or recursive methods \cite{Schmidt2005}, but the \ac{SE} reduces to $\floor{\log_2 { \numberOfPointsForPSK^{m+1}}}$. Under this case, the \ac{SE} of the proposed scheme and the schemes in \cite{davis_1999} and \cite{sahin_2020gm} are similar while a simple decoder can be employed for the proposed method.

Note that  the \ac{SE} of the proposed scheme is low as compared to typical coding schemes such as \ac{LDPC} or polar codes. Although this appears as a disadvantage, there exist many communication scenarios (e.g., uplink control channels in \ac{5G} \ac{NR} \cite{Sahin_20twc}, \ac{IoT} networks) where the primary concern is reliability under low \ac{SNR} or a longer battery life, rather than a higher data rate. In addition, to exploit the demodulated data for improving radar functionality  for bi-static \ac{DFRC} scenarios,  the signal should be able to decoded at very low \ac{SNR}. For these scenarios, the proposed scheme provides a way of limiting \ac{PMEPR} without an optimization procedure at the transmitter while exploiting frequency selectivity and supporting radar functionality discussed next.

\section{Radar Functionality with CSC-IM}
\label{sec:radar}
At the \ac{RXr}, similar to \eqref{eq:rxFreqSymbols}, the received signal in the frequency domain  can be expressed as
\begin{align}
	\receivedSignalDiscreteFrequency[\indexSubcarrier] =
	 {\frac{\channelfreqresponse[\indexSubcarrier]\FDSScoef[\indexSubcarrier]}{\sqrt{\numberOfShifts}}{{\sum_{\indexDataSymbolTX=0}^{\numberOfShifts-1} \dataSymbols[\indexDataSymbolTX] 
				\constante^{-\constantj2\pi \indexSubcarrier \frac{\indexDataSymbolTX}{\numberOfShifts}}}}
	}+\noiseDiscrete[\indexSubcarrier]~,
	\label{eq:rxFreqSymbolsR}
\end{align}
where $\noiseDiscrete[\indexSubcarrier]$ is zero-mean \ac{AWGN} with the variance of $\noiseVariance$. The received symbols in \eqref{eq:rxFreqSymbolsR} can be re-expressed in the vector form as
\begin{align}
	\rxSymbolsVector= \underbrace{\diagonalMatrixFromVector[{ \fdssVector }]\diagonalMatrixFromVector[{  \DFTmtx[\numberOfShifts] \dataVector }]}_{\completeMatrix\triangleq\diagonalMatrixFromVector[{\symbolsVectorinFrequency}]} \channelFVector + \noiseVector~,
\end{align}
where $\rxSymbolsVector=[\receivedSignalDiscreteFrequency[\lowerFrequency],\mydots,\receivedSignalDiscreteFrequency[\upperFrequency] ]^{\rm T}$, $\fdssVector=[\FDSScoef[\lowerFrequency], \mydots,\FDSScoef[\upperFrequency] ]$, $\DFTmtx[\numberOfShifts]$ is the $\numberOfShifts$-point normalized DFT matrix,
$\dataVector=[\dataSymbols[\lowerFrequency],  \mydots,\dataSymbols[\upperFrequency] ]^{\rm T}$, 
 $\noiseVector=[\noiseDiscrete[\lowerFrequency], 
 \noiseDiscrete[\lowerFrequency+1], \mydots,\noiseDiscrete[\upperFrequency] ]^{\rm T}$,
 $\symbolsVectorinFrequency=[\symbolsInFrequency[\lowerFrequency], \mydots, \symbolsInFrequency[\upperFrequency]]^{\rm T}$ is the response of the waveform in the frequency, and $\channelFVector=[\channelfreqresponse[\lowerFrequency], \mydots,\channelfreqresponse[\upperFrequency]]^{\rm T}$. Based on \eqref{eq:radarChannel}, $\channelFVector$ can be expressed as
\begin{align}
\channelFVector = \delayMtx\amplitudeVector~,
\end{align}
where $\delayMtx=[\delayVector[{\timeArrival[1]}]~\delayVector[{\timeArrival[2]}]~\cdots~\delayVector[{\timeArrival[\numberoftargets]}]]\in\complexNumbers^{\numberOfShifts\times\numberoftargets}$ is the delay matrix and $\delayVector[{\timeArrival[\targetindex]}]=\constante^{-\constantj2\pi\fcarrier\timeArrival[\targetindex]}\times[ \constante^{-\constantj2\pi\lowerFrequency\frac{\timeArrival[\targetindex]}{\symbolDuration}},\cdots,\constante^{-\constantj2\pi\upperFrequency\frac{\timeArrival[\targetindex]}{\symbolDuration}}]^{\rm T}$, and $\amplitudeVector=[\reflectioncoefficient[1], \reflectioncoefficient[2], \mydots,\reflectioncoefficient[\numberoftargets] ]^{\rm T}$. For our \ac{DFRC} scenario, the sequences $\pskSequence$ and $\indexSequence$ are available at the \ac{RXr}. Therefore, the symbols on the subcarriers, i.e., $\symbolsVectorinFrequency$, can be used as reference symbols.  Hence, in \ac{AWGN} channel, the \ac{ML}-based delay estimation problem can be expressed as
\if \IEEEsubmission 0
	\begin{align}
	\{(\timeArrivalEst[\targetindex],\reflectioncoefficientEst[\targetindex])\}&=\arg\min_{
	\substack{ \{(\timeArrivalSlack[\targetindex], \reflectioncoefficientSlack[\targetindex])\} \\ \targetindex=1,\mydots,\numberoftargets}
	 } \lVert  \rxSymbolsVector-\completeMatrix\delayMtxSlack\amplitudeVectorSlack \rVert_2^2
	 \nonumber\\&=\arg\min_{
	\substack{ \{(\timeArrivalSlack[\targetindex], \reflectioncoefficientSlack[\targetindex])\} \\ \targetindex=1,\mydots,\numberoftargets}
	 } \lVert\completeMatrix\delayMtxSlack\amplitudeVectorSlack \rVert_2^2 -2 \Re\{\amplitudeVectorSlack^{\rm H}\delayMtxSlack^{\rm H}\completeMatrix^{\rm H}\rxSymbolsVector\}~.
	 \label{eq:MLest}
	\end{align}
\else
	\begin{align}
		\{(\timeArrivalEst[\targetindex],\reflectioncoefficientEst[\targetindex])\}&=\arg\min_{
			\substack{ \{(\timeArrivalSlack[\targetindex], \reflectioncoefficientSlack[\targetindex])\} \\ \targetindex=1,\mydots,\numberoftargets}
		} \lVert  \rxSymbolsVector-\completeMatrix\delayMtxSlack\amplitudeVectorSlack \rVert_2^2
		=\arg\min_{
			\substack{ \{(\timeArrivalSlack[\targetindex], \reflectioncoefficientSlack[\targetindex])\} \\ \targetindex=1,\mydots,\numberoftargets}
		} \lVert\completeMatrix\delayMtxSlack\amplitudeVectorSlack \rVert_2^2 -2 \Re\{\amplitudeVectorSlack^{\rm H}\delayMtxSlack^{\rm H}\completeMatrix^{\rm H}\rxSymbolsVector\}~.
		\label{eq:MLest}
	\end{align}
\fi
For a single target, i.e., $\numberoftargets=1$, \eqref{eq:MLest} can be reduced to
\begin{align}
\timeArrivalEst[1] = \arg\max_{\timeArrivalSlack[1]} |\Re \{\delayVector[{\timeArrivalSlack[1]}]^{\rm H} \completeMatrix^{\rm H}\rxSymbolsVector \}|~,
\label{eq:matchedFilter}
\end{align}
where $\reflectioncoefficientEst[1]=  \Re \{\delayVector[{\timeArrivalEst[1] }]^{\rm H} \completeMatrix^{\rm H}\rxSymbolsVector \}/(\symbolsVectorinFrequency^{\rm H}\symbolsVectorinFrequency)$ by equating the derivative of cost function with respect to $\timeArrivalSlack[1]$ and $ \reflectioncoefficientSlack[1]$ to zeros. The reason for the absolute value in \eqref{eq:matchedFilter} is that $\reflectioncoefficient[1]$ can be negative or positive.
The solution of \eqref{eq:matchedFilter} corresponds to the optimum \ac{MF} and the objective function can be evaluated via a computer search. Note that $\delayVector[{\timeArrivalEst[\targetindex] }]$ is a function of the carrier frequency. Thus, the search should consider narrow enough steps to obtain the maximum. In this study, we utilize a refinement procure that increases the number of points around the coarse estimate point through chirp Z-transformation.

The solution of \eqref{eq:MLest} is not trivial for $\numberoftargets>1$ and a brute-force search can cause a high-complexity \ac{RXr}. To address this issue, we utilize \eqref{eq:matchedFilter} and propose an iterative procedure by subtracting the information related to $(\indexIteration-1)$th  target from the signal as
\begin{align}
\rxSymbolsVector^{(\indexIteration)}=\rxSymbolsVector^{(\indexIteration-1)}-\reflectioncoefficientEst[\indexIteration-1]\completeMatrix\delayVector[{\timeArrivalEst[\indexIteration-1]}]~,
\label{eq:iterations}
\end{align}
for $\rxSymbolsVector^{(1)}=\rxSymbolsVector$. To increase accuracy further, after $\timeArrivalEst[\targetindex]$ and $\reflectioncoefficientEst[\targetindex]$ are estimated through iterations for $\targetindex=1,\mydots,\numberoftargets$,  we re-use the estimates obtained from \eqref{eq:iterations} and update $\timeArrivalEst[\indexIteration]$ and $\reflectioncoefficientEst[\indexIteration]$  by using
\begin{align}
	\rxSymbolsVector^{(\indexIteration)}=\rxSymbolsVector-\sum_{\substack{\targetindex=1\\\targetindex\neq\indexIteration}}^{\numberoftargets}\reflectioncoefficientEst[\targetindex]\completeMatrix\delayVector[{\timeArrivalEst[\targetindex]}]~,
	\label{eq:iterationsAfter}
\end{align}
in  \eqref{eq:matchedFilter}. The corresponding range for $\targetindex$th target can then be obtained as $\distanceEst[\targetindex]=\timeArrivalEst[\targetindex]\times\speedoflight/2$. Based on our simulation trials, updating the estimates through \eqref{eq:iterationsAfter} twice addresses the \ac{RMSE} floor in our previous results in \cite{Safi_2020_GC}.

\def\minimumResolution{r_{\text{min}}}
The successful cancellation of the ($\indexIteration$-1)th reflected signal in \eqref{eq:iterations} relies on the accurate estimate of the reflection coefficient. However, when there are multiple targets, the reflection coefficient estimation can be inaccurate due to 1) the distance between the targets and 2) the correlation properties of the waveform. The reason for the former issue is that multiple targets appear as a single target if the distance between two targets is less than the minimum resolution. It is well-known that the minimum resolution can be calculated as $\minimumResolution=0.5\times\speedoflight/\bandwidthchirp$ meters, where $\bandwidthchirp=\numberOfOccupiedSubcarriers/\symbolDuration$ is the chirp bandwidth. 
The issue related to the waveform can be seen in  \eqref{eq:matchedFilter}. For multiple \acp{CSC} transmission, the \ac{MF} output, i.e., $\Re \{\delayVector[{\timeArrivalEst[1] }]^{\rm H} \completeMatrix^{\rm H}\rxSymbolsVector \}$ for $\timeArrivalEst[1]\in[0,\CPDuration)$, is a superposition of the \ac{MF} outputs of all \acp{CSC} and \acp{CSC} that are closer to each other  in time can cause inaccurate estimations of the reflection coefficients. 
Also, the estimation accuracy for path delays can degrade since the reward function in \eqref{eq:matchedFilter} can be high at different values of $\timeArrivalSlack[1]$ for $\numberofIndices>1$, i.e., multiple spikes, although there is a single target, 
 To address the correlation problem, we investigate two solutions:  1) \ac{IS} unique to the proposed scheme  and 2) Range estimation over the \ac{LMMSE} channel estimate, i.e., removing the impact of waveform on the range estimation.

\def\distanceToBegin{a}
\def\distanceToEnd{b}
\def\upperLimit{U}
\def\cardinalityConstant{T}
\subsection{Solution \#1: Index Separation for \ac{MF}-based estimation}
\label{subsec:IS}
For this solution, we consider the \ac{MF}-based estimation and use the iterations in \eqref{eq:iterations} at the \ac{RXr}. However, we modify the transmitter such that the spikes in the \ac{AC} function occurring due to the multiple \ac{CSC} transmissions are well-separated. To this end, we restrict the selected \ac{CSC} indices as the distance between two adjacent indices is larger than a certain value, which separates \acp{CSC} apart  in time as in \figurename~\ref{fig:txrxa}.  This restriction improves the accuracy of the reflection coefficient estimation in \eqref{eq:matchedFilter} and the accuracy of the cancellations in \eqref{eq:iterations} by guaranteeing a low \ac{AC} zone.

\ac{IS} introduces a  trade-off between communications and radar. This is because a larger $\separationValue$ for improving the radar functionality can degrade the \ac{SE} of the proposed scheme. Hence, the first question that we need to address is how the \ac{SE} of \ac{CSC-IM} is affected for a given $\separationValue$. 
In addition, the restriction on indices  under \ac{IS} requires a new bijective mapping between bits and indices  for the encoder and  decoder designs.

\subsubsection{Spectral Efficiency under IS}
Let $\cardinality[\numberofIndices,\separationValue][\numberOfShifts]$ denote the cardinality of the set of sequences $(\selectedChirpIndex[0],\selectedChirpIndex[1],\mydots,\selectedChirpIndex[\numberofIndices-1])$, where  $0\le\selectedChirpIndex[\chirpm]<\selectedChirpIndex[\chirpn]<\numberOfShifts$ for $\chirpm<\chirpn$ and
 $\separationDis[\indexSeparation]\ge\separationValue$ for all $\indexSeparation\in\{1,2,\mydots,\numberofIndices\}$, i.e., the number of valid index permutations for given $\separationValue$, $\numberofIndices$ and $\numberOfShifts$. Also, let $\CardinalityDis[\numberofIndices,\separationValue][\distanceSumVar]$ denote the cardinality of the sequences $(\separationDis[1],\separationDis[2],\cdots,\separationDis[\numberofIndices])$ such that $\separationDis[1]+\separationDis[2]+\cdots+\separationDis[\numberofIndices]=\distanceSumVar\in\integersNonnegativeSet$ and  $\separationDis[\indexSeparation]\ge\separationValue\in\integersNonnegativeSet$ for $\indexSeparation\in\{1,2,\mydots,\numberofIndices\}$. To obtain $\cardinality[\numberofIndices,\separationValue][\numberOfShifts]$, we need the following lemma:  
\begin{lemma}
	\label{th:cardinalitySum}
	For $\numberofIndices\ge1$,
	\begin{align}
		\CardinalityDis[\numberofIndices,\separationValue][\distanceSumVar]=&
		\sum_{\indexRecursion=\separationValue}^{\distanceSumVar-\separationValue\numberofIndices+\separationValue} \CardinalityDis[\numberofIndices-1,\separationValue][\distanceSumVar-{\indexRecursion}]
		\label{eq:recursiveSumB}\\
		=&\begin{cases}
			\binom{\distanceSumVar-\numberofIndices\separationValue+\numberofIndices-1}{\numberofIndices-1},&~\distanceSumVar\ge\separationValue\numberofIndices\\
			0. & \text{\rm otherwise}
		\end{cases}~.
		\label{eq:cardinaltySum}
	\end{align}
\end{lemma}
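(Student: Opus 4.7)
The plan is to establish both equalities directly, since each has an elementary combinatorial proof, and then remark that they are mutually consistent via a standard binomial identity.

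For the recursion in \eqref{eq:recursiveSumB}, I would condition on the value of the last coordinate $\separationDis[\numberofIndices]$. If we fix $\separationDis[\numberofIndices]=\indexRecursion$, then the remaining tuple $(\separationDis[1],\dots,\separationDis[\numberofIndices-1])$ must satisfy $\separationDis[\indexSeparation]\ge\separationValue$ for every $\indexSeparation$ and $\sum_{\indexSeparation=1}^{\numberofIndices-1}\separationDis[\indexSeparation]=\distanceSumVar-\indexRecursion$, giving $\CardinalityDis[\numberofIndices-1,\separationValue][\distanceSumVar-\indexRecursion]$ choices. The admissible range of $\indexRecursion$ is $\separationValue\le\indexRecursion\le\distanceSumVar-(\numberofIndices-1)\separationValue=\distanceSumVar-\separationValue\numberofIndices+\separationValue$, since the remaining $\numberofIndices-1$ coordinates, each at least $\separationValue$, must fit into the residual sum. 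Summing over $\indexRecursion$ in this range yields \eqref{eq:recursiveSumB}.

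For the closed form in \eqref{eq:cardinaltySum}, I would apply the linear change of variables $\separationDis[\indexSeparation]'\triangleq\separationDis[\indexSeparation]-\separationValue\in\integersNonnegativeSet$. The constraints become $\separationDis[\indexSeparation]'\ge 0$ and $\sum_{\indexSeparation=1}^{\numberofIndices}\separationDis[\indexSeparation]'=\distanceSumVar-\numberofIndices\separationValue$. When $\distanceSumVar<\numberofIndices\separationValue$, the right-hand side is negative and there are no solutions, which is the second branch. Otherwise, the number of non-negative integer compositions is given by the standard stars-and-bars formula $\binom{\distanceSumVar-\numberofIndices\separationValue+\numberofIndices-1}{\numberofIndices-1}$, which is the first branch.

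Finally, I would verify that the two expressions are consistent by substituting the closed form into the right-hand side of \eqref{eq:recursiveSumB} and applying the hockey-stick identity $\sum_{j=0}^{m}\binom{j+n}{n}=\binom{m+n+1}{n+1}$; the summation index $\indexRecursion$ shifts to $j=\distanceSumVar-\separationValue\numberofIndices+\separationValue-\indexRecursion$ with $j$ ranging from $0$ to $\distanceSumVar-\separationValue\numberofIndices$, and the summand becomes $\binom{j+\numberofIndices-2}{\numberofIndices-2}$, collapsing to $\binom{\distanceSumVar-\separationValue\numberofIndices+\numberofIndices-1}{\numberofIndices-1}$ as required.

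The main obstacle is bookkeeping rather than substance: carefully tracking how the shift $\separationDis[\indexSeparation]\mapsto\separationDis[\indexSeparation]-\separationValue$ interacts with the truncation range of $\indexRecursion$ in the recursion, and ensuring the degenerate case $\distanceSumVar<\numberofIndices\separationValue$ is consistently handled in both formulas (in the recursion, the sum is empty and evaluates to zero). Once these edge cases are aligned, the proof is a direct application of two textbook counting arguments.
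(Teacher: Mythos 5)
Your proposal is correct and follows essentially the same route as the paper: the recursion is obtained by conditioning on the last coordinate $\separationDis[\numberofIndices]$, and the closed form by a linear shift of variables that reduces the count to a standard composition formula. The only (minor, and arguably cleaner) difference is that you shift by $\separationValue$ to land on non-negative solutions and apply stars-and-bars uniformly, whereas the paper shifts by $\separationValue-1$ to land on positive compositions and must therefore treat $\separationValue=0$ as a separate case; your hockey-stick consistency check is additional but not needed.
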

\begin{proof}
	The cardinality of the set of  $(\separationDis[1],\separationDis[2],\cdots,\separationDis[\numberofIndices-1])$ is $\CardinalityDis[\numberofIndices-1,\separationValue][\distanceSumVar-{\separationDis[\numberofIndices]}]$ as $\separationDis[1]+\separationDis[2]+\cdots+\separationDis[\numberofIndices-1]=\distanceSumVar-\separationDis[\numberofIndices]$ for $\separationDis[\numberofIndices]\in\{\separationValue,\mydots,\distanceSumVar-\separationValue\numberofIndices+\separationValue\}$, which implies the recursive formula in \eqref{eq:recursiveSumB}.
	
	It is well-known that the number of compositions of $n$ into exactly $k$ parts is $\binom{n-1}{k-1}$, where each part is greater than $0$.We define the variable $\separationDis[\indexSeparation]'$ by setting  $\separationDis[\indexSeparation]'=\separationDis[\indexSeparation]-(\separationValue-1)$. 
	\begin{itemize}
			 \item Case 1 ($\separationValue\ge1$):  $\separationDis[1]'+\separationDis[2]'+\cdots+\separationDis[\numberofIndices]'=\distanceSumVar-\numberofIndices(\separationValue-1)$ holds. Hence, $\CardinalityDis[\numberofIndices,\separationValue][\distanceSumVar]$ must be equal to the number of compositions of $\distanceSumVar-\numberofIndices(\separationValue-1)$ into exactly $\numberofIndices$ parts, which implies \eqref{eq:cardinaltySum} for $\separationValue>1$. 
			 \item Case 2 ($\separationValue=0$): Since $\separationDis[1]'+\separationDis[2]'+\cdots+\separationDis[\numberofIndices]'=\distanceSumVar+\numberofIndices$ holds, $\CardinalityDis[\numberofIndices,1][\distanceSumVar]$ must be equal to $\binom{\distanceSumVar+\numberofIndices-1}{\numberofIndices-1}$.
	\end{itemize}
	If $\distanceSumVar<\separationValue\numberofIndices$, there exists no composition.
\end{proof}

\begin{theorem}
	For $\numberofIndices\ge2$,
	\begin{align}
		\cardinality[\numberofIndices,\separationValue][\numberOfShifts]=\begin{cases}
		\frac{\numberOfShifts}{\numberofIndices}\binom{\numberOfShifts-\numberofIndices\separationValue-1}{\numberofIndices-1},&\numberOfShifts\ge\numberofIndices(\separationValue+1)\\
		0,& \text{\rm otherwise}
		\end{cases}~.
		\label{eq:cardinalityA}
	\end{align}
		\label{th:safisEquation}
\end{theorem}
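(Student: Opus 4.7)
The plan is to recast this cyclic counting problem as a double count over (valid index sequence, distinguished chosen element) pairs, so that Lemma~\ref{th:cardinalitySum} can be invoked directly on the induced gap compositions.

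First I would record the key identity $\separationDis[1]+\separationDis[2]+\cdots+\separationDis[\numberofIndices]=\numberOfShifts-\numberofIndices$, which holds for every admissible index sequence because the circular gaps $\separationDis[\indexSeparation]$ partition the $\numberOfShifts-\numberofIndices$ unchosen positions among the $\numberofIndices$ chosen ones. Combined with $\separationDis[\indexSeparation]\ge\separationValue$, this shows that the gap vectors arising from admissible sequences are precisely the compositions of $\numberOfShifts-\numberofIndices$ into $\numberofIndices$ parts, each at least $\separationValue$. Applying Lemma~\ref{th:cardinalitySum} with $\distanceSumVar=\numberOfShifts-\numberofIndices$ gives exactly $\binom{\numberOfShifts-\numberofIndices\separationValue-1}{\numberofIndices-1}$ such compositions when $\numberOfShifts\ge\numberofIndices(\separationValue+1)$, and zero otherwise.

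Denote by $V$ the set counted by $\cardinality[\numberofIndices,\separationValue][\numberOfShifts]$ and by $P$ the set of admissible gap compositions just described. The core of the proof is an explicit bijection
\[
    f: V\times\{0,1,\dots,\numberofIndices-1\}\longrightarrow\{0,1,\dots,\numberOfShifts-1\}\times P,
\]
where the forward map sends $((\selectedChirpIndex[0],\dots,\selectedChirpIndex[\numberofIndices-1]),\naturalnum)$ to the pair $(\selectedChirpIndex[\naturalnum],(\separationDis[\naturalnum+1],\dots,\separationDis[\numberofIndices],\separationDis[1],\dots,\separationDis[\naturalnum]))$, i.e., it picks the distinguished chosen position as a ``starting point'' on the circle and cyclically shifts the gap vector so that its first entry is the gap immediately following that starting point. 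The inverse map takes $(j,(S'_1,\dots,S'_\numberofIndices))$ to the $\numberofIndices$ positions $j,\ j+S'_1+1,\ j+S'_1+1+S'_2+1,\ \dots \pmod{\numberOfShifts}$, sorts them to recover $(\selectedChirpIndex[0],\dots,\selectedChirpIndex[\numberofIndices-1])$, and identifies $\naturalnum$ as the unique index with $\selectedChirpIndex[\naturalnum]=j$. Counting both sides of the bijection yields $\numberofIndices\cdot\cardinality[\numberofIndices,\separationValue][\numberOfShifts] = \numberOfShifts\cdot|P|$, from which the closed form follows after dividing by $\numberofIndices$. The second branch $\numberOfShifts<\numberofIndices(\separationValue+1)$ is immediate since then $|P|=0$.

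The part requiring genuine care is well-definedness of the inverse. I would verify that the $\numberofIndices$ positions produced from $(j,(S'_1,\dots,S'_\numberofIndices))$ are pairwise distinct, which follows because each step advances by $S'_q+1\ge\separationValue+1\ge 1$ and the total advance $\sum_q(S'_q+1)=\numberOfShifts$ forces exactly one wrap around the circle of length $\numberOfShifts$; that the cyclic gaps of their sorted arrangement are each $\ge\separationValue$, inherited from $S'_q\ge\separationValue$; and that the two passes (sequence-to-composition and composition-to-sequence) return the same cyclic shift, so $f\circ f^{-1}$ and $f^{-1}\circ f$ are the identity. With this, the enumeration is complete.
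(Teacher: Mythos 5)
Your proof is correct, but it takes a genuinely different route from the paper's. The paper conditions on the first index $\selectedChirpIndex[0]$ and on the terminal portion of the wrap-around gap $\separationDis[\numberofIndices]=\selectedChirpIndex[0]+\distanceToEnd$, writes $\cardinality[\numberofIndices,\separationValue][\numberOfShifts]$ as a double sum of the composition counts $\CardinalityDis[\numberofIndices-1,\separationValue][\cdot]$ from Lemma~\ref{th:cardinalitySum}, collapses it via the recursion \eqref{eq:recursiveSumB} into $\separationValue\CardinalityDis[\numberofIndices,\separationValue][\numberOfShifts-\numberofIndices]+\CardinalityDis[\numberofIndices+1,\separationValue][\numberOfShifts-\numberofIndices+\separationValue]$, and finishes with a binomial identity. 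You instead use the classical cycle-lemma double count: pairs of (admissible sequence, distinguished chosen position) are counted once as $\numberofIndices\cdot\cardinality[\numberofIndices,\separationValue][\numberOfShifts]$ and once as $\numberOfShifts\cdot|P|$, where $|P|=\binom{\numberOfShifts-\numberofIndices\separationValue-1}{\numberofIndices-1}$ follows from Lemma~\ref{th:cardinalitySum} applied to $\distanceSumVar=\numberOfShifts-\numberofIndices$. Both arguments hinge on the same composition count, and your verification of the inverse map (distinctness of the generated positions, preservation of the gap lower bound, and consistency of the cyclic shift) is the right place to spend the care. What your route buys is a cleaner derivation: it sidesteps the double-sum manipulation and the final algebraic step entirely, and it makes the factor $\numberOfShifts/\numberofIndices$ transparent as an orbit-counting ratio. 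What the paper's route buys is that its intermediate expansion \eqref{eq:shortcut} is exactly the constructive decomposition reused in Section~\ref{subsubsec:bijection} to build the ranking/unranking functions $\algorithmEncoderA[\cdot]$ and $\algorithmDecoderA[\cdot]$; your bijection, while elegant, does not directly yield that lexicographic enumeration because the division by $\numberofIndices$ identifies orbits rather than ordering individual sequences.
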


\begin{proof}
Consider the following steps:
	
 Step 1: By the definition in \eqref{eq:sepDef}, the cardinality of the set of sequences $(\selectedChirpIndex[0],\separationDis[1],\mydots,\separationDis[\numberofIndices-1])$ is $\cardinality[\numberofIndices,\separationValue][\numberOfShifts]$ and
		$\separationDis[1]+\separationDis[2]+\cdots+\separationDis[\numberofIndices]=\numberOfShifts-\numberofIndices$. Since $\separationDis[\indexSeparation]\ge\separationValue$ for all $\indexSeparation$, the inequality given by
		\begin{align}
		\separationValue\le\separationDis[\numberofIndices]\le\upperLimit~,
		\label{eq:boundss}
		\end{align}
		holds for $\upperLimit\triangleq\numberOfShifts-\numberofIndices(\separationValue+1)+\separationValue$. 
		
		Step 2: By the definition in \eqref{eq:sepDef}, $\separationDis[\numberofIndices]$ can be expressed as $\separationDis[\numberofIndices]=\selectedChirpIndex[0]+\distanceToEnd$, where 
		$\distanceToEnd=\numberOfShifts-1-\selectedChirpIndex[\numberofIndices-1]$. 
		Therefore, for $\selectedChirpIndex[0]\in\{0,1,\mydots,\upperLimit\}$, the inequality in \eqref{eq:boundss} can be re-stated as
		\begin{align}
		\separationValue-\selectedChirpIndex[0]\le\distanceToEnd \le \upperLimit-\selectedChirpIndex[0]~.
		\label{eq:boundsss}
		\end{align}
		%
		
 Step 3: Since $\separationDis[\indexSeparation]\ge\separationValue$ for all $\indexSeparation$ and 		$\separationDis[1]+\separationDis[2]+\cdots+\separationDis[\numberofIndices]=\numberOfShifts-\numberofIndices$, the cardinality of the sequences $(\separationDis[1],\separationDis[2],\cdots,\separationDis[\numberofIndices-1])$ is $\CardinalityDis[\numberofIndices-1,\separationValue][\numberOfShifts-\numberofIndices-{\separationDis[\numberofIndices]}]$ for a given $\separationDis[\numberofIndices]$. 
		Hence, by using \eqref{eq:boundsss} and  Lemma~\ref{th:safisEquation}, $\cardinality[\numberofIndices,\separationValue][\numberOfShifts]$ can be expressed as
		\if \IEEEsubmission 0
			\begin{align}
			\cardinality[\numberofIndices,\separationValue][\numberOfShifts]&=\sum_{\selectedChirpIndex[0]=0}^{\upperLimit}\sum_{\distanceToEnd=\max(0,\separationValue-\selectedChirpIndex[0])}^{\upperLimit-\selectedChirpIndex[0]}\CardinalityDis[\numberofIndices-1,\separationValue][\numberOfShifts-\numberofIndices-{(\selectedChirpIndex[0]+\distanceToEnd)}]~,\nonumber\\
			=&\sum_{{\selectedChirpIndex[0]}=0}^{\separationValue-1}\CardinalityDis[\numberofIndices,\separationValue][\numberOfShifts-\numberofIndices]+\sum_{{\selectedChirpIndex[0]}=\separationValue}^{\upperLimit}\CardinalityDis[\numberofIndices,\separationValue][\numberOfShifts-\numberofIndices+\separationValue-{\selectedChirpIndex[0]}]\label{eq:shortcut}\\
			=&\separationValue\CardinalityDis[\numberofIndices,\separationValue][\numberOfShifts-\numberofIndices]+\CardinalityDis[\numberofIndices+1,\separationValue][\numberOfShifts-\numberofIndices+\separationValue]\nonumber\\
			=&\frac{\numberOfShifts}{\numberofIndices}\binom{\numberOfShifts-\numberofIndices\separationValue-1}{\numberofIndices-1}~.\nonumber
			\end{align}
		\else
			\begin{align}
			\cardinality[\numberofIndices,\separationValue][\numberOfShifts]&=\sum_{\selectedChirpIndex[0]=0}^{\upperLimit}\sum_{\distanceToEnd=\max(0,\separationValue-\selectedChirpIndex[0])}^{\upperLimit-\selectedChirpIndex[0]}\CardinalityDis[\numberofIndices-1,\separationValue][\numberOfShifts-\numberofIndices-{(\selectedChirpIndex[0]+\distanceToEnd)}]~,\nonumber\\
			=&\sum_{{\selectedChirpIndex[0]}=0}^{\separationValue-1}\CardinalityDis[\numberofIndices,\separationValue][\numberOfShifts-\numberofIndices]+\sum_{{\selectedChirpIndex[0]}=\separationValue}^{\upperLimit}\CardinalityDis[\numberofIndices,\separationValue][\numberOfShifts-\numberofIndices+\separationValue-{\selectedChirpIndex[0]}]\label{eq:shortcut}\\
			=&\separationValue\CardinalityDis[\numberofIndices,\separationValue][\numberOfShifts-\numberofIndices]+\CardinalityDis[\numberofIndices+1,\separationValue][\numberOfShifts-\numberofIndices+\separationValue]
			=\frac{\numberOfShifts}{\numberofIndices}\binom{\numberOfShifts-\numberofIndices\separationValue-1}{\numberofIndices-1}~.\nonumber
			\end{align}
		\fi
 There exists no valid sequence for $\numberOfShifts<\numberofIndices(\separationValue+1)$, i.e., $\cardinality[\numberofIndices,\separationValue][\numberOfShifts]=0$. 
\end{proof}

For a given $\separationValue$, the \ac{SE} of the \ac{CSC-IM} can now be calculated as $\spectralEfficiency=\floor{{\rm log}_2(\cardinality[\numberofIndices,\separationValue][\numberOfShifts]\times \numberOfPointsForPSK^\numberofIndices)}/\numberOfShifts$. In \figurename~\ref{fig:IS}\subref{subfig:numberOfBits}, we show the trade-off between $\separationValue$ and the maximum number of information bits that are encoded with the indices, i.e., ${\rm log}_2(\cardinality[\numberofIndices,\separationValue][\numberOfShifts])$. As expected, a larger $\separationValue$ means a lower number of information bits that can be transmitted on the indices. The degradation is more rapid with a larger $\numberofIndices$ although the number of information bits is larger for smaller values of $\separationValue$.

\begin{figure*}
	\centering
	\subfloat[Trade-off between separation and the number of information bits on the indices.]{\includegraphics[width =3.3in]{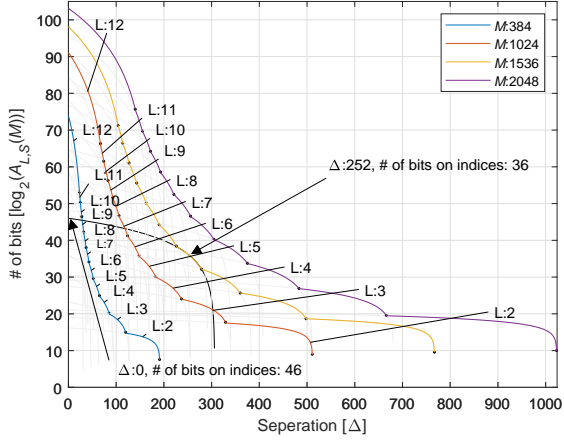}
		\label{subfig:numberOfBits}}~
	\subfloat[$\separationValueMax$ versus $\numberOfShifts$. The distance between two indices can be as large as $\numberOfShifts/4$ without losing \ac{SE} for $\numberofIndices=2$.]{\includegraphics[width =3.3in]{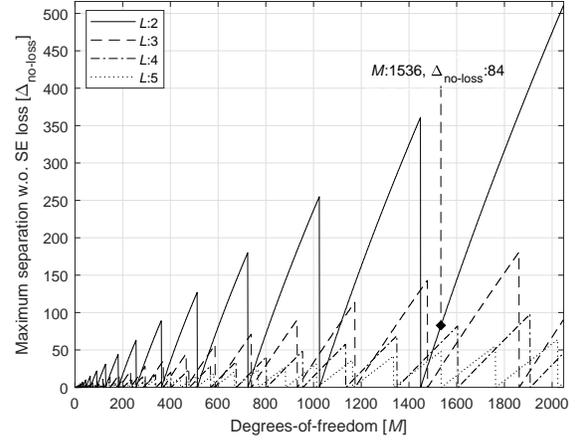}
	\label{subfig:smaxvsm}}~
	\caption{The impact of \ac{IS} on the \ac{SE}.}
	\label{fig:IS}
\if\IEEEsubmission1
\vspace{-3mm}
\fi
\end{figure*}
One interesting question is that what is the largest $\separationValue$ such that the \ac{SE} still remains at the maximum for a given $\numberofIndices$? To address this question, let	$\separationValueMax$ denote the largest separation such that $\floor{\log_{2}{\binom{\numberOfShifts}{\numberofIndices}}}=\floor{\log_{2}{\cardinality[\numberofIndices,\separationValueMax][\numberOfShifts]}}$. Hence,  $\cardinality[\numberofIndices,\separationValueMax][\numberOfShifts]\geq2^{\floor{\log_{2}{\binom{\numberOfShifts}{\numberofIndices}}}}$ must hold. Since Theorem~\ref{th:safisEquation} provides a closed-form solution, $\separationValueMax$ can be evaluated with a computer search.

In \figurename~\ref{fig:IS}\subref{subfig:smaxvsm}, we plot $\separationValueMax$ for a given $\numberOfShifts$. The surprising result is that the separation between two adjacent indices can be as large as $\numberOfShifts/4-1$ without losing \ac{SE} for $\numberofIndices=2$. For instance, for $\numberOfShifts=2^{\naturalnum}$, where $\naturalnum\in\integersPositiveSet$, $\separationValueMax$ reaches its maximum value, i.e., $\separationValueMax=\numberOfShifts/4-1$. In other words, the duration of the low \ac{AC}  zone can be as large as a typical \ac{CP} size  as $\separationValueMax/ \numberOfShifts = \CPSize/\idftSize$ can be maintained.
The value of $\numberOfShifts$ where $\separationValueMax$ reaches at its maximum  depends on $\numberofIndices$. For instance, the values of $\numberOfShifts$ are $931$, $954$, and $1012$ for $\numberofIndices=3$,  $\numberofIndices=4$, and $\numberofIndices=5$, respectively, and the corresponding values for $\separationValueMax$ are $90$, $48$, and $31$.
The ratios between $\separationValueMax$ and $\numberOfShifts$ approach to $1/10.25$, $1/19.4$, and $1/32.45$ for $\numberofIndices=3$, $\numberofIndices=4$, and $\numberofIndices=5$, respectively. 
We also observe abrupt changes in $\separationValueMax$ for different values of $\numberOfShifts$. For example, $\separationValueMax$ becomes minimum, i.e., $\separationValueMax=0$, for $\numberOfShifts=2^{\naturalnum}+1$ for $\numberOfShifts=2$. This behavior is because the number of bits encoded on the  indices increases by $1$ when  $\numberOfShifts$ increases by $1$.

\subsubsection{Bijective mappings between natural numbers and indices under IS}
\label{subsubsec:bijection}
To develop an encoder and a decoder for \ac{CSC-IM} by taking the \ac{IS} into account, information bits need to be mapped to the indices under a separation constraint or vice versa. We address this issue by deriving the mapping rules from a positive integer $\anaturalNumberToSeqeunceA$ to the indices  or vice versa for a given $\separationValue$, where the information bits can be converted to $\anaturalNumberToSeqeunceA$ through a binary to decimal conversion. The following definitions are needed:
\begin{definition} \rm  The  function  
 $\algorithmEncoderA[\anaturalNumberToSeqeunceA,\numberOfShifts,\numberofIndices,\separationValue]$ maps a positive integer $\anaturalNumberToSeqeunceA$ to the sequence $(\selectedChirpIndex[0],\selectedChirpIndex[1],\mydots,\selectedChirpIndex[\numberofIndices-1])$ for given $\numberOfShifts$, $\numberofIndices$, and $\separationValue$.
\end{definition}

\begin{definition}
	\rm The function
	 $\algorithmEncoderB[\anaturalNumberToSeqeunceB,\distanceSumVar,\numberofIndices,\separationValue]$  maps a positive integer $\anaturalNumberToSeqeunceB$  to the sequence  $(\separationDis[1],\separationDis[2],\mydots,\separationDis[\numberofIndices])$ for given $\distanceSumVar$, $\numberofIndices$, and $\separationValue$.
\end{definition}

\begin{definition} \rm
The inverse functions of $\algorithmEncoderA[\anaturalNumberToSeqeunceA,\numberOfShifts,\numberofIndices,\separationValue]$ and $\algorithmEncoderB[\anaturalNumberToSeqeunceB,\distanceSumVar,\numberofIndices,\separationValue]$ are $\algorithmDecoderA[{(\selectedChirpIndex[0],\selectedChirpIndex[1],\mydots,\selectedChirpIndex[\numberofIndices-1])},\numberOfShifts,\numberofIndices,\separationValue]$ and $\algorithmDecoderB[{(\separationDis[1],\mydots,\separationDis[\numberofIndices])},\distanceSumVar,\numberofIndices,\separationValue]$, respectively.
\end{definition}

\if\IEEEsubmission0
\def\baselineSize{1}
\renewcommand{\baselinestretch}{1}
\else
\def\baselineSize{1.5}
\renewcommand{\baselinestretch}{0.8}
\fi
\begin{algorithm}[t]{
		\scriptsize
		\caption{\small Mapping algorithms for $\separationValue\ge0$}\label{alg:enum}
		\SetKwInput{KwInput}{Input}                
		\SetKwInput{KwOutput}{Output}              
		\DontPrintSemicolon
		
		
		\SetKwFunction{FencA}{$\algorithmEncoderA[\anaturalNumberToSeqeunceA,\numberOfShifts,\numberofIndices,\separationValue]$}
		\SetKwFunction{FencB}{$\algorithmEncoderB[\anaturalNumberToSeqeunceB,\distanceSumVar,\numberofIndices,\separationValue]$}	
		\SetKwFunction{FdecA}{$\algorithmDecoderA[{(\selectedChirpIndex[0],\selectedChirpIndex[1],\mydots,\selectedChirpIndex[\numberofIndices-1])},\numberOfShifts,\numberofIndices,\separationValue]$}
		\SetKwFunction{FdecB}{$\algorithmDecoderB[{(\separationDis[1],\mydots,\separationDis[\numberofIndices])},\distanceSumVar,\numberofIndices,\separationValue]$}

		\SetKwProg{Fn}{Function}{}{}
		\Fn{${(\selectedChirpIndex[0],\selectedChirpIndex[1],\mydots,\selectedChirpIndex[\numberofIndices-1])}=\FencA$}{
		Obtain the largest $\anumber$ for $\anumberAfterSum[\anumber]=\sum_{{\distanceToBegin}=0}^{\anumber}\functionB[{\distanceToBegin}]<\anaturalNumberToSeqeunceA$\;
		$\selectedChirpIndex[0]=\anumber$\;
		\eIf{${\selectedChirpIndex[0]}<\separationValue$}{
			 $(\separationDis[1],\mydots,\separationDis[\numberofIndices])=\algorithmEncoderB[{\anumberAfterSum[{\selectedChirpIndex[0]}]}-\anaturalNumberToSeqeunceA,\numberOfShifts-\numberofIndices,\numberofIndices,\separationValue]$
		}{
			 $(\separationDis[1],\mydots,\separationDis[\numberofIndices])=\algorithmEncoderB[{\anumberAfterSum[{\selectedChirpIndex[0]}]}-\anaturalNumberToSeqeunceA, {\numberOfShifts-\numberofIndices+\separationValue-\selectedChirpIndex[0]},
			\numberofIndices,\separationValue]$ 
		}
		Obtain $(\selectedChirpIndex[0],\selectedChirpIndex[1],\mydots,\selectedChirpIndex[\numberofIndices-1])$ from $(\selectedChirpIndex[0],\separationDis[1],\mydots,\separationDis[\numberofIndices-1])$\;
		} 
		\SetKwProg{Fn}{Function}{}{}
		\Fn{${(\separationDis[1],\separationDis[2],\mydots,\separationDis[\numberofIndices])}=\FencB$}{
		\eIf{$\numberofIndices>1$}
		{
			Obtain the largest $\anumber$ for $\anumberAfterSumB[\anumber]=\sum_{\indexRecursion=\separationValue}^{\anumber} \CardinalityDis[\numberofIndices-1,\separationValue][\distanceSumVar-{\indexRecursion}]<\anaturalNumberToSeqeunceB$\;
			$\separationDis[\numberofIndices]=\anumber$\;			
			 $(\separationDis[1],\mydots,\separationDis[\numberofIndices-1])=\algorithmEncoderB[\anaturalNumberToSeqeunceB-{\anumberAfterSumB[\anumber]},\distanceSumVar-{\separationDis[\numberofIndices]},\numberofIndices-1,\separationValue]$
		}{
			$\separationDis[1]=\anaturalNumberToSeqeunceB$\;		 
		}
}
		\SetKwProg{Fn}{Function}{}{}
\Fn{$\anaturalNumberToSeqeunceA=\FdecA$}{
	Obtain $(\separationDis[1],\mydots,\separationDis[\numberofIndices])$ from $(\selectedChirpIndex[0],\selectedChirpIndex[1],\mydots,\selectedChirpIndex[\numberofIndices-1])$\;
	\eIf{${\selectedChirpIndex[0]}<\separationValue$}
	{
 $\anaturalNumberToSeqeunceB=\algorithmDecoderB[{(\separationDis[1],\mydots,\separationDis[\numberofIndices])},\numberOfShifts-\numberofIndices,\numberofIndices,\separationValue]$
	}{
				 $\anaturalNumberToSeqeunceB=\algorithmDecoderB[{(\separationDis[1],\mydots,\separationDis[\numberofIndices])},{\numberOfShifts-\numberofIndices+\separationValue-\selectedChirpIndex[0]},\numberofIndices,\separationValue]$
	}
	$\anaturalNumberToSeqeunceA=\anumberAfterSum[{\selectedChirpIndex[0]-1}]+\anaturalNumberToSeqeunceB$
}
		\SetKwProg{Fn}{Function}{}{}
\Fn{$\anaturalNumberToSeqeunceB=\FdecB$}{

	\eIf{$\numberofIndices>1$}
	{
	$\anaturalNumberToSeqeunceB=\anumberAfterSumB[{\separationDis[\numberofIndices]}]+\algorithmDecoderB[{(\separationDis[1],\mydots,\separationDis[\numberofIndices-1])},\distanceSumVar-{\separationDis[\numberofIndices]},\numberofIndices-1,\separationValue]$
	}{
	$\anaturalNumberToSeqeunceB= 1$
	}
}
	}
\end{algorithm}
\begin{table}
	\caption{The enumerations of $\algorithmEncoderA[\anaturalNumberToSeqeunceA,\numberOfShifts,\numberofIndices,\separationValue]$ for $\numberOfShifts=10$, $\numberofIndices=3$, and $\separationValue\in\{0,1,2\}$.}
	\centering
	\begin{tabular}{c||ccc||ccc||ccc}
		& \multicolumn{3}{c||}{$\separationValue=0$}                                     & \multicolumn{3}{c||}{$\separationValue=1$}                                     & \multicolumn{3}{c}{$\separationValue=2$}                                        \\ 
		\hhline{=::===::===::===}
		$n$                   & $\selectedChirpIndex[0]$ & $\selectedChirpIndex[1]$ & $\selectedChirpIndex[2]$ & $\selectedChirpIndex[0]$ & $\selectedChirpIndex[1]$ & $\selectedChirpIndex[2]$ & $\selectedChirpIndex[0]$ & $\selectedChirpIndex[1]$ & $\selectedChirpIndex[2]$  \\ 
		\hline
		1                     & 0                        & 8                        & 9                        & 0                        & 6                        & 8                        & 0                        & 4                        & 7                         \\
		2                     & 0                        & 7                        & 9                        & 0                        & 5                        & 8                        & 0                        & 3                        & 7                         \\
		3                     & 0                        & 6                        & 9                        & 0                        & 4                        & 8                        & 0                        & 3                        & 6                         \\
		4                     & 0                        & 5                        & 9                        & 0                        & 3                        & 8                        & 1                        & 5                        & 8                         \\
		5                     & 0                        & 4                        & 9                        & 0                        & 2                        & 8                        & 1                        & 4                        & 8                         \\
		6                     & 0                        & 3                        & 9                        & 0                        & 5                        & 7                        & 1                        & 4                        & 7                         \\
		7                     & 0                        & 2                        & 9                        & 0                        & 4                        & 7                        & 2                        & 6                        & 9                         \\
		8                     & 0                        & 1                        & 9                        & 0                        & 3                        & 7                        & 2                        & 5                        & 9                         \\
		9                     & 0                        & 7                        & 8                        & 0                        & 2                        & 7                        & 2                        & 5                        & 8                         \\
		10                    & 0                        & 6                        & 8                        & 0                        & 4                        & 6                        & 3                        & 6                        & 9                         \\ 
		\cline{8-10}
		$\vdots$ & \multicolumn{3}{c||}{$\vdots$}                                                 & \multicolumn{3}{c||}{$\vdots$}                                                 & \multicolumn{3}{c}{}                                                            \\
		50                    & 1                        & 6                        & 7                        & 5                        & 7                        & 9                        & \multicolumn{3}{c}{N/A}                                                         \\ 
		\cline{5-7}
		$\vdots$              & \multicolumn{3}{c||}{$\vdots$}                                                 & \multicolumn{3}{c||}{N/A}                                                      & \multicolumn{3}{c}{}                                                            \\
		120                   & 7                        & 8                        & 9                        & \multicolumn{3}{c||}{}                                                         & \multicolumn{3}{c}{}                                                            \\ 
		\hline
		Cardinality           & \multicolumn{3}{c||}{120}                                                      & \multicolumn{3}{c||}{50}                                                       & \multicolumn{3}{c}{10}                                                         
	\end{tabular}
	\label{table:example}
\end{table}
\renewcommand{\baselinestretch}{\baselineSize}

For $\algorithmEncoderA[\anaturalNumberToSeqeunceA,\numberOfShifts,\numberofIndices,\separationValue]$, we use the expansion in \eqref{eq:shortcut}.
We first determine $\selectedChirpIndex[0]$ as the maximum value of $\anumber$ such that $\anumberAfterSum[\anumber]\triangleq\sum_{{\selectedChirpIndex[0]}=0}^{\anumber}\functionB[{\selectedChirpIndex[0]}]<\anaturalNumberToSeqeunceA$, where $\functionB[{\selectedChirpIndex[0]}]\triangleq\CardinalityDis[\numberofIndices,\separationValue][\numberOfShifts-\numberofIndices+\min(0,\separationValue-{\selectedChirpIndex[0]})]$. We then obtain $(\separationDis[1],\mydots,\separationDis[\numberofIndices])$ by using $\algorithmEncoderB[{\anumberAfterSum[{\selectedChirpIndex[0]}]}-\anaturalNumberToSeqeunceA,\numberOfShifts-\numberofIndices,\numberofIndices,\separationValue]$ for  ${\selectedChirpIndex[0]}<\separationValue$ and 
 $\algorithmEncoderB[{\anumberAfterSum[{\selectedChirpIndex[0]}]}-\anaturalNumberToSeqeunceA, {\numberOfShifts-\numberofIndices+\separationValue-\selectedChirpIndex[0]},
 \numberofIndices,\separationValue]$  for  ${\selectedChirpIndex[0]}\ge\separationValue$. We finally calculate  $\selectedChirpIndex[\indexChirp]=\selectedChirpIndex[0]+\sum_{j=1}^{\indexChirp}(1+\separationDis[j])$ for $\indexChirp\in\{1,2,\mydots,\numberofIndices-1\}$. 
 
 For $\algorithmEncoderB[\anaturalNumberToSeqeunceB,\distanceSumVar,\numberofIndices,\separationValue]$, we exploit the sum in \eqref{eq:recursiveSumB} and obtain $\separationDis[\numberofIndices]$ as the maximum value of $\anumber$ such that $\anumberAfterSumB[\anumber]\triangleq\sum_{\indexRecursion=\separationValue}^{\anumber} \CardinalityDis[\numberofIndices-1,\separationValue][\distanceSumVar-{\indexRecursion}]<\anaturalNumberToSeqeunceB$. Since determining  $\separationDis[\numberofIndices]$ reduces the original problem from $\numberofIndices$ parts to $\numberofIndices-1$ parts, where the new sum is $\distanceSumVar-\separationDis[\numberofIndices]$, $\algorithmEncoderB[\anaturalNumberToSeqeunceB,\distanceSumVar,\numberofIndices,\separationValue]$ recalls itself as $\algorithmEncoderB[\anaturalNumberToSeqeunceB-{\anumberAfterSumB[\anumber]},\distanceSumVar-{\separationDis[\numberofIndices]},\numberofIndices-1,\separationValue]$ to obtain $(\separationDis[1],\mydots,\separationDis[\numberofIndices-1])$. This procedure is recursive and continues till $\numberofIndices=1$. For $\numberofIndices=1$, $\algorithmEncoderB[\anaturalNumberToSeqeunceB,\distanceSumVar,\numberofIndices,\separationValue]$ returns $\separationDis[1]=\anaturalNumberToSeqeunceB$.
 
The function $\algorithmDecoderA[{(\selectedChirpIndex[0],\selectedChirpIndex[1],\mydots,\selectedChirpIndex[\numberofIndices-1])},\numberOfShifts,\numberofIndices,\separationValue]$ first calculates $\anumberAfterSum[{\selectedChirpIndex[0]-1}]$. Afterwards, it obtains $(\separationDis[1],\mydots,\separationDis[\numberofIndices-1])$ from $(\selectedChirpIndex[0],\selectedChirpIndex[1],\mydots,\selectedChirpIndex[\numberofIndices-1])$. 
 Finally, it returns $\anaturalNumberToSeqeunceA=\anumberAfterSum[{\selectedChirpIndex[0]}]+\anaturalNumberToSeqeunceB$, where $\anaturalNumberToSeqeunceB$ is $\algorithmDecoderB[{(\separationDis[1],\mydots,\separationDis[\numberofIndices])},\numberOfShifts-\numberofIndices,\numberofIndices,\separationValue]$ for $\selectedChirpIndex[0]<\separationValue$ and $\algorithmDecoderB[{(\separationDis[1],\mydots,\separationDis[\numberofIndices])},{\numberOfShifts-\numberofIndices+\separationValue-\selectedChirpIndex[0]},\numberofIndices,\separationValue]$ for $\selectedChirpIndex[0]\ge\separationValue$ based on \eqref{eq:shortcut}. 

The function $\algorithmDecoderB[{(\separationDis[1],\mydots,\separationDis[\numberofIndices])},\distanceSumVar,\numberofIndices,\separationValue]$ first calculates $\anumberAfterSumB[{\separationDis[\numberofIndices]}]$. It then returns the result as  $\anaturalNumberToSeqeunceB=\anumberAfterSumB[{\separationDis[\numberofIndices]}]+\algorithmDecoderB[{(\separationDis[1],\mydots,\separationDis[\numberofIndices-1])},\numberOfShifts-{\separationDis[\numberofIndices]},\numberofIndices-1,\separationValue]$. For $\numberofIndices=1$, $\algorithmDecoderB[{(\separationDis[1],\mydots,\separationDis[\numberofIndices])},\distanceSumVar,\numberofIndices,\separationValue]$ is $1$.

The pseudocodes for the mapping algorithms are provided in Algorithm~\ref{alg:enum}. As the closed-form expressions of $\cardinality[\numberofIndices,\separationValue][\numberOfShifts]$ and $\CardinalityDis[\numberofIndices,\separationValue][\distanceSumVar]$ are available in Theorem~\ref{th:safisEquation} and Lemma~\ref{th:cardinalitySum}, respectively, the time complexity of these algorithms linearly scales with $\numberOfShifts$, $\numberofIndices$, and $1/\separationValue$. The algorithms can also be efficiently implemented as they rely on recursions. 

In \tablename~\ref{table:example}, we exemplify the output of $\algorithmEncoderA[\anaturalNumberToSeqeunceA,\numberOfShifts,\numberofIndices,\separationValue]$ for $\numberOfShifts=10$, $\numberofIndices=3$, and $\separationValue\in\{0,1,2\}$. For instance, for $\separationValue=2$, there are at least $2$ integers between any two adjacent indices and there are $10$ valid sequences. Hence, $3$ information bits can be encoded by using the decimal number converted from the binary number constructed with the information bits.

\subsubsection{Impact of IS on the communication receiver performance}
\label{subsubsec:ISerrorrate}
The \ac{ML} detector under the \ac{IS} can be expressed as
\begin{align}
    \{ \indexSequenceDetect, \pskSerialSequenceDetect\} = \arg\max_{\substack{\selectedChirpIndexDomain[\indexChirp]\in\{0,\mydots,\numberOfShifts-1\}\\ 		 	    
    		\symbolPSKserialdomain[\indexChirp]\in\integers_\numberOfPointsForPSK  \\ 
    		\selectedChirpIndexDomain[\chirpm]<\selectedChirpIndexDomain[\chirpn]  \textrm{ for }  \chirpm<\chirpn \\
    		{\separationDis[\indexSeparation]\ge\separationValue  \textrm{ for } \indexSeparation\in{1,\mydots,\numberofIndices}} }} \Re\left\{ \sum_{\indexChirp=0}^{\numberofIndices -1}{\dataSymbolAfterIDFTspread[{\selectedChirpIndexDomain[\indexChirp]}]\constante^{-\constantj2\pi\symbolPSKserialdomain[\indexChirp]/\numberOfPointsForPSK}}\right\}~,
    \label{eq:mldetectorIS}
\end{align}
where the condition $\separationDis[\indexSeparation]\ge\separationValue$ reduces the search space. A low-complexity receiver based on \eqref{eq:mldetectorIS} can be implemented as follows:
\begin{itemize}
\item Obtain $\{\selectedChirpIndexDetect[0],\symbolPSKserialdetect[0]\}$ that maximizes $\datasymbolestimate[\indexDataSymbolRX][\indexPSK]$ for $\indexDataSymbolRX\in\{0,1,\mydots,\numberOfShifts-1\}$  and $\indexPSK\in\integers_\numberOfPointsForPSK$.
\item Calculate $\{\selectedChirpIndexDetect[\indexChirp],\symbolPSKserialdetect[\indexChirp]\}$ that maximizes $\datasymbolestimate[\indexDataSymbolRX][\indexPSK]$  for $\indexDataSymbolRX\in\{0,1,\mydots,\numberOfShifts-1\}$  and $\indexPSK\in\integers_\numberOfPointsForPSK$ such that   $\min(|\selectedChirpIndexDetect[\indexChirp]-\selectedChirpIndexDetect[\indexChirp']|,\numberOfShifts-|\selectedChirpIndexDetect[\indexChirp]-\selectedChirpIndexDetect[\indexChirp']|)\ge\separationValue+1$ for all $0\le\indexChirp'\le\indexChirp-1$ till detecting the $(\numberofIndices-1)$th index and the corresponding \ac{PSK} symbol. 
\item Re-order the detected chirp and \ac{PSK} symbol indices such that $\selectedChirpIndexDetect[\chirpm]<\selectedChirpIndexDetect[\chirpn]  \textrm{ for }  \chirpm<\chirpn$.
\end{itemize}
Note that the \ac{IS} can slightly decrease the error rate since it restricts the valid index combinations.

\subsection{Solution \#2: Range Estimation over LMMSE Channel Estimate }
For this solution, we remove the impact of the waveform on the range estimation by using the \ac{LMMSE} estimate of $\channelFVector$, i.e., $\channelFVectorEst=\completeMatrix^{\rm H}(\completeMatrix\completeMatrix^{\rm H}+\noiseVariance \identityMatrix)^{-1}\rxSymbolsVector$ . For a single target, the estimate of $\timeArrivalEst[1]$  can then be obtained as
\begin{align}
	\timeArrivalEst[1] = \arg\max_{\timeArrivalSlack[1]} |\Re \{\delayVector[{\timeArrivalSlack[1]}]^{\rm H} \completeMatrix^{\rm H}(\completeMatrix\completeMatrix^{\rm H}+\noiseVariance \identityMatrix)^{-1}\rxSymbolsVector \}|~,
\end{align}
where $\reflectioncoefficientEst[1]=  \Re \{\delayVector[{\timeArrivalEst[1] }]^{\rm H} \completeMatrix^{\rm H}\rxSymbolsVector \}/(\symbolsVectorinFrequency^{\rm H}\symbolsVectorinFrequency+\noiseVariance)$.  For multiple targets, we also consider the iterative procedure in \eqref{eq:iterations} and \eqref{eq:iterationsAfter}. The main disadvantage of this method is that it does not attain the \ac{CRLB} of ranges if the waveform in the frequency domain is not unimodular as demonstrated  in Section~\ref{sec:numresults}. In addition, it requires an accurate estimation of the noise variance.

\def\realPartU[#1][#2]{\mu_{#1#2}}
\def\imagPartV[#1][#2]{\nu_{#1#2}}
\def\fisherInformationEle[#1][#2]{J_{#1#2}}
\def\unknownVector{{\rm \bf p}}
\def\unknownVectorEle[#1]{p_{#1}}
\def\RMSE{\sigma_{\text{range}}}
\def\RMSEref{\sigma_{\text{coeff}}}
\def\expectedValue[#1]{\mathbb{E}\left\{#1\right\}}
\subsection{CRLB for Range and Reflection Coefficients}
\label{subsec:CRLB}
To derive \ac{CRLB} for range and reflection coefficients, we follow a similar approach proposed in \cite{Rife_1974}. We first re-express $\receivedSignalDiscreteFrequency[\indexSubcarrier]$ as
\if \IEEEsubmission 0
	\begin{align}
		\receivedSignalDiscreteFrequency[\indexSubcarrier] = \channelfreqresponse[\indexSubcarrier]\symbolsInFrequency[\indexSubcarrier]+\noiseDiscrete[\indexSubcarrier]
		=&\sum_{\targetindex=1}^{\numberoftargets}\reflectioncoefficient[\targetindex]|\symbolsInFrequency[\indexSubcarrier]|\constante^{-\constantj2\pi(\fcarrier+\frac{\indexSubcarrier}{\symbolDuration})\timeArrival[\targetindex]+\constantj\angle\symbolsInFrequency[\indexSubcarrier]}+\noiseDiscrete[\indexSubcarrier]\nonumber\\
		=&\sum_{\targetindex=1}^{\numberoftargets}\realPartU[\indexSubcarrier][\targetindex]+\constantj\imagPartV[\indexSubcarrier][\targetindex]+\noiseDiscrete[\indexSubcarrier]~,
		\label{eq:rxFreqSymbolsCRLB}
	\end{align}
\else
	\begin{align}
	\receivedSignalDiscreteFrequency[\indexSubcarrier] = \channelfreqresponse[\indexSubcarrier]\symbolsInFrequency[\indexSubcarrier]+\noiseDiscrete[\indexSubcarrier]
	=\sum_{\targetindex=1}^{\numberoftargets}\reflectioncoefficient[\targetindex]|\symbolsInFrequency[\indexSubcarrier]|\constante^{-\constantj2\pi(\fcarrier+\frac{\indexSubcarrier}{\symbolDuration})\timeArrival[\targetindex]+\constantj\angle\symbolsInFrequency[\indexSubcarrier]}+\noiseDiscrete[\indexSubcarrier]
	=\sum_{\targetindex=1}^{\numberoftargets}\realPartU[\indexSubcarrier][\targetindex]+\constantj\imagPartV[\indexSubcarrier][\targetindex]+\noiseDiscrete[\indexSubcarrier]~,
	\label{eq:rxFreqSymbolsCRLB}
	\end{align}
\fi 
where $\realPartU[\indexSubcarrier][\targetindex]=\reflectioncoefficient[\targetindex]|\symbolsInFrequency[\indexSubcarrier]|\cos(-2\pi(\fcarrier+\frac{\indexSubcarrier}{\symbolDuration})\timeArrival[\targetindex]+\angle\symbolsInFrequency[\indexSubcarrier])$ and $\imagPartV[\indexSubcarrier][\targetindex]=\reflectioncoefficient[\targetindex]|\symbolsInFrequency[\indexSubcarrier]|\sin(-2\pi(\fcarrier+\frac{\indexSubcarrier}{\symbolDuration})\timeArrival[\targetindex]+\angle\symbolsInFrequency[\indexSubcarrier])$. Let $\unknownVector$ be the vector that contains the unknown parameters as $\unknownVector=[\unknownVectorEle[1],\mydots, \unknownVectorEle[2\numberoftargets]]=[\timeArrival[1], \mydots, \timeArrival[\numberoftargets],\reflectioncoefficient[1], \mydots, \reflectioncoefficient[\numberoftargets]]$. The element on the $i$th  row and $j$th column of the $2\numberoftargets\times2\numberoftargets$ \ac{FIM} can then be calculated as
$
\fisherInformationEle[i][j]=\frac{2}{\noiseVariance}\sum_{\targetindex=1}^{\numberoftargets}\sum_{\indexSubcarrier=\lowerFrequency}^{\upperFrequency}\frac{\partial \realPartU[\indexSubcarrier][\targetindex]}{\partial \unknownVectorEle[i]}\frac{\partial \realPartU[\indexSubcarrier][\targetindex]}{\partial \unknownVectorEle[j]}+\frac{\partial \imagPartV[\indexSubcarrier][\targetindex]}{\partial \unknownVectorEle[i]}\frac{\partial \imagPartV[\indexSubcarrier][\targetindex]}{\partial \unknownVectorEle[j]}
$. By evaluating the derivatives, $\fisherInformationEle[i][j]$ can be obtained as
\begin{align}
	\fisherInformationEle[i][j] = 
	\begin{cases}
\frac{8\pi^2\reflectioncoefficient[i]^2}{\noiseVariance}\sum_{\indexSubcarrier=\lowerFrequency}^{\upperFrequency}|\symbolsInFrequency[\indexSubcarrier]|^2(\frac{\indexSubcarrier}{\symbolDuration}+\fcarrier)^2, & i=j\le\numberoftargets\\
		\frac{2\reflectioncoefficient[i-\numberoftargets]^2}{\noiseVariance}\sum_{\indexSubcarrier=\lowerFrequency}^{\upperFrequency}|\symbolsInFrequency[\indexSubcarrier]|^2, & i=j>\numberoftargets\\
		0 & \text{otherwise}
	\end{cases}.
\label{eq:fimele}
\end{align}
The \acp{CRLB} for the unknown parameters are the diagonal elements of the inverse of the \ac{FIM}. Since the \ac{FIM} is a diagonal matrix, the unbiased \ac{CRLB} of the ranges and the \ac{CRLB} of the reflection coefficients are given by
\if \IEEEsubmission 0 
	\begin{align}
	\RMSE^2 \triangleq & \expectedValue[{\sum_{\targetindex=1}^{\numberoftargets}|\distance[\targetindex]-\distanceEst[\targetindex]|^2}]\ge \frac{\speedoflight^2}{4}
	\sum_{i=1}^{\numberoftargets}\frac{1}{\fisherInformationEle[i][i]}\nonumber\\
	=& \frac{\noiseVariance\speedoflight^2}{32\pi^2\sum_{\indexSubcarrier=\lowerFrequency}^{\upperFrequency}|\symbolsInFrequency[\indexSubcarrier]|^2(\frac{\indexSubcarrier}{\symbolDuration}+\fcarrier)^2}\sum_{\targetindex=1}^{\numberoftargets}\frac{1}{\reflectioncoefficient[\targetindex]^2}~,
	\label{eq:rmsedistance}
	\end{align}
\else
	\begin{align}
		\RMSE^2 \triangleq & \expectedValue[{\sum_{\targetindex=1}^{\numberoftargets}|\distance[\targetindex]-\distanceEst[\targetindex]|^2}]\ge \frac{\speedoflight^2}{4}
		\sum_{i=1}^{\numberoftargets}\frac{1}{\fisherInformationEle[i][i]}
		= \frac{\noiseVariance\speedoflight^2}{32\pi^2\sum_{\indexSubcarrier=\lowerFrequency}^{\upperFrequency}|\symbolsInFrequency[\indexSubcarrier]|^2(\frac{\indexSubcarrier}{\symbolDuration}+\fcarrier)^2}\sum_{\targetindex=1}^{\numberoftargets}\frac{1}{\reflectioncoefficient[\targetindex]^2}~,
		\label{eq:rmsedistance}
	\end{align}
\fi
and
\if \IEEEsubmission 0
\begin{align}
	\RMSEref^2 \triangleq & \expectedValue[{\sum_{\targetindex=1}^{\numberoftargets}|\reflectioncoefficient[\targetindex]-\reflectioncoefficientEst[\targetindex]|^2}]\nonumber\ge
	\sum_{i=\numberoftargets+1}^{2\numberoftargets}\frac{1}{\fisherInformationEle[i][i]}\\
	=& \frac{\noiseVariance}{2\sum_{\indexSubcarrier=\lowerFrequency}^{\upperFrequency}|\symbolsInFrequency[\indexSubcarrier]|^2}\sum_{\targetindex=1}^{\numberoftargets}\frac{1}{\reflectioncoefficient[\targetindex]^2}~,
	\label{eq:rmseref}
\end{align}
\else
\begin{align}
	\RMSEref^2 \triangleq  \expectedValue[{\sum_{\targetindex=1}^{\numberoftargets}|\reflectioncoefficient[\targetindex]-\reflectioncoefficientEst[\targetindex]|^2}]\ge
	\sum_{i=\numberoftargets+1}^{2\numberoftargets}\frac{1}{\fisherInformationEle[i][i]}= \frac{\noiseVariance}{2\sum_{\indexSubcarrier=\lowerFrequency}^{\upperFrequency}|\symbolsInFrequency[\indexSubcarrier]|^2}\sum_{\targetindex=1}^{\numberoftargets}\frac{1}{\reflectioncoefficient[\targetindex]^2}~,
	\label{eq:rmseref}
\end{align}
\fi
respectively. By using the fact that $\expectedValue[{|\symbolsInFrequency[\indexSubcarrier]|^2}]=|\FDSScoef[\indexSubcarrier]|^2$, $\fisherInformationEle[i][j]$ can be re-expressed by  replacing $|\symbolsInFrequency[\indexSubcarrier]|^2$ with  $|\FDSScoef[\indexSubcarrier]|^2$ in \eqref{eq:fimele}. Therefore, \eqref{eq:rmsedistance} and \eqref{eq:rmseref} can be modified by replacing $|\symbolsInFrequency[\indexSubcarrier]|^2$ with  $|\FDSScoef[\indexSubcarrier]|^2$.

In the literature, various \acp{CRLB} are derived for different scenarios. For instance, by using \ac{OFDM} with unimodular sequences, the \ac{CRLB} of $\RMSE^2$ was calculated in \cite[Section 3.3.3]{braun_dissertation} as
\begin{align}
	\RMSE^2 \ge & \frac{3\noiseVariance\speedoflight^2}{8\pi^2\numberOfShifts(\numberOfShifts^2-1)}\sum_{\targetindex=1}^{\numberoftargets}\frac{1}{\reflectioncoefficient[\targetindex]^2}.
	\label{eq:rmsedistanceWithoutPhase}
\end{align}
In \cite{Turlapaty_2014},  it was derived when the \ac{OFDM} subcarriers are weighted. 
The main difference between \eqref{eq:rmsedistance} and the \acp{CRLB} derived in these studies is the distance-dependent phase in the channel model. While these studies assume that the phase is unknown and independent from the target's location,  we consider the fact that the phase is a function of the target's distance in  \eqref{eq:radarChannel} \cite{tse_viswanath_2005}.

\section{Numerical Results}
\label{sec:numresults}
We consider IEEE 802.11ay \ac{OFDM} with $4$ channels, where $\symbolDuration\approx194$~ns and $\CPDuration\approx48.48$~ns, $\fcarrier=64.8$~GHz, $\fsample=10.56$~Gsps, $\idftSize=2048$, and $\CPSize=512$. We assume that $\upperFrequency=724$, $\lowerFrequency=-723$, and $\numberOfOccupiedSubcarriers=1382$, and $\numberOfShifts=1536$ for $4$ channels\footnote{The reason for $\numberOfShifts=1536$ is that we can compare the proposed scheme with the \ac{CS-RM} under this configuration.}.  Therefore, the bandwidth of the signal is approximately $7.2$~GHz for \acs{CSC-IM}. The maximum range of the radar is $7.27$~m. The modulation symbols are based on \ac{QPSK}, i.e., $\numberOfPointsForPSK=4$. The \ac{FDSS} coefficients are chosen based on \eqref{eq:fresnekfcn} and \eqref{eq:besselfcn} and assumed to be known at the \ac{RXc}.We consider $\numberofIndices\in\{1,2,5\}$ and set $\separationValue$ to $84$ (i.e., no bits loss based on \figurename~\ref{fig:IS}\subref{subfig:smaxvsm})  for $\numberofIndices=2$  and $252$ (i.e., 10~bits are sacrificed based on \figurename~\ref{fig:IS}\subref{subfig:numberOfBits}) for  $\numberofIndices=5$, when the \ac{IS} is considered. Otherwise, $\separationValue$ is set to $0$. We compare the proposed scheme with three different alternatives: \ac{OFDM-IM}, \ac{DFT-s-OFDM-IM} (i.e., no \ac{FDSS} is applied),  and the \ac{CS-RM} \cite{davis_1999,sahin_2020gm}. For \ac{OFDM-IM}, an \ac{ML} detector that incorporates the channel frequency response is utilized \cite{basar_2013}.   For \ac{CS-RM}, we use a seed \ac{GCP} of length $N=3$ and use $\positiveInt=9$. To  facilitate the ML-based decoder proposed in \cite{sahin_2020gm} for \acp{CS}, we consider only $(\positiveInt-1)!$ cosets. Since these schemes do not use \ac{FDSS}, their bandwidths are approximately $7.9$ GHz. For fading channel,  a power delay profile with three paths where the relative powers are 0~dB, -10~dB, -20~dB at 0~ns, 10~ns, and 20~ns with Rician factors of 10, 0, and 0, respectively, is considered. The number of information bits transmitted are $12$, $24$, and $56$ bits for \ac{IM}-based schemes without \ac{IS} for $\numberofIndices=1,2$, and $5$, respectively. When \ac{IS} is considered,  $24$ bits (i.e., no \ac{SE} loss due to the \ac{IS}) and $46$ bits (i.e., \ac{SE} loss due to the \ac{IS}) are transmitted for $\numberofIndices=2$ and $5$, respectively. With \ac{CS-RM}, $35$ bits are transmitted for each \ac{OFDM} symbol.

\begin{figure}
	\centering
	\includegraphics[width =3.3in]{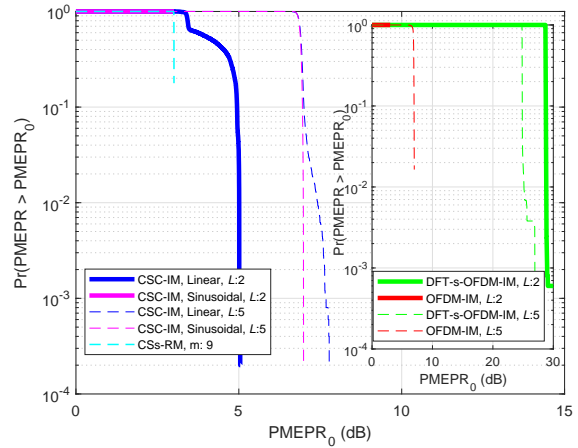}
	\caption{PMEPR distributions for different schemes.}
	\label{fig:pmepr}
\if\IEEEsubmission1
\vspace{-3mm}
\fi
\end{figure}
\subsection{Communications Performance}
In \figurename~\ref{fig:pmepr}, we compare \ac{PMEPR} distributions. The signals are over-sampled to measure \ac{PMEPR} accurately. The \ac{PMEPR} is always less than or equal to $3$ dB for \ac{CS-RM} and sinusoidal chirps for $\numberofIndices=2$.   However, the distortion on linear chirps due to the truncation is higher than the one for sinusoidal chirps. Therefore, the \acp{CS} are not accurately formed with linear chirps under our simulation settings and the maximum \ac{PMEPR} reaches to $5$~dB. 
For $\numberofIndices=5$, the \ac{PMEPR} is still limited for the proposed scheme and  the maximum  \acp{PMEPR}  are $10\log_{10}{5}=6.98$~dB and $7.5$~dB for sinusoidal and linear chirps, respectively. On the other hand, they result in a higher \ac{SE} as compared to \ac{CS-RM}. \ac{OFDM-IM} results in \ac{PMEPR} distributions similar to the ones for the proposed scheme for $\numberofIndices=2$ and $\numberofIndices=5$. However, it does not spread the energy in the frequency domain, which is needed for radar functionality. While \ac{DFT-s-OFDM-IM} spreads the energy in time, it causes signals with very high \acp{PMEPR}. The main reason for this behavior is that \ac{DFT-s-OFDM-IM} actives only $\numberofIndices$ indices that are represented as Dirichlet-sinc pulses in time (see Fig. 5 in \cite{Safi_2020_CCNC}). Therefore, \ac{CSC-IM} is superior to \ac{DFT-s-OFDM-IM}  and \ac{OFDM-IM} for radar applications by reducing \ac{PMEPR} and spreading the energy in both time and frequency.

\begin{figure*}
	\centering
	\subfloat[BLER versus $\EbNO$  in \ac{AWGN}.]{\includegraphics[width =3.3in]{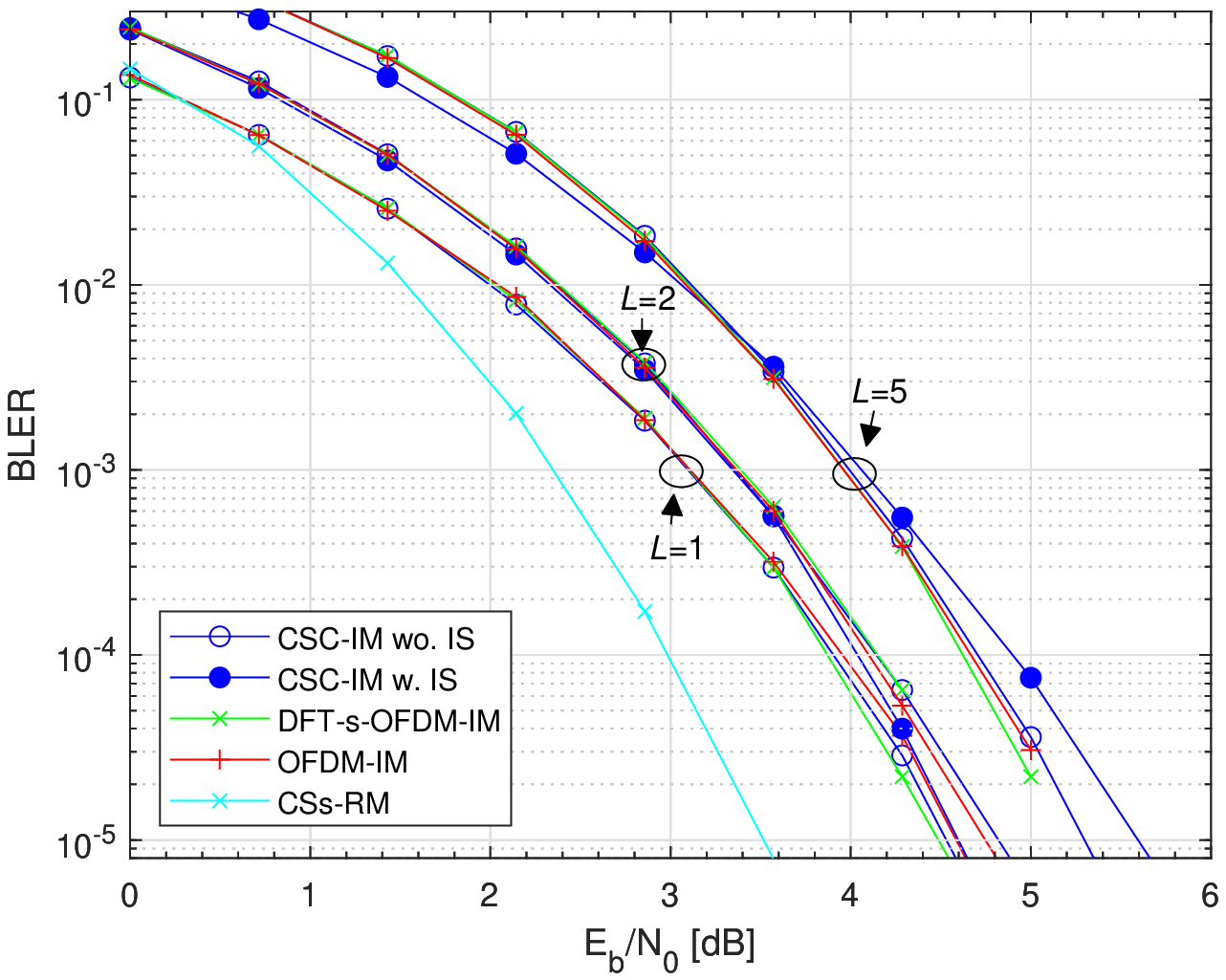}
		\label{subfig:blerEbN0_awgn}}
	\subfloat[BLER versus \ac{SNR} in \ac{AWGN}.]{\includegraphics[width =3.33in]{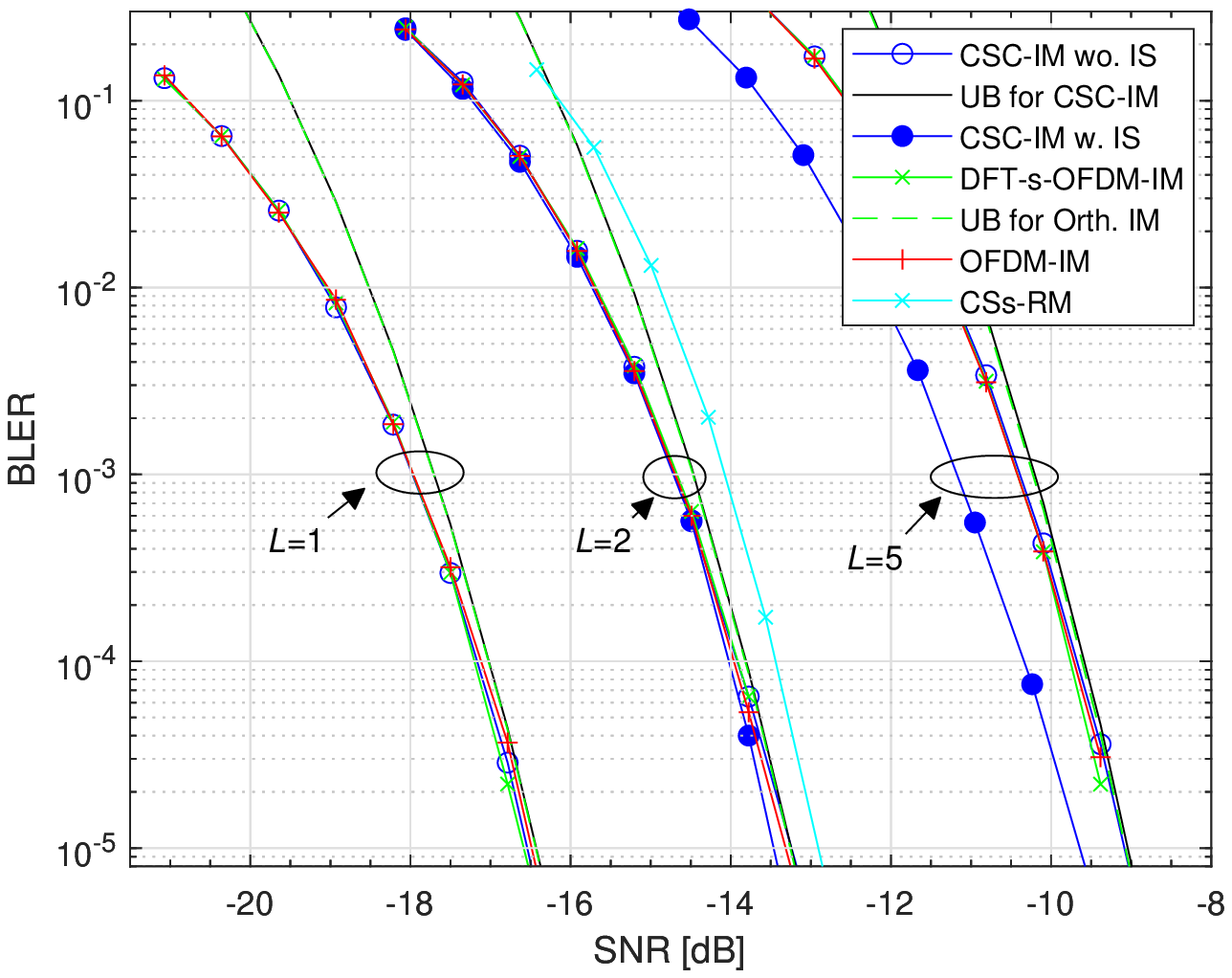}
		\label{subfig:blerEsN0_awgn}}\if\IEEEsubmission1
	\vspace{-3mm}
\fi\\			
	\subfloat[BLER versus $\EbNO$  in fading channel.]{\includegraphics[width =3.3in]{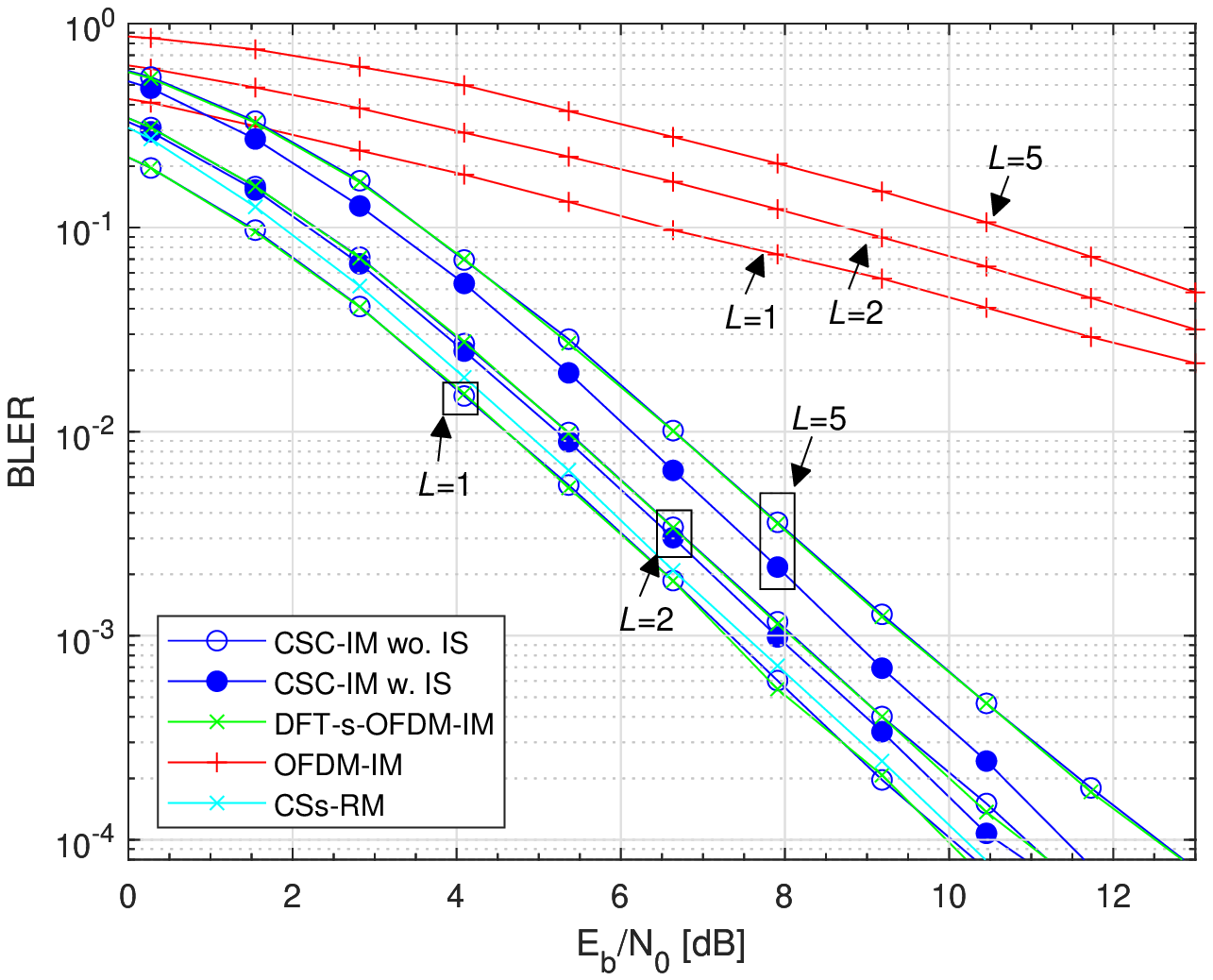}
		\label{subfig:blerEbN0_fading}}
	\subfloat[BLER versus \ac{SNR} in fading channel.]{\includegraphics[width =3.3in]{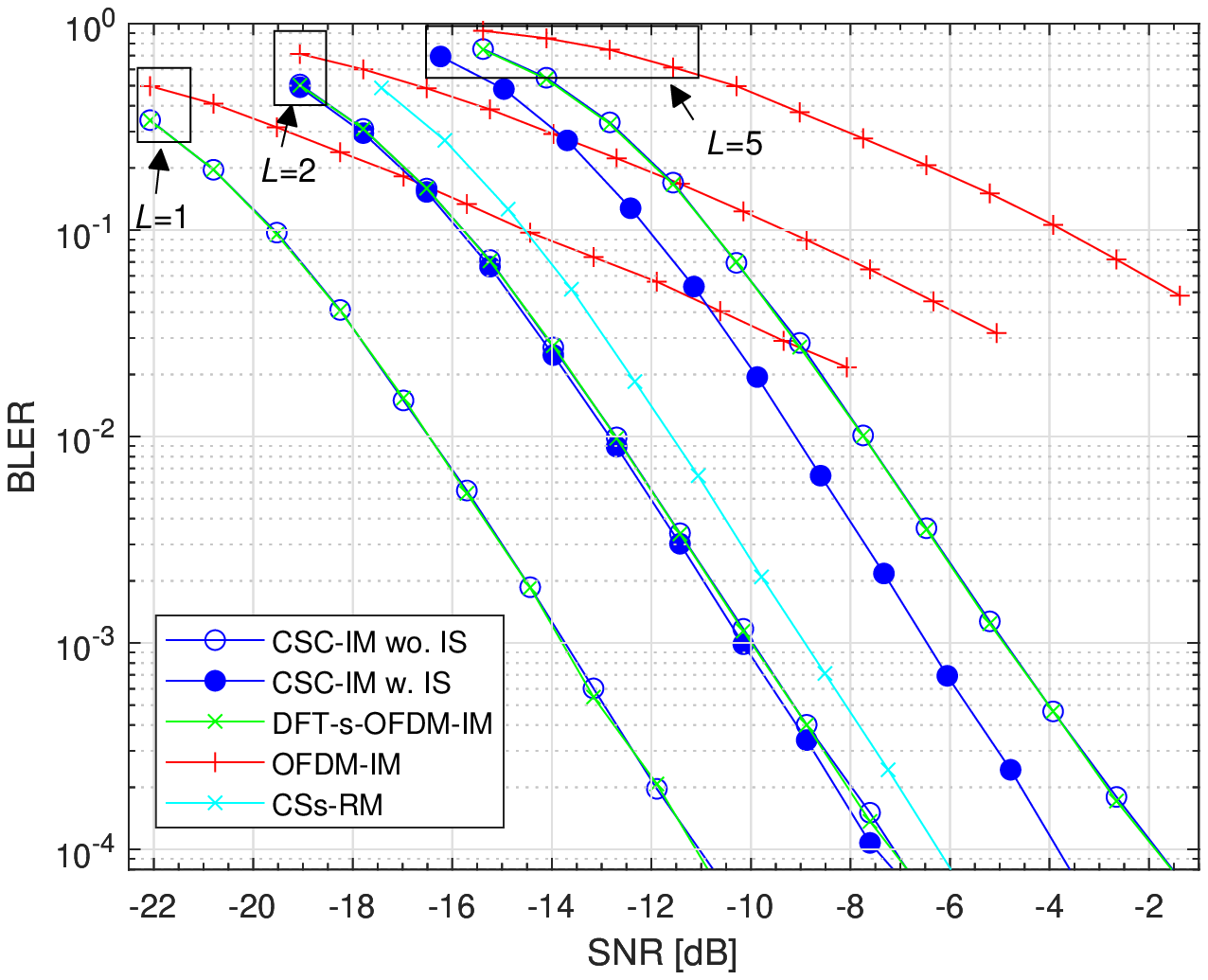}
		\label{subfig:blerEsN0_fading}}~	
	\caption{BLER performance at \ac{RXr} for different schemes.}
	\label{fig:bler}
\if\IEEEsubmission1
\vspace{-3mm}
\fi
\end{figure*}
In \figurename~\ref{fig:bler}, we compare the \ac{BLER} performance in \ac{AWGN} and fading channel. For the proposed scheme, the receiver equalizes the signal even in the \ac{AWGN} channel because of \ac{FDSS}. Since a flatter response improves the \ac{BLER} result in both \ac{AWGN} and fading channels \cite{sahin_2020}, we consider only linear \acp{CSC} for the rest of the analysis. In \figurename~\ref{fig:bler}\subref{subfig:blerEbN0_awgn}, we  provide \ac{BLER} comparisons for a given $\EbNO$. The \ac{CS-RM} is superior to all schemes and provides $1$-$2$~dB gain  at 1e-3 \ac{BLER}. The \ac{CSC-IM} operates in the range of $3$-$4$~dB $\EbNO$  at 1e-3 \ac{BLER} and the performance degrades slightly for a larger $\numberofIndices$. Their error rates are similar to those of \ac{OFDM-IM} and \ac{DFT-s-OFDM-IM}, which shows  the equalization due to the \ac{FDSS} do not degrade the error rate under our simulation settings. In \figurename~\ref{fig:bler}\subref{subfig:blerEsN0_awgn} and \figurename~\ref{fig:bler}\subref{subfig:blerEsN0_fading}, we analyze \ac{BLER} for a given \ac{SNR}. The \acp{BLER} for \ac{CSC-IM} and orthogonal \ac{IM} schemes (i.e., \ac{OFDM-IM} and \ac{DFT-s-OFDM-IM}) approach to the corresponding \acp{UB} given in \eqref{eq:unionbound}, rapidly. As opposed to \ac{CS-RM}, the proposed scheme provides a range of solutions with the various data rates, maximum \acp{PMEPR}, and operating \ac{SNR} points. For example, for $\numberofIndices=1$, it results in $0$~dB \ac{PMEPR} with a very low data rate transmission while it increases the data rate by using $\numberofIndices=5$ chirps at the expense of a higher \ac{PMEPR}. In \figurename~\ref{fig:bler}\subref{subfig:blerEbN0_fading} and \figurename~\ref{fig:bler}\subref{subfig:blerEsN0_fading}, we analyze the same schemes in fading channel.  The performance of \ac{OFDM-IM} is worse than all other schemes since it does not exploit the frequency selectivity. The slopes of the \ac{BLER} curves for \ac{CSC-IM} under  the fading channel are also noticeably higher than the ones for \ac{OFDM-IM}. Although the \ac{DFT-s-OFDM-IM} is similar to \ac{CSC-IM}, a larger power back-off is required for \ac{DFT-s-OFDM-IM} (see \ac{PMEPR} distributions in \figurename~\ref{fig:pmepr}). We also observe that the difference between  \ac{CS-RM} and  \ac{CSC-IM} diminishes in the fading channel and it is less than $1$~dB for $\numberofIndices=2$. It is worth noting that the \ac{ML} detector for \ac{CS-RM} is based on an ML-based algorithm \cite{sahin_2020gm}, which causes a high-complexity detector due to the second-order coset term. On the other hand, the proposed scheme relies on a single $\numberOfShifts$-\ac{IDFT}, per-bin \ac{ML} detection, and recursive mapping rules discussed in Section~\ref{subsubsec:bijection}. We also observe that the \ac{IS} slightly reduces the error rate as in \figurename~\ref{fig:bler}\subref{subfig:blerEsN0_fading} as \ac{IS} limits the search space for indices as discussed in Section~\ref{subsubsec:ISerrorrate}.

\def\distanceBetweenTargets{\Delta r}

\subsection{Radar Performance}
We consider two scenarios for evaluating \ac{RXr} performance. In the first scenario, a single target is assumed. Its location is drawn from a uniform distribution between $2$~m and $3$~m and the true value of the reflection coefficient is set to $-1$, which considers the phase change of a reflected signal \cite{tse_viswanath_2005}. For the second scenario, we consider two targets located nearby. The location of the first target is random between $2$~m and $3$~m and its reflection coefficient, unknown to the \ac{RXr}, is set to $-\sqrt{2}/2$. The second target  with the true value of the reflection coefficient of $-\sqrt{2}/2$ is away from the first target by $\distanceBetweenTargets$, where $\distanceBetweenTargets$ is a randomly chosen between $1.5\minimumResolution\approxeq3.16$~cm and $2\minimumResolution\approxeq4.21$~cm for $\minimumResolution\approxeq2.1$~cm. We consider the proposed scheme with linear chirps and compare it with \ac{CS-RM}. We exclude \ac{OFDM-IM} (as it does not distribute the signal energy in the frequency domain) and \ac{DFT-s-OFDM-IM} (as it causes high \ac{PMEPR}) for radar functionality.

\begin{figure*}
	\centering
	\subfloat[MF-based estimation and a single target.]{\includegraphics[width =3.3in]{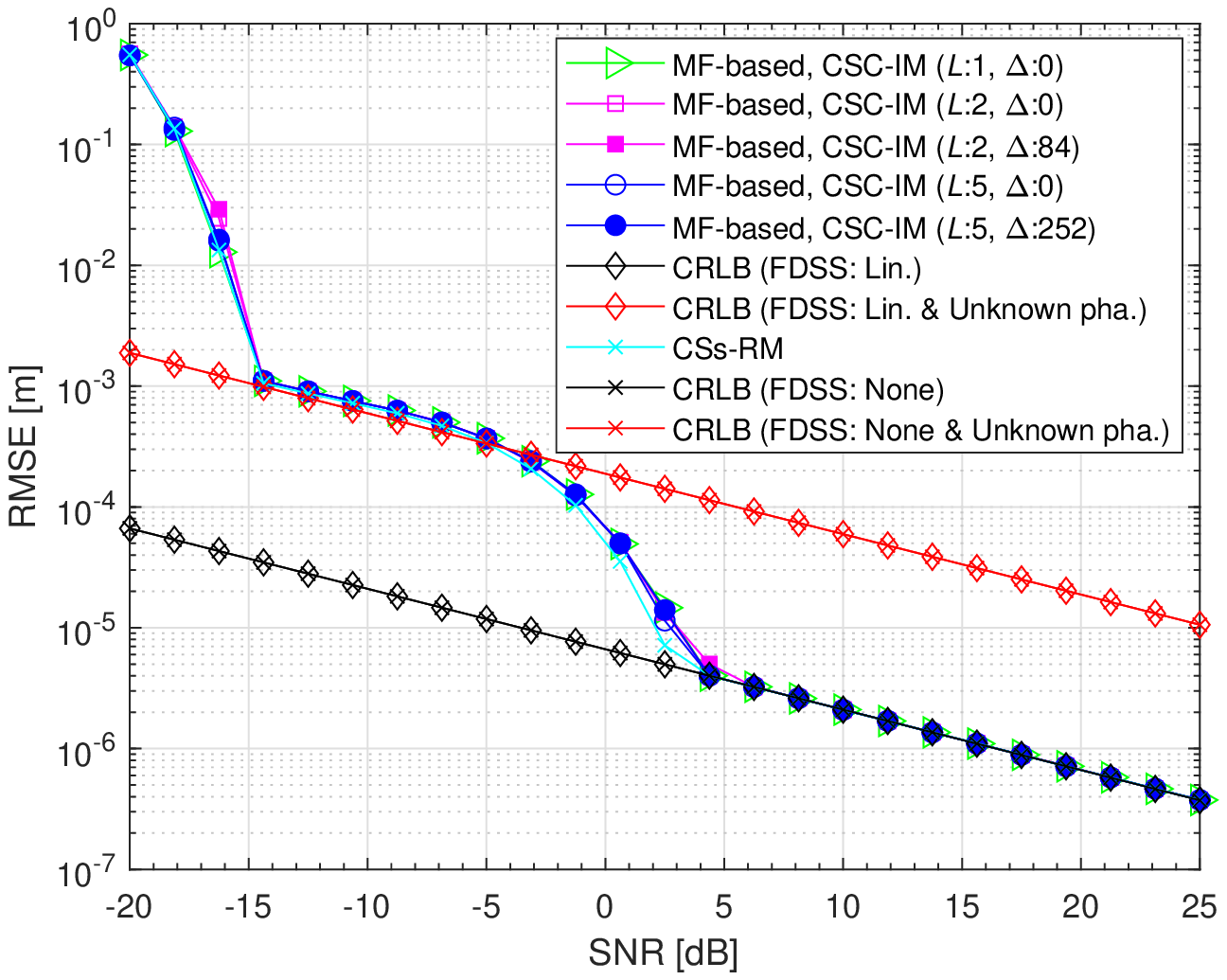}
		\label{subfig:ml_single}}~
	\subfloat[LMMSE-based estimation and a single target.]{\includegraphics[width =3.3in]{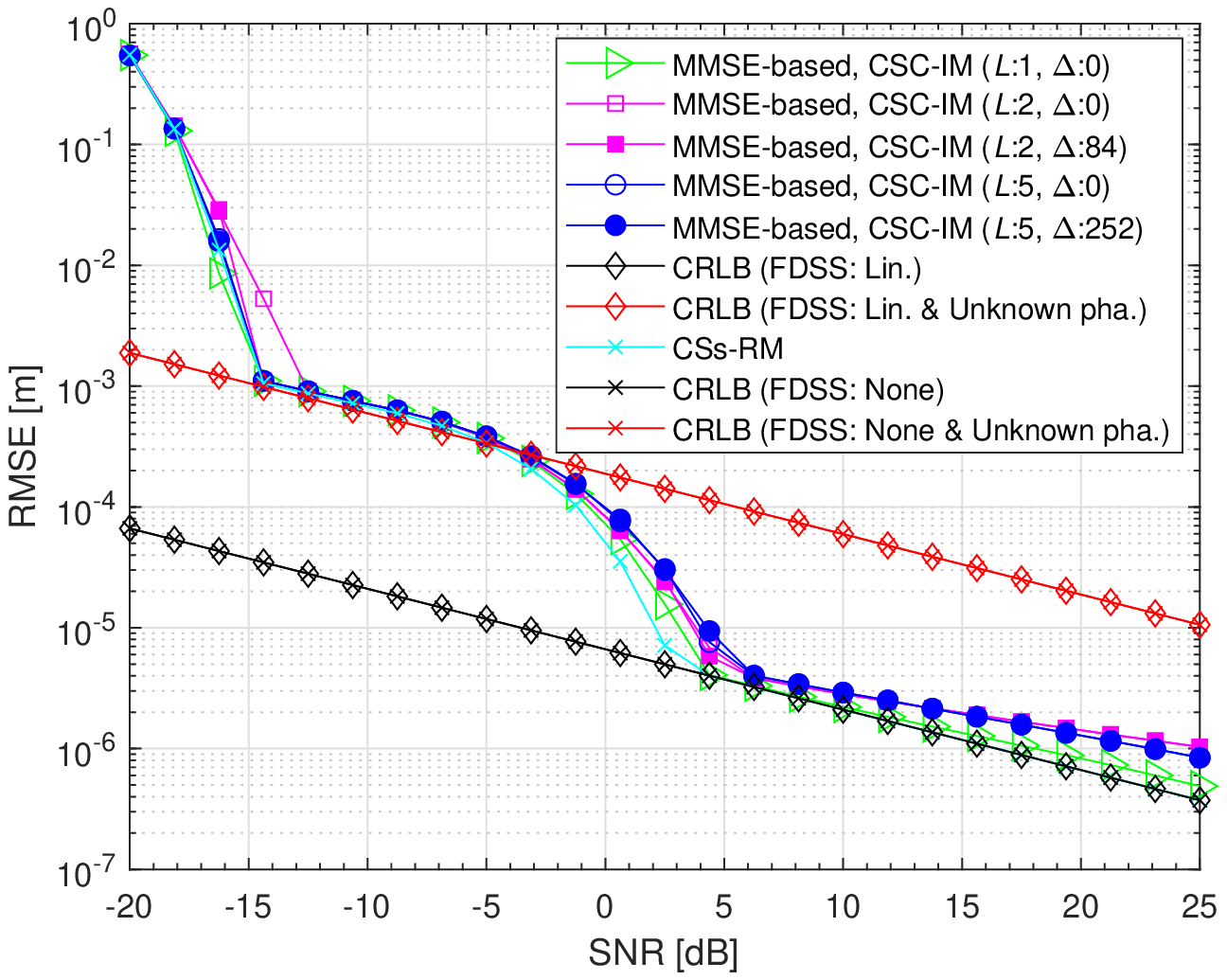}
		\label{subfig:mmse_single}}\if\IEEEsubmission1
	\vspace{-3mm}
\fi
	\\	
	\subfloat[MF-based estimation and two nearby targets.]{\includegraphics[width =3.3in]{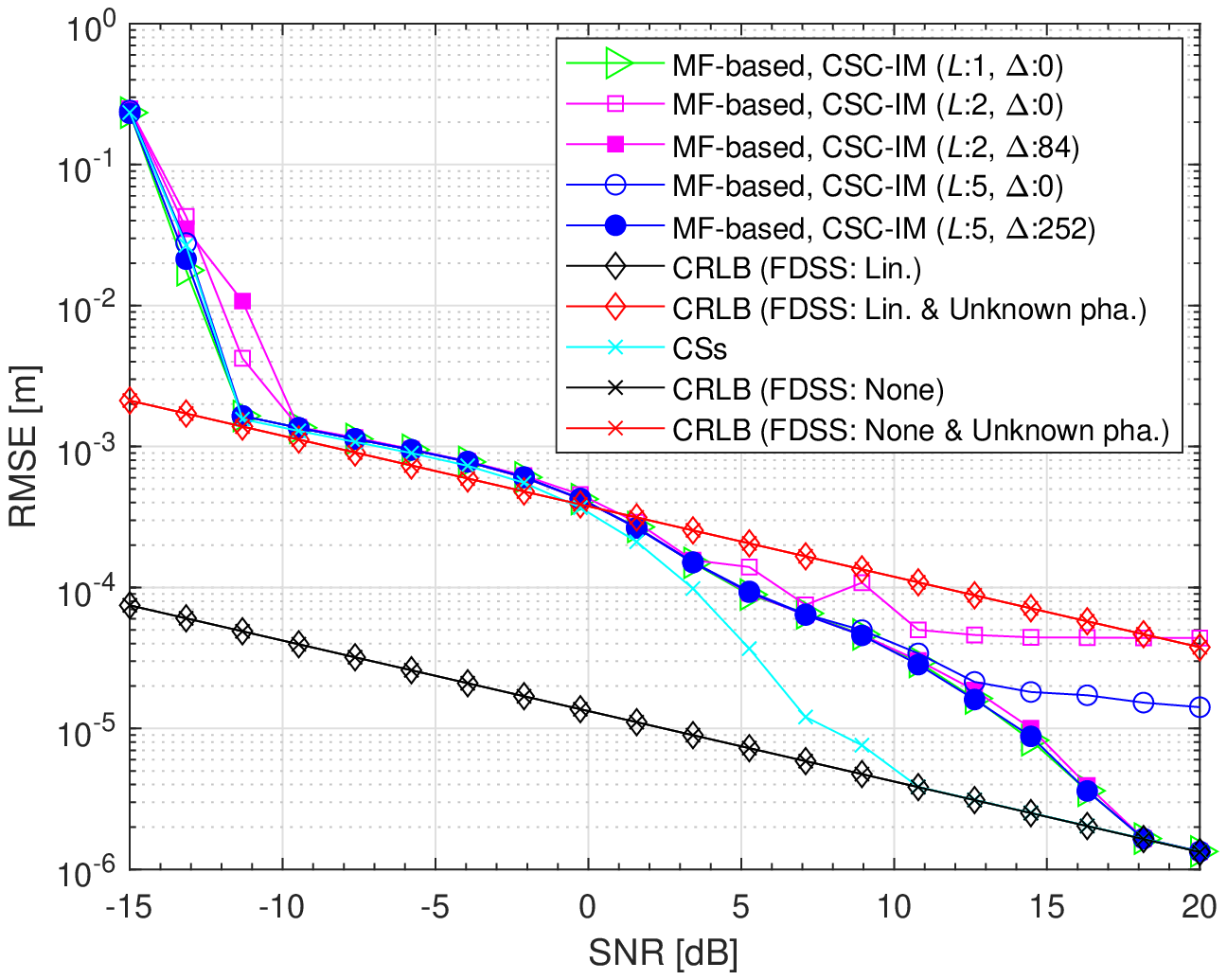}
		\label{subfig:ml_two}}~
	\subfloat[LMMSE-based estimation and two nearby targets.]{\includegraphics[width =3.3in]{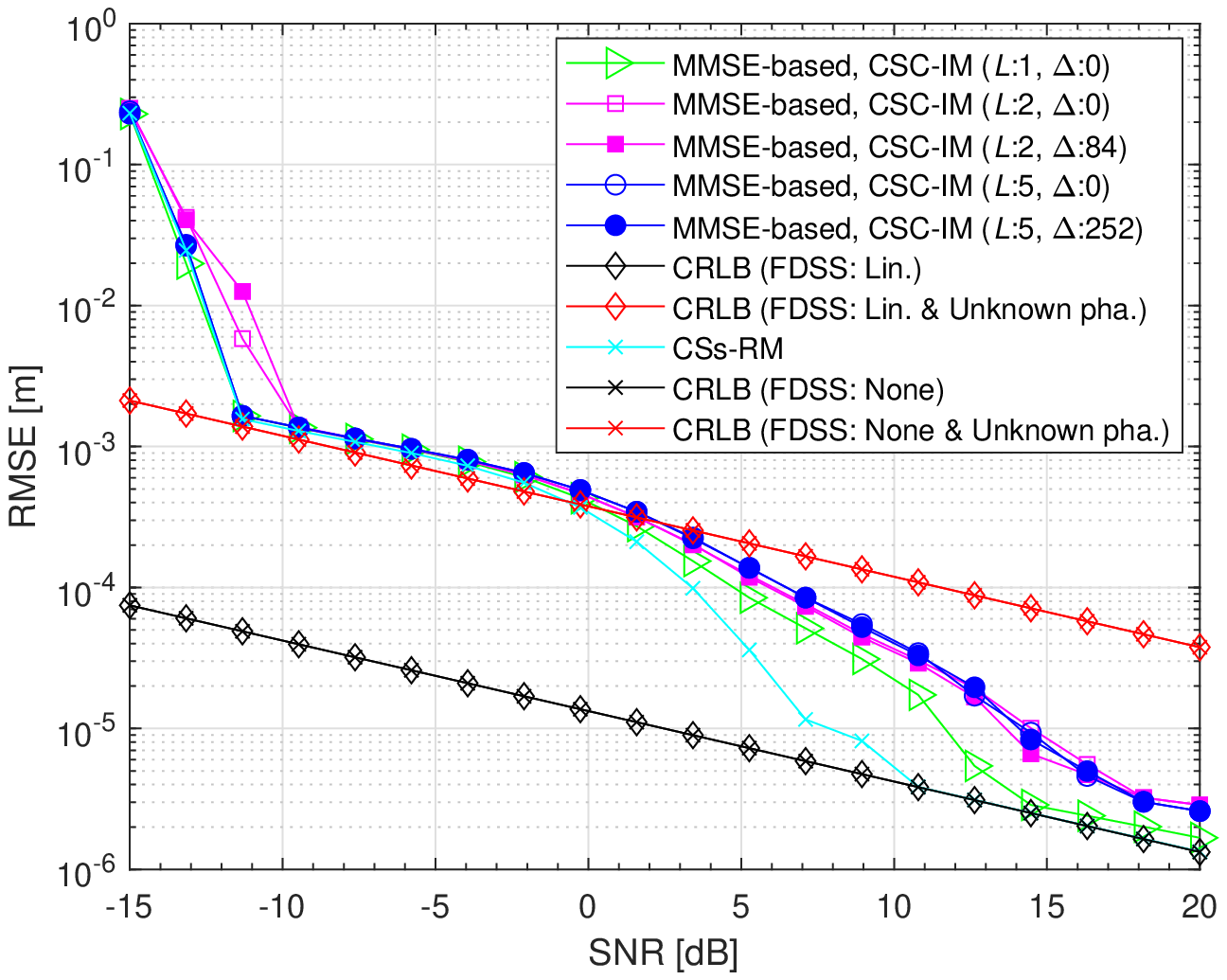}
		\label{subfig:mmse_two}}
	\caption{Accuracy analysis for different estimators and scenarios.}
	\label{fig:rmse}
\if\IEEEsubmission1
\vspace{-3mm}
\fi
\end{figure*}
In \figurename~\ref{fig:rmse}, we evaluate the accuracy of range estimators by providing \ac{RMSE} versus \ac{SNR} curves. In \figurename~\ref{fig:rmse}\subref{subfig:ml_single} and \ref{fig:rmse}\subref{subfig:mmse_single}, we consider the first scenario. The performance of schemes are very similar to each other. The \ac{MF}-based estimator attains the \ac{CRLB} derived in \eqref{eq:rmsedistance} as shown in \figurename~\ref{fig:rmse}\subref{subfig:ml_single}. For the sake of comparison, we also plot the \ac{CRLB} in \eqref{eq:rmsedistanceWithoutPhase} for the case when the phase information is unknown and not a function of the target location. The difference between these two bounds indicates the phase information has a notable impact on the  accuracy, which can be exploited at high \ac{SNR}, i.e., for strong reflections. For the \ac{LMMSE}-based estimator, the bound is only attained for \ac{CS-RM}. This is because \acp{CS} are unimodular in the frequency domain, while the symbol energy is not distributed identically to the frequency bins for linear chirps due to the multiple \acp{CSC} and \ac{FDSS}. In \figurename~\ref{fig:rmse}\subref{subfig:ml_two} and \figurename~\ref{fig:rmse}\subref{subfig:mmse_two}, we consider the second scenario. Since \ac{RXr} estimate targets' locations by using the sequence in the frequency domain, the waveform characteristics in the frequency domain plays a role in the accuracy.  For example, the \ac{CS-RM} are the most prominent ones as it leads to   sequences based on \ac{QPSK}. The subcarriers are populated with arbitrary complex numbers for \ac{CSC-IM}, which degrades the accuracy slightly. When \ac{IS} is not adopted for \ac{ML}-based estimation, \figurename~\ref{fig:rmse}\subref{subfig:ml_two} shows \ac{CSC-IM} saturates and never attains the corresponding \ac{CRLB}. However, when \ac{IS} is utilized, the accuracy of \ac{CSC-IM} with $\numberofIndices={2,5}$ is similar to \ac{CSC-IM} with $\numberofIndices=1$ and attains the \ac{CRLB}. For the \ac{LMMSE}-based estimator, \ac{CRLB} is not attained for \ac{CSC-IM} (due to the \ac{FDSS} with non-unimodular coefficients for \acp{CSC}) while it is achieved with \ac{CS-RM} as in \figurename~\ref{fig:rmse}\subref{subfig:mmse_two}. This result implies that using unimodular sequences is beneficial for radar as it yields superior results with both \ac{LMMSE} and \ac{MF}-based estimators. However, the price paid is a higher-complexity \ac{RXr} for decoding \ac{CS-RM}. Although it is possible to use a lower-complexity code with \ac{QPSK}, it is challenging to address the high \ac{PMEPR} for \ac{OFDM}. From this aspect, \ac{CS-RM} is promising for \ac{OFDM}-based \ac{DFRC} applications. On the other hand, \ac{CSC-IM} is more flexible in terms of size and data rate as compared to \ac{CS-RM}.

In \figurename~\ref{fig:res}, we analyze the resolution for the aforementioned schemes by sweeping the distance between two targets. We fix the \ac{SNR} at $20$ dB. All schemes resolve the targets after the minimum resolution $\minimumResolution=2.1$~cm. On the other hand, \ac{CSC-IM} without \ac{IS} cannot resolve the targets as accurate as \ac{CSC-IM} with \ac{IS} even the distance between targets  is larger than $\minimumResolution$.  The results are in line with the ones as observed in \figurename~\ref{fig:rmse}. For the \ac{LMMSE}-based estimation, they do not attain the \ac{CRLB} except for the \ac{CS-RM} although the accuracy improves after $\minimumResolution$. 
\begin{figure*}
	\centering
	\subfloat[ML-based estimation.]{\includegraphics[width =3.3in]{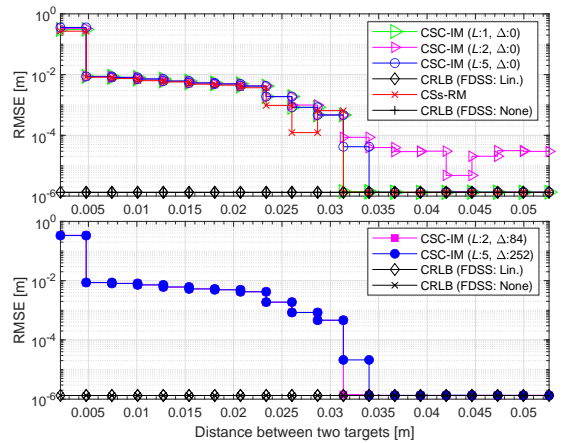}
		\label{subfig:resML}}~		
	\subfloat[LMMSE-based estimation.]{\includegraphics[width =3.3in]{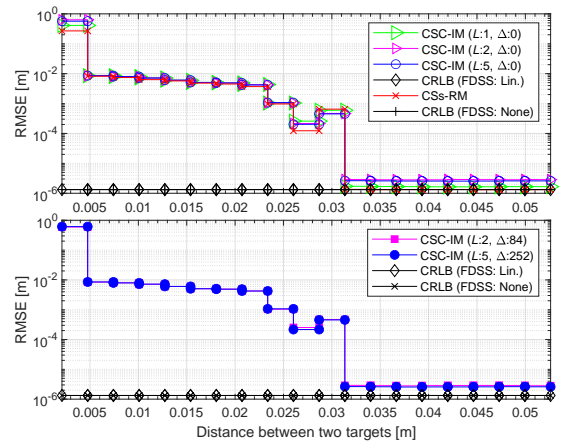}
		\label{subfig:resMMSE}}~
	\caption{Resolution analysis for different estimators.}
	\label{fig:res}
\if\IEEEsubmission1
\vspace{-3mm}
\fi
\end{figure*}

\section{Concluding Remarks}
In this study, we propose \ac{CSC-IM} for \ac{DFRC} scenarios. 
We show that this scheme can generate wideband signals while bringing a trade-off between \ac{SE} and maximum \ac{PMEPR}, i.e., the \ac{SE} increases with the number of \acp{CSC} at the expense of  a higher maximum \ac{PMEPR}. As a special case, we prove that the transmitted signals in the frequency domain lead to new \acp{CS} based on chirps. We also exemplify that Bessel functions and Fresnel integrals can be useful for generating \acp{GCP}. \ac{CSC-IM} is more flexible than the standard \acp{CS} based on \ac{RM} codes in the sense that it allows one to control the maximum \ac{PMEPR} theoretically for the sake of increasing the \ac{SE} while being more flexible in terms of the number of utilized subcarriers. Besides, since \ac{CSC-IM} does not utilize a coset term needed for the \ac{CS-RM}, it enjoys a low-complexity decoder. In this study, we derive the \ac{UB} of the \ac{BLER} for \ac{CSC-IM}, which also captures the analysis for \ac{OFDM-IM} and \ac{DFT-s-OFDM-IM}.
With comprehensive simulations, we show that the \ac{CSC-IM} offers a lower \ac{PMEPR} than \ac{DFT-s-OFDM-IM} while exploiting frequency selectivity as compared to \ac{OFDM-IM}. \ac{CSC-IM} is more suitable for radar functionality as compared to \ac{DFT-s-OFDM-IM} and \ac{OFDM-IM} as it reduces the \ac{PMEPR} while allowing controllable \ac{AC} properties. We consider two range estimation methods: MF-based and LMMSE-based estimations. For the MF-based estimation, we introduce \ac{IS} to generate a low \ac{AC} zone. We investigate the impact of \ac{IS} on \ac{SE} and provide algorithms that map indices to information bits or vice versa. 
 We show that \ac{IS} helps the estimation accuracy to attain the corresponding \ac{CRLB}. 

\bibliographystyle{IEEEtran}
\label{sec:conclusion}

\bibliography{references}

\end{document}